\newcommand{\E}{\mathbb{E}}
\newcommand{\mathsc}[1]{{\normalfont\textsc{#1}}}
\newcommand{\nd}{\mathcal{D}}
\newcommand{\ns}{\mathcal{S}} 
\newcommand{\nss}[1]{\mathcal{S}^{\mathbbm{1}}_{#1}} 
\newcommand{\nm}{m}
\newcommand{\nz}{\mathcal{Z}}
\newcommand{\nzz}{\mathcal{Z}_0} 
\newcommand{\nzm}{\mathcal{Z}_0'} 
\newcommand{\nzf}{\mathcal{Z}_1} 
\newcommand{\nzeo}{
\mathcal{Z}_1^{\mathsf{Ex}}
} 
\newcommand{\nzet}{
\mathcal{Z}_2^{\mathsf{Ex}}
} 
\newcommand{\mg}{m^{\mathrm{G}}} 
\newcommand{\mt}{m^{\mathrm{T}}}  
\newcommand{\vg}{v^{\mathrm{G}}} 
\newcommand{\vt}{v^{\mathrm{T}}}  
\newcommand{\tg}{t^{\mathrm{G}}} 
\renewcommand{\tt}{t^{\mathrm{T}}}  
\newcommand{\cs}{\mathcal{CS}} 
\newcommand{\nr}{\mathcal{R}} 
\newcommand{\W}{\mathcal{W}}
\newcommand{\pf}{\mathsc{SPrefix}} 
\newcommand{\pfv}{\mathsc{IPrefix}} 
\newcommand{\mks}{F_{\nd}(v_k)} 
\newcommand{\ts}{\tilde{s}}
\renewcommand{\th}{\text{th}}
\renewcommand{\d}{\,\mathrm{d}}
\Crefname{algocf}{Algorithm}{Algorithms}
\newtheorem{theorem}{Theorem}[section]
\newtheorem{lemma}[theorem]{Lemma}
\newtheorem{proposition}[theorem]{Proposition}
\theoremstyle{definition}
\newtheorem{definition}[theorem]{Definition}
\newtheorem{example}[theorem]{Example}
\newcommand{\eps}{\varepsilon}
\DeclareMathOperator*{\argmax}{arg\,max}
\newcommand{\algname}{Split-and-Match\,}
\definecolor{myblue}{rgb}{0.15, 0.1, 0.95}
\definecolor{mygreen}{rgb}{0.15, 0.65, 0.25}
\definecolor{myred}{rgb}{0.75, 0.25, 0.15}
\title{Fair Price Discrimination}
\def\@fnsymbol#1{\ensuremath{\ifcase#1\or \dagger\or \ddagger\or
   \mathsection\or \mathparagraph\or \|\or **\or \dagger\dagger
   \or \ddagger\ddagger \else\@ctrerr\fi}}
\newcommand*\samethanks[1][\value{footnote}]{\footnotemark[#1]}
\author{Siddhartha Banerjee\thanks{School of Operations Research and Information Engineering, Cornell University, Ithaca, NY 14850. Email: \texttt{sbanerjee@cornell.edu}.} \and Kamesh Munagala\thanks{Department of Computer Science, Duke University, Durham, NC 27708-0129. Emails: \texttt{kamesh@cs.duke.edu}, \texttt{yiheng.shen@duke.edu}. Supported by NSF grant CCF-2113798.} \and Yiheng Shen\samethanks[2] \and Kangning Wang\thanks{Department of Computer Science, Stanford University, Stanford, CA 94305. Email: \texttt{knwang@stanford.edu}.}}
\date{}
\begin{document}

\maketitle

\begin{abstract}
A seller is pricing identical copies of a good to a stream of unit-demand buyers. Each buyer has a value on the good as his private information. The seller only knows the empirical value distribution of the buyer population and chooses the revenue-optimal price. We consider a widely studied third-degree price discrimination model where an information intermediary with perfect knowledge of the arriving buyer's value sends a signal to the seller, hence changing the seller's posterior and inducing the seller to set a personalized posted price. Prior work of Bergemann, Brooks, and Morris (American Economic Review, 2015) has shown the existence of a signaling scheme that preserves seller revenue, while always selling the item, hence maximizing consumer surplus. In a departure from prior work, we ask whether the consumer surplus generated is fairly distributed among buyers with different values. To this end, we aim to maximize welfare functions that reward more balanced surplus allocations.

Our main result is the surprising existence of a novel signaling scheme that \emph{simultaneously} $8$-approximates \emph{all} welfare functions that are non-negative, monotonically increasing, symmetric, and concave, compared with any other signaling scheme. Classical examples of such welfare functions include the utilitarian social welfare, the Nash welfare, and the max-min welfare. Such a guarantee cannot be given by any consumer-surplus-maximizing scheme -- which are the ones typically studied in the literature. In addition, our scheme is socially efficient, and has the fairness property that buyers with higher values enjoy higher expected surplus, which is not always the case for existing schemes.
\end{abstract}


    

\section{Introduction}
Imagine a seller with infinite supply of a good. They wish to sell to a population of unit-demand buyers with standard quasi-linear utilities. The seller knows the empirical distribution $\nd$ of the buyer valuations and chooses a revenue-maximizing price to sell the good. In this paper, we consider this simple setting, but with a twist: there is an additional information intermediary who can segment the market and help the seller price-discriminate. Information intermediaries for price discrimination were first considered by \citet*{Bergemann15} and our work proposes and studies new desiderata for them. 

Such intermediaries are becoming ubiquitous in modern two-sided e-commerce platforms. Consider for example ad exchanges~\citep{doubleclick,verizon,msads,pubmatic}, where the platform acts as an intermediary between buyers (in this case advertisers) and sellers (in this case, publishers controlling the ad slot). The intermediary wants the best for both sides; however, as in classical auctions, the seller -- not the intermediary -- controls the price at which trade happens. Other examples include retail platforms such as Amazon marketplace, who also effectively serve as intermediaries -- they merely facilitate the trade, but do not control the prices.

In such settings, the platform can use machine learning and its vast trove of data on buyer behavior to accurately predict the value of buyers. It can then choose to reveal information about the current buyer to the seller in order to influence the trade. This information (or \emph{signal}) leads to the seller updating its prior $\nd$ over buyer values to a posterior distribution over values given the signal. The seller now posts the optimal (revenue-maximizing) price for this posterior. Such information revelation is termed signaling or {\em third-degree price discrimination}\footnote{It is termed ``third-degree'' price discrimination because the seller or intermediary divides the market into segments, each with its own price. In contrast, in first-degree price discrimination, the seller has perfect information and charges buyers exactly their value, while in second-degree price discrimination, the seller sells similar yet `different' goods (differing in quality/quantity) at different prices.} and is a special case of {\em Bayesian persuasion}~\citep{Kamenica11}. Note that in practice, the seller and the intermediary could be the same entity, such as a retail or ride-share platform that wants to use buyer information to segment the market and perform price discrimination. 

To understand this setting better, consider two extremes: At one extreme, the intermediary can choose to reveal no information to the seller, in which case the seller's posterior remains $\nd$. Therefore, the seller posts the Myerson price~\citep{Myerson81} $p^{\mathsf{My}} = \argmax_p p \cdot \Pr_{v \sim \nd}[v \ge p]$ on $\nd$ and raises revenue $R^{\mathsf{My}} = p^{\mathsf{My}} \cdot \Pr_{v \sim \nd}[v \ge p^{\mathsf{My}}]$. Since trade does not happen if the buyer's private value $v$ is below $p^{\mathsf{My}}$, this scheme is generally {\em inefficient} -- the consumer (buyer) surplus $\E_{v \sim \nd}[(v-p^{\mathsf{My}})^+]$ plus seller revenue $R^{\mathsf{My}}$ is less than the maximum possible total surplus, $\E_{v \sim \nd}[v]$. 

At another extreme is full information revelation or first-degree price discrimination, where the intermediary reveals the actual buyer value $v$ to the seller. In this case, the seller's posterior collapses to the deterministic value $v$. The seller can now post price slightly below $v$, so that trade always happens. However, this efficiency comes at a cost -- the buyer now obtains zero surplus (their value minus price paid), while the seller's revenue becomes equal to the total surplus, $\E_{v \sim \nd} [v]$. Note that in the no-signaling case discussed above, the consumer surplus $\E_{v \sim \nd}[(v-p^{\mathsf{My}})^+]$ could be positive -- thus between these two schemes, from a utilitarian point of view, no-signaling is better for the buyers, while full-revelation is better overall.

\subsection{Optimal Signaling and Fairness}
Signaling clearly helps the seller since they can always obtain at least as much revenue $R^{\mathsf{My}}$ as in no-signaling (e.g. by ignoring the signal). What is less clear is whether signaling can improve consumer surplus at all. In a remarkable result, \citet*{Bergemann15} showed the existence of ``buyer-optimal'' signaling schemes in the following sense: The seller's expected revenue remains the same as in no-signaling (i.e., $R^{\mathsf{My}}$, which is the \emph{minimum} possible under any signaling scheme), while trade is always efficient (i.e., the item always sells), which means that the sum of the consumer surplus and the seller revenue is the maximum possible total surplus, $\E_{v \sim \nd} [v]$. Hence, the consumer surplus must be as large as it could possibly be.

This is a beautiful result, but is unsatisfying upon closer inspection. Note that while the proposed scheme maximizes consumer surplus, it is not the unique such scheme~\citep{Bergemann15,Cummings20,KOM22}. Are all ways of splitting this aggregate surplus among buyers equal, even if this gain in surplus is ``subsidized'' more by a particular group of buyers? We think not -- maximizing the utilitarian total consumer surplus should not be the sole consideration; it is natural to also desire that \emph{price discrimination is fair at the level of individual buyers} -- 
but how should we formalize this?

A first idea is to require some form of \textbf{monotonicity} in the surplus split. Let $\cs_v$ be the expected consumer surplus that a signaling scheme provides to a buyer of value $v$; we could now require that $\cs_v \le \cs_{v'}$ whenever $v < v'$. This is true in the absence of signaling (as fixed pricing is monotone), and so should perhaps be expected to hold in the presence of signaling. It also captures some sense of envy in price discrimination -- a buyer with larger value should not envy the surplus seen by a buyer with smaller value. We show via examples in~\cref{ssec:metrics} that even this very natural constraint rules out some existing buyer-optimal schemes.\footnote{We note that the scheme for continuous priors in~\citep{Bergemann15} is both monotone and buyer-optimal.} 

An alternative and more wide-reaching fairness requirement is given by the following paradigm:

\paragraph{Universal Welfare Maximization (and Majorization). } Consider the surplus vector where its $j^\text{th}$ dimension is the expected surplus of the $j^\text{th}$ buyer. 
A general welfare function takes the surplus vector as input, and outputs a non-negative real number (higher is better). We restrict to welfare functions that are symmetric, non-decreasing, and concave: Symmetry ensures equal treatment to all buyers; non-decreasing ensures that Pareto improvements are desirable; and concavity is a common fairness consideration favoring balanced allocations. Common examples of such welfare functions include the utilitarian social welfare function, the Nash welfare function, and the max-min welfare function. 
A fair signaling scheme could be defined as one which maximizes such a welfare function; however, it is unclear how to unambiguously pick one welfare function among the numerous possibilities. 

What would be ideal is if there is a \emph{universal} scheme that is optimal (or at least, approximately optimal) for \emph{all} such welfare functions. This universal maximization of concave functions is closely related to \emph{majorization}~\citep*{karamata1932inegalite,hardy1952inequalities} and its approximate form: $\alpha$-majorization (see e.g.~\citep*{GoelMP05,GoelM06}).

\subsection{Our Results}
The main question we ask is: 
\begin{quote}
    In third-degree price discrimination, how close can we get to a \emph{universally-fair} signaling scheme, i.e., one which is (monotone and) near-optimal for any welfare function?
\end{quote}
At the outset, one might be pessimistic: For resource allocation and stochastic optimization problems~\citep*{GoelMP05,KumarK06,ChakrabartyS19}, typically $\alpha = \omega(1)$, where $\alpha$ is the approximation factor for majorization (and hence for universal welfare maximization). Indeed, as we show in \cref{sec:lb}, any buyer-optimal signaling scheme in the sense of~\citet*{Bergemann15} cannot be $\alpha$-majorized for {\em any} given constant $\alpha$. Given this, one may wonder if universal welfare maximization is too strong a condition to expect.

Our main result is a surprising new signaling scheme that shows the following theorem:

\begin{theorem}[formally stated as \cref{thm:main}]
\label{thm:main0}
For any prior  $\nd$, there is a signaling scheme that is $8$-majorized, and hence it simultaneously $8$-approximates all non-negative, increasing, symmetric and concave welfare functions, compared with any other signaling scheme. Further, this scheme is monotone and efficient, and can be computed in time polynomial in the size of the support of $\nd$.
\end{theorem}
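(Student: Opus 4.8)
The plan is to prove the stronger claim that the consumer-surplus allocation produced by the \algname scheme $8$-majorizes that of every competing signaling scheme, and then invoke a standard consequence of majorization. Call a nonnegative vector $a$ (or, in the continuum population, a surplus random variable) one that \emph{$\alpha$-majorizes} $b$ if, after sorting both increasingly, every prefix sum of $a$ is at least $\tfrac1\alpha$ times the corresponding prefix sum of $b$. Such a relation implies $f(a)\ge \tfrac1\alpha f(b)$ for every non-negative, non-decreasing, symmetric, concave $f$: concavity with $f(0)\ge 0$ gives $f(b/\alpha)\ge \tfrac1\alpha f(b)$, while prefix-sum domination together with Schur-concavity and monotonicity gives $f(a)\ge f(b/\alpha)$ (inflate $b/\alpha$ coordinatewise to match the total mass of $a$ using monotonicity, then compare via Schur-concavity). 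Since the competing vector $b$ ranges over \emph{all} schemes, it suffices to exhibit one scheme whose worst-off-$\tau$-mass surplus is, for every fraction $\tau\in(0,1]$, at least $\tfrac18$ of $B(\tau):=\sup_{\text{schemes}}(\text{surplus of the worst-off }\tau\text{-mass under that scheme})$; a single scheme meeting this for all $\tau$ automatically yields the ``compared with any other scheme'' guarantee.

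I would then set up the \algname construction and dispatch its easy properties. The scheme outputs a market segmentation (a decomposition of $\nd$ into sub-distributions) in which every segment is priced at its \emph{minimum} value, with that minimum being a revenue-maximizing price for the segment --- a design choice that makes the scheme efficient (the item sells in every segment) by construction and keeps it within the family of efficient segmentations introduced by \citet{Bergemann15}. The ``split'' step decides, for each support value $v_i$, how to divide the type-$v_i$ mass between acting as the zero-surplus \emph{anchor} of a segment priced at $v_i$ and being \emph{promoted} upward, i.e.\ used as higher-value mass inside cheaper segments; the ``match'' step assembles segments by pairing promoted mass with anchors, choosing the amounts so that (i) every anchor price stays revenue-optimal --- equivalently, a segment priced at $p$ carries at most a $\tfrac pw$-fraction of mass at values $\ge w$ --- and (ii) the per-type expected surpluses $\cs_{v_i}$ come out non-decreasing in $v_i$. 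Monotonicity of $v\mapsto\cs_v$ is then read off directly from the matching rule, and the running time is polynomial in $|\mathrm{supp}(\nd)|$ because the procedure processes each support value a bounded number of times.

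The heart of the proof is the two-sided comparison with $B(\tau)$. For the \emph{upper} bound on the benchmark I would use the revenue-optimality inequality above: in any scheme a segment priced at $p$ carries at most a $\tfrac pw$-fraction of mass at values $\ge w$, so the surplus any scheme can deliver to buyers of value $\ge w$ --- and, more generally, to any high-surplus suffix of the population --- is capped by an explicit functional of $\nd$ and $w$ built from the concave envelope of the revenue curve $p\mapsto p\Pr_{v\sim\nd}[v\ge p]$; aggregating these caps across thresholds (an LP/flow-duality computation) bounds $B(\tau)$ for every $\tau$. For the \emph{lower} bound I would evaluate, straight from the split-and-match rule, the surplus the scheme assigns to each type and hence to each worst-off $\tau$-mass, and compare it termwise with the cap, tuning the anchor/promotion fractions so that the worst ratio over all $\tau$ equals exactly $8$. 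I expect the main obstacle to be precisely this prefix-wise comparison: one must rule out schemes that deliberately sacrifice aggregate consumer surplus in order to concentrate a large surplus on a small favored group of buyers. The lower-bound construction in \cref{sec:lb} shows that buyer-optimal (consumer-surplus-maximizing) schemes \emph{fail} this test for every constant, so the argument must crucially exploit that the \algname scheme is \emph{not} consumer-surplus-maximizing --- the ``split'' that holds anchor mass at low prices is exactly what caps the surplus attainable by the favored group.
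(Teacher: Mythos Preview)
Your proposal has a genuine gap: you assume that monotonicity of $v\mapsto\cs_v$ can be ``read off directly from the matching rule,'' but the output of the \algname procedure (the scheme $\nzz$ in the paper) is \emph{not} monotone in general. The paper's proof has an essential second stage you are missing entirely: after building $\nzz$, it applies an \emph{ironing} step (taking the lower convex envelope of the integrated surplus-mass function) followed by a nontrivial \emph{smoothing} step that actually modifies the signals---collecting giver mass from rich buyers' binary signals and rebuilding new equal-revenue binary signals with poor buyers as takers---so as to realize (half of) the ironed surplus profile. This is where the factor $2$ in $8=4\times 2$ comes from; the \algname scheme itself only achieves a factor-$4$ guarantee, and only for \emph{value-ordered} prefix sums, not the sorted prefix sums that majorization requires.

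Relatedly, your plan conflates two different prefix orderings. The paper's analysis works because it first bounds the \emph{integration} prefix sum $\pfv(s_{\nz'},F_\nd(v_k))$---surplus of buyers with the $k$ smallest \emph{values}, not the $k$ smallest surpluses---by the truncated-distribution quantity $V_k-\max_{i\le k}v_i\sum_{j=i}^k f_\nd(v_j)$ (\cref{lem:pf_bound}), and shows $\nzz$ is within a factor $4$ of that bound. Only \emph{after} ironing makes the surplus-mass function monotone do value-ordered and surplus-ordered prefixes coincide, which is what converts the $\pfv$ guarantee into the $\pf$ (majorization) guarantee. Your proposed direct attack on $B(\tau)$ via ``LP/flow-duality'' and ``tuning the anchor/promotion fractions so that the worst ratio equals exactly $8$'' does not identify this reduction, and without it there is no clear path to controlling the sorted prefix sums---precisely because, as you note yourself, a competing scheme can concentrate surplus on a small group whose identity need not align with the value ordering.
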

Our main theorem therefore shows that we can be (near)-universally-fair (i.e., near-optimal for any welfare function). This signaling scheme sacrifices some consumer surplus to achieve this guarantee; however, as mentioned above, this sacrifice is necessary -- as we show in \cref{sec:lb}, any exactly buyer-optimal scheme is not $\alpha$-majorized for any constant $\alpha$. Further note that by definition, our scheme is also $8$-approximately buyer-optimal. We complement our $8$-approximation with a lower bound of $1.5$ in \cref{sec:lb}: There is no signaling scheme that is $\alpha$-majorized by every other signaling scheme for $\alpha < 1.5$.

At a technical level, the proof of \cref{thm:main0} constructs a very different signaling scheme from prior work on price discrimination. The scheme is composed of signals such that each of them induces a posterior as a distribution over at most two values. We first decompose the prior $\nd$ into a collection of such signals, and show that this collection $4$-approximates prefix sums of consumer surplus when sorted on buyer value. We then apply a novel ironing procedure to modify the signals so that the resulting scheme is approximately majorized, while losing an additional factor of $2$. Both steps are non-trivial, and together yields an $8$-majorized scheme that is also monotone (and socially efficient).

\subsection{Related Work}
Our model of third-degree price discrimination is a special case of {\em information design} (see~\citep{bergemann2019information}) where an information mediator provides information to impact the behavior of agents. This has also been termed {\em signaling} or {\em persuasion} in literature. (See \citep{dughmi2017algorithmic}.) In {\em Bayesian persuasion} first proposed by \citet{Kamenica11}, there is one agent called the {\em receiver} who receives additional information from a better-informed {\em sender}. Given the signal, the receiver computes their posterior over the state of nature and chooses an action to maximize their own utility. The sender can design the signals so that the receiver, acting in her own interest, maximizes some utility function the sender cares about. This problem has been widely studied in various contexts~\citep{dughmi2016persuasion, dughmi2019algorithmic,Babichenko21,Bergemann15,chakraborty2014persuasive,xu2015exploring,ParetoIS}.

In the setting we consider, the sender is the intermediary, while the receiver is the seller that maximizes their revenue given the signal. This was first considered by~\citet*{Bergemann15}, who showed buyer-optimal signaling schemes that preserve seller revenue while transferring the rest of the surplus to the buyers. Subsequently, it was shown by~\citet*{Cummings20} that the set of all buyer-optimal signaling schemes can be specified by a linear program. Several works~\citep{shen2018closed,cai2020third,DBLP:conf/innovations/MaoLW22,BergemannDLZ22,DBLP:conf/sigecom/AlijaniBMW22,KOM22} consider various extensions to the basic single seller/buyer setting, and show exact/approximate buyer-optimality under various assumptions. 

The concept of majorized vectors has existed for a long time~\citep*{karamata1932inegalite,hardy1952inequalities}, and is equivalent to solutions that simultaneously maximize symmetric concave functions of the coordinates. In the context of resource allocation and routing problems, an approximate version of this concept was defined by \citet*{GoelMP05}, and subsequently shown by \citet*{GoelM06} to be equivalent to solutions that simultaneously approximately maximize every symmetric concave function of the coordinates; see also~\citep*{KumarK06,ChakrabartyS19}. It was shown by \citet*{GoelMP05} that the best approximation factor is the solution of a linear program. However, the approximation factor is problem-dependent and typically logarithmic in the number of coordinates for general routing problems. The surprising aspect of our paper is that this factor is only a constant for the price discrimination problem, and is achieved by a very simple signaling scheme. This is similar in spirit to recent results in metric distortion of voting rules~\citep*{GoelKM17}, where it is shown that the Copeland rule is $5$-majorized by any other rule.

\citet{XuRegulation} consider fairness in price discrimination by imposing a bound on the ratio or difference in prices that the seller is allowed to charge to different buyers (akin to monotonicity). They assume a perfectly informed seller (first-degree price discrimimation) and derive a unique optimal pricing strategy as well as characterize the tradeoffs for different buyer value distributions. Our work in contrast focuses on the more involved objective of majorization, and furthermore, we do not assume the seller is perfectly informed (third-degree price discrimination).

Our work connects to the larger body of work on fairness in machine learning, where again, optimality in the sense of overall risk minimization (ERM) can lead to systematic unfairness~\citep{Kearns1,Kearns2,multicalib,DworkIndividual,Anilesh}. Much of this work focuses on the tradeoffs between efficiency and fairness. As machine learning systems become more pervasive, it becomes important to consider not just their direct impact, but also their impact to downstream applications when they are embedded in a larger system. In our case, the larger system is a marketplace platform that uses machine learning to predict buyer values and help sellers price-discriminate. Our results show that na\"ively maximizing surplus can be unfair, while different mechanisms can achieve good tradeoffs between efficiency and fairness.

\section{Preliminaries}

\subsection{Basic Setting}

\paragraph{Seller and Buyers.}
A monopolistic seller of a good has infinite supply, and wants to price them so as to maximize her revenue. 
There are a finite number of buyers in the market. 
Each buyer is interested in buying at most one copy of the good, and has a value for the good given by a positive real number. A buyer chooses to buy if and only if the price is at most his value.
We henceforth focus on discrete empirical distributions over buyer valuations; in particular, we consider distributions with support size $n$ over values $v_1<v_2<\cdots<v_n$. (We write $v_0 := 0$ to simplify notations.) For any distribution $\mathcal{P}$, we use $f_{\mathcal{P}}(v)$ to denote the probability mass function: $f_{\mathcal{P}}(v) := \Pr_{v'\sim {\mathcal{P}}}[v' = v]$, and define the cumulative distribution function (CDF) $F_{\mathcal{P}}(v) := \Pr_{v'\sim {\mathcal{P}}}[v'\le v]$ and complementary CDF $G_{\mathcal{P}}(v) := \Pr_{v'\sim {\mathcal{P}}}[v'\ge v]$.

Let $\nd$ denote the empirical distribution of buyer valuations. The seller knows the distribution $\nd$, but not the actual value of each buyer. Consequently, without additional information, the seller chooses a common price $p^{\mathsf{My}}$ (sometimes called the Myerson price~\citep*{Myerson81}) for all buyers such that $p = p^{\mathsf{My}}$ maximizes $p \cdot G_{\nd}(p)$.

\paragraph{Price Discrimination via an Information Intermediary.}
The main idea in the work of \citet*{Bergemann15} is that in this setting, one can model the effects of \emph{price discrimination} by considering an exogenous intermediary who provides some additional signal to the seller about each buyer, enabling the seller to modify the price offered to that buyer. We now formalize this as a game among the intermediary, the seller and the buyers.

We assume the information intermediary knows $\nd$ as well as the exact value of each buyer. Independently, for each buyer, the intermediary sends a \emph{signal} about the buyer's value to the seller via some chosen \emph{signaling scheme}: a (potentially randomized) mapping from a value in $\{v_1,v_2,\ldots,v_n\}$ to some set of signals $[Q]$. Crucially, the intermediary commits to a scheme upfront, and the scheme is known to the seller.

From the perspective of the seller, since all agents are a priori indistinguishable, the effect of receiving a signal is to update the seller's belief over the buyer's value from $\nd$ to some new distribution $\ns_q$ ($q\in[Q]$) over possible values. Consequently, with a slight abuse of notation, instead of defining a signaling scheme in terms of the mapping from value to signal, we directly define it in terms of the resulting posterior distributions $\ns_q$ corresponding to each $q$, as well as the resulting distribution over these signals. Formally:

\begin{definition}[\textbf{Signal; Signaling Scheme}]\label{def:sig}
A \emph{signal} $\ns$ updates the seller's belief over a buyer's value from $\nd$ to some new distribution $\ns$. A \emph{signaling scheme} $\nz=\{(\ns_q, \gamma_q)\}_{q \in [Q]}$ is a collection of $Q$ signals $\{\ns_q\}_{q \in [Q]}$ with weights $\{\gamma_q\}_{q \in [Q]}$ that satisfy:
(1) $\gamma_q \geq 0$ and $\sum_{q \in [Q]} \gamma_q = 1$; and moreover
(2) $\sum_{q \in [Q]}\gamma_q \ns_q = \nd$.
\end{definition}
We note again that $\ns_q$ denotes both the $q^{\th}$ signal in the signaling scheme, and the posterior of the seller after receiving the $q^{\th}$ signal. The constraints in \cref{def:sig} ensure that the signaling scheme $\nz$ is \emph{Bayes plausible}~\citep*{Kamenica11}, i.e., that the expected posterior is equal to the prior. Given a signaling scheme as defined above, it is easy to construct the random mapping from values to signals: each $v$ is mapped to $\ns_q$ with probability $\gamma_qf_{\ns_q}(v)/f_{\nd}(v)$.

\paragraph{Outcomes under Signaling.}
After receiving signal $\ns_q$ from the intermediary, the seller offers the buyer a new price $p_{\ns_q}^*$ based on the new posterior $\ns_q$ satisfying
\[
p_{\ns_q}^*=\argmax_{v} v \cdot G_{\ns_q}(v).
\] 

The resulting expected gains from trade are split between the buyer and the seller as:
\begin{itemize}
	\item \emph{Producer surplus} (or revenue) of the seller: $\nr(\ns_q) = \max_{v} v \cdot G_{\ns_q}(v) = p_{\ns_q}^*\cdot G_{\ns_q}(p_{\ns_q}^*).$
	\item \emph{Consumer surplus} of the buyer with value $v$: $
	\cs_v(\ns_q)=\mathds{1}[v \ge p_{\ns_q}^*]\cdot (v-p_{\ns_q}^*).$	
\end{itemize} 

In the event that $v \cdot G_{\ns_q}(v)$ are maximized at multiple points, we assume the seller breaks ties by choosing the lowest tied price.\footnote{Note that we can avoid ties by slightly perturbing each signal, without changing the message of our results.} Moreover, we can now define the \emph{expected} outcomes under a given signaling scheme: The expected consumer surplus of a buyer with value $v$ is the expectation of that buyer's surplus on all signals that the seller might receive from the intermediary. Similarly we can define the expected seller revenue.
\begin{definition}[Expected Outcomes under Signaling]\label{def:average_surplus}
Given a signaling scheme $\nz=\{(\ns_q, \gamma_q)\}_{q \in [Q]}$, the \emph{expected consumer surplus} of a buyer with value $v$ under $\nz$ is:
\begin{equation}
\cs_v(\nz)=\sum_{q\in[Q]} \cs_v(\ns_q)\cdot\frac{\gamma_q\cdot f_{\ns_q}(v)}{f_{\nd}(v)}.\label{eqn:scheme_surplus}
\end{equation}
Moreover, the overall expected consumer surplus under $\nz$ is
$\cs(\nz) = \sum_{q\in[Q]} \gamma_q \cdot f_{\ns_q}(v) \cdot \cs_v(\ns_q)$.

Similarly, the {\em seller's expected revenue} is given by: 
\begin{equation}
\nr(\nz)=\sum_{q} \nr(\ns_q)\cdot\gamma_q.\label{eqn:scheme_revenue}
\end{equation}
\end{definition}
We illustrate our setting, signaling schemes, and the above metrics with the following running example; ~\cref{fig:non_monotone,fig:monotone_major} show different signaling schemes for this setting.

\begin{example}[Running example]
\label{ex:non_monotone}
The buyer values are given by $\langle 1,2,5,6\rangle$ with distribution on this support being $\nd = \langle 0.25, 0.25, 0.25, 0.25 \rangle$. The revenue under each of the posted prices is $\langle 1, 1.5, 2.5, 1.5 \rangle$, and thus $5$ is the Myerson price under $\nd$, resulting in revenue $\nr^{\mathsf{My}}=2.5$.

\cref{fig:non_monotone} illustrates one particular signaling scheme $\nzeo$ for this setting (based on the construction of~\citet*{Bergemann15}). Here, it is easy to check that $5$ is an optimal price in all the signals, resulting in (seller-optimal) revenue of $\nr(\nzeo)= 5\cdot \left(f_{\nd}(5)+f_{\nd}(6)\right) = 2.5$. To compute the consumer surplus, take $v_3=5$ as an example: the expected consumer surplus of a buyer with value $5$ is \[\cs_{5}(\nzeo) = \left(\frac{1}{60} \cdot (5 - 1) + \frac{1}{90} \cdot (5 - 2)\right) \Big/ \left
(\frac{1}{4}\right) = 0.4.\] 
\end{example}

\subsection{Global and Per-Agent Performance Metrics of Signaling Schemes}
\label{ssec:metrics}

\begin{figure}[!t]
    \centering  \includegraphics{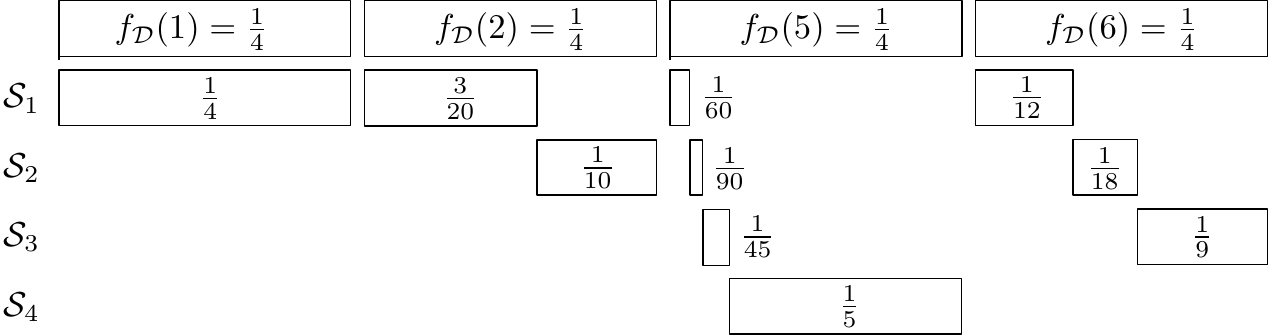}
    \caption{\small\it Signaling scheme in \cref{ex:non_monotone}: The distribution $\nzeo$ is drawn as rectangles in the first row. Each subsequent row corresponds to a signal under $\nz$. The rectangles under each $v_i$'s column indicate the mass $\gamma_q f_{\ns_q}(v_i)$ placed on $v_i$ in each signal. 
    We can recover each signal and its weight by normalizing; for example, signal $\ns_1$ satisfies $f_{\ns_1}(1) = \frac{1}{2}$, $f_{\ns_1}(2) = \frac{3}{10}$, $f_{\ns_1}(5) = \frac{1}{30}$, $f_{\ns_1}(6) = \frac{1}{6}$ and has weight $\gamma_1 = \frac{1}{2}$.
}
\label{fig:non_monotone}
\end{figure}
\begin{figure}[htbp]
    \centering  \includegraphics{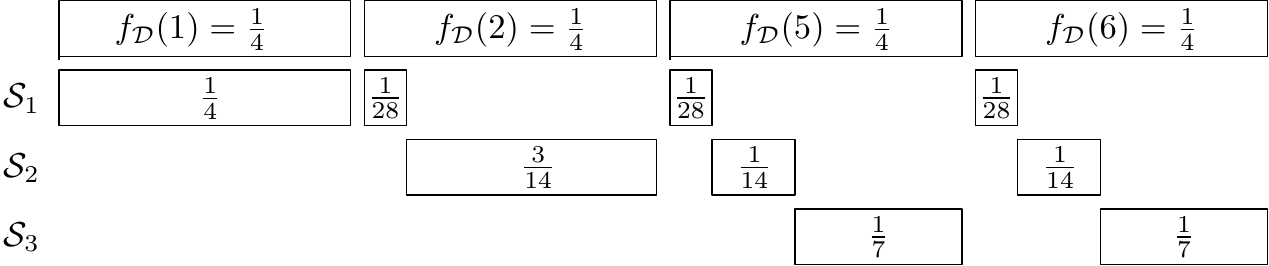}
    \caption{\small\it A completely different signaling scheme $\nzet$ for the instance given in \cref{ex:non_monotone}: The notations are the same as those in \cref{fig:non_monotone}. 
    Note that while the two schemes are very different, they both are efficient (i.e., the item is always sold), and have the same revenue and same overall consumer surplus (although the distribution of the overall surplus between different values is different; see \cref{ex:monotone_major}).
}
\label{fig:monotone_major}
\end{figure}

\paragraph{Utilitarian Metrics (Efficiency and Consumer Surplus).} For a signaling scheme $\nz$ to be efficient (i.e., to maximize the gains from trade), it needs to ensure the item is always sold. 
This corresponds to requiring that for each signal $q\in[Q]$, the optimal price posted by the seller under $\ns_q$ is the smallest value with non-zero probability in the support of $\ns_q$. If this holds, then any buyer will always accept the posted price and the item is always sold. 

Given any discrete distribution $\mathcal{P}$, the \emph{lowest value in the support} of $\mathcal{P}$ is defined as $\underline{v}_{\mathcal{P}} := \min\{v \mid f_{\mathcal{P}}(v)>0\}$. 
Note that since we focus on distributions over a finite support, the minimum exists. Now we can formally define the condition for a signaling scheme to be efficient:

\begin{definition}[Efficient Signaling Scheme]
A signaling scheme $\nz=\{(\ns_q,\gamma_q)\}_{q\in[Q]}$ is \emph{efficient} if
\[
\underline{v}_{\ns_q} = \argmax_{v} v\cdot G_{\ns_q}(v) \qquad \forall\, q\in [Q].
\]
\end{definition}

While efficiency ensures that a signaling scheme maximizes the overall gains-from-trade, it does not specify how the surplus is divided. In particular, revealing the buyer's true value to the seller is an efficient scheme, but results in the seller getting the full surplus. An alternative is to maximize the overall consumer surplus $\cs(\nz)$. From the above definitions, it is easy to see that given $\nd$ one can write the problem of constructing a signaling scheme $\nz$ that maximizes $\cs(\nz)$ via a linear program; it is not clear however what guarantees such a scheme has, or even, if it is efficient. The surprising result of \citet*{Bergemann15} is that this is indeed the case:

\begin{proposition}[From~\citep*{Bergemann15}]
\label{prop:BBM}
For any given $\nd$, let $\nr^{\mathsf{My}}=\max_v v\cdot G_{\nd}(v)$ denote the optimal revenue without signaling (i.e., the ``Myerson revenue''). Then there exist efficient signaling schemes $\nz$ under which $\nr(\nz)=\nr^{\mathsf{My}}$.
\end{proposition}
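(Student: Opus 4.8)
The plan is to exhibit the scheme as a convex combination of \emph{extremal markets}, in the spirit of \citet*{Bergemann15}. For a nonempty set of values $X$ write $\underline v_X:=\min X$, and let the extremal market on $X$ be the distribution $\sigma_X$ supported on $X$ determined by the equal-revenue conditions $v\cdot G_{\sigma_X}(v)=\underline v_X$ for every $v\in X$; the condition at $v=\underline v_X$ forces $G_{\sigma_X}(\underline v_X)=1$, and the induced masses are the successive increments of $\underline v_X/v$ along $X$, hence nonnegative and summing to $1$. The one fact I would isolate about $\sigma_X$ is that $p\cdot G_{\sigma_X}(p)\le\underline v_X$ for \emph{every} price $p$, with equality exactly when $p\in X$ (a price strictly between two consecutive points of $X$, below $\underline v_X$, or above $\max X$, is weakly dominated). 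Consequently the seller's lowest optimal price under $\sigma_X$ is $\underline v_X=\underline v_{\sigma_X}$, so each $\sigma_X$ is an \emph{efficient} signal with $\nr(\sigma_X)=\underline v_X$.

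Working with (sub-probability) measures rather than normalized distributions, I would then build $\nz$ greedily: set $x^{(0)}:=\nd$, and while $x^{(t)}\ne 0$ let $X_t:=\operatorname{supp}(x^{(t)})$, let $\alpha_t:=\min_{v\in X_t}f_{x^{(t)}}(v)/f_{\sigma_{X_t}}(v)$ (here $f_{x^{(t)}}(v)$ is the mass of $x^{(t)}$ at $v$), and set $x^{(t+1)}:=x^{(t)}-\alpha_t\sigma_{X_t}$. Each step zeroes at least one mass, so this halts after at most $n$ steps with $x^{(T+1)}=0$. The collection $\nz:=\{(\sigma_{X_t},\alpha_t)\}_{t=0}^T$ is a valid signaling scheme: the weights are nonnegative and sum to $1$, and $\sum_t\alpha_t\sigma_{X_t}=x^{(0)}-x^{(T+1)}=\nd$ telescopes, giving Bayes plausibility; and it is efficient by the previous paragraph. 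It remains only to establish the revenue identity $\nr(\nz)=\sum_t\alpha_t\underline v_{X_t}=\nr^{\mathsf{My}}$.

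This identity is the crux, and I expect it to be the main obstacle. Let $v^{*}$ be a revenue-maximizing price for $\nd$, so $\nr^{\mathsf{My}}=v^{*}G_{\nd}(v^{*})$ and $v^{*}\in\operatorname{supp}(\nd)$ (prices outside the support are weakly dominated). The key claim is the \emph{invariant} that $v^{*}$ is never peeled away: for every $t\le T$, $v^{*}\in X_t$ and $v^{*}$ maximizes $v\mapsto v\cdot G_{x^{(t)}}(v)$. I would prove this by induction, the inductive step being a single-peel lemma: if $X=\operatorname{supp}(x)$, $v^{*}\in X$ maximizes $v\mapsto v\cdot G_x(v)$, and $x'=x-\alpha\sigma_X$ with $\alpha=\min_{v\in X}f_x(v)/f_{\sigma_X}(v)$, then $v^{*}$ maximizes $v\mapsto v\cdot G_{x'}(v)$, and $v^{*}\in\operatorname{supp}(x')$ unless $x'=0$. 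For the first part, $v^{*}\in X$ and the fact above give $v^{*}G_{x'}(v^{*})=v^{*}G_x(v^{*})-\alpha\underline v_X$; for $p\in X$ the same identity plus optimality of $v^{*}$ for $x$ yields $p\,G_{x'}(p)\le v^{*}G_{x'}(v^{*})$, and for $p\notin X$ one reduces to the nearest point of $X$ above $p$ (or to $\underline v_X$, or to revenue $0$), using that $G_x$ and $G_{x'}$ are constant between consecutive points of $X$. For the second part I would argue by contradiction: if $x'\ne 0$ but $f_{x'}(v^{*})=0$, then either $x'$ places mass on some $\hat v\in X$ with $\hat v>v^{*}$, whence the equal-revenue property of $\sigma_X$ forces $\hat v\,G_x(\hat v)>v^{*}G_x(v^{*})$, or $x'$ is supported entirely below $v^{*}$, whence $\underline v_X\,G_x(\underline v_X)>v^{*}G_x(v^{*})$; either way $v^{*}$ was not optimal for $x$, a contradiction.

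Granting the invariant, the proof closes in one line. Expanding $f_{\nd}=\sum_t\alpha_t f_{\sigma_{X_t}}$ gives $G_{\nd}(v^{*})=\sum_t\alpha_t G_{\sigma_{X_t}}(v^{*})$, and since $v^{*}\in X_t$ for every $t$ the equal-revenue property gives $v^{*}G_{\sigma_{X_t}}(v^{*})=\underline v_{X_t}$; multiplying through by $v^{*}$, $\nr^{\mathsf{My}}=v^{*}G_{\nd}(v^{*})=\sum_t\alpha_t\underline v_{X_t}=\sum_t\alpha_t\nr(\sigma_{X_t})=\nr(\nz)$. Together with the efficiency of $\nz$ this proves the proposition.
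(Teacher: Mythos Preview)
Your proof is correct and is precisely the greedy extremal-market decomposition of \citet*{Bergemann15}; the paper does not prove \cref{prop:BBM} itself but cites it, and the scheme it depicts in \cref{fig:non_monotone} for the running example is exactly the output of your procedure (the first signal there is $\sigma_{\{1,2,5,6\}}$ with weight $\alpha_0=\tfrac12$, etc.). The one place where your write-up is slightly loose is Case~1 of the single-peel lemma: for the contradiction you should take $\hat v$ to be the \emph{smallest} element of $\operatorname{supp}(x')$ above $v^{*}$, so that $G_{x'}(v^{*})=G_{x'}(\hat v)$ and hence $\hat v\,G_{x'}(\hat v)>v^{*}G_{x'}(v^{*})$; adding $\alpha\,\underline v_X$ to both sides via the equal-revenue identity then gives the claimed contradiction with optimality of $v^{*}$ for $x$. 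With that clarification the invariant and the final telescoping identity go through as you wrote them.
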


Note that since the seller can always get $\nr^{\mathsf{My}}$ under any signaling scheme (by ignoring the signal and posting $p^{\mathsf{My}}$), and since the signaling scheme is efficient, it must have maximized $\cs(\nz)$. We also note that \citet*{Bergemann15} in fact construct an explicit signaling scheme that achieves this result; since then, alternative constructions have been found~\citep{Cummings20,KOM22} which also maximize $\cs(\nz)$, with other additional desirable properties; moreover, any convex mixture of such schemes leads to new signaling schemes which all maximize $\cs(\nz)$.

\begin{example} \label{ex:monotone_major}
Continuing our running example from \cref{ex:non_monotone}, note that under the signaling scheme $\nzeo$, the item is always sold, meaning that $\nzeo$ is efficient. Moreover, since $\nr(\nzeo)=\nr^{\mathsf{My}}$, we have that $\nzeo$ maximizes $\cs(\nz)$, and the maximum surplus consumer is $1$.

\cref{fig:monotone_major} illustrates an alternative efficient signaling scheme $\nzet$ (based on the construction of~\citet*{KOM22}), which we call ``remove from bottom''. Again, we can check that $\cs(\nzet)=1$ and the item always sells, so that this scheme is buyer-optimal.
\end{example}

\paragraph{Fairness Metrics (Monotonicity and Equitable Welfare Functions).}
The main problem with focusing on utilitarian metrics alone is that they do not give good guarantees for each individual agent's surplus. To understand how \emph{fair} a given signaling scheme is, we need to consider additional performance metrics. The simplest of these is monotonicity: we say a signaling scheme $\nz$ is \emph{monotone} if buyers with larger values gain larger expected surplus from $\nz$: 
\begin{definition}[Monotonicity]
	A signaling scheme $\nz$ is \emph{monotone} if for any ordered pair of values $v_i < v_j$, we have $\cs_{v_{i}}(\nz) \le \cs_{v_{j}}(\nz)$.  
\end{definition}

Our running example shows that not all schemes satisfying \cref{prop:BBM} (i.e., efficient and consumer surplus maximizing) are monotone. 
\begin{example}
In \cref{ex:monotone_major} under $\nzet$, the expected consumer surplus of buyers with values $\langle 1,2, 5,6\rangle$ are respectively $\langle 0, \frac{1}{7}, \frac{10}{7}, \frac{17}{7}\rangle$.  This vector is monotone; 
however, $\nzeo$ in \cref{ex:non_monotone} has surplus vector $\langle 0, 0.6, 0.4, 3 \rangle$, which is not monotone since $\cs_2(\nzeo) = 0.6 > 0.4 = \cs_5(\nzeo)$. 
\end{example}

What can we say about what a \emph{fair} signaling scheme is, beyond the above metrics (efficiency, maximizing consumer surplus, monotonicity)? One option that is often used is to maximize an alternative \emph{equitable} welfare function -- one which promotes a more balanced solution. Such a welfare function $\W$ takes as input the surplus vector $\vec{u}$ containing the expected surplus under each value, and outputs a real number; moreover, $\W$ satisfies the following natural properties:
\begin{itemize}
    \item (Symmetry) For any $\vec{u}$ and any permutation $\sigma$, $\W(\sigma(\vec{u})) = \W(\vec{u})$. In other words, it treats the buyers equally.
    \item (Non-decreasing) For any $\vec{u_1} \leq \vec{u_2}$, $\W(\vec{u_1}) \leq \W(\vec{u_2})$. In other words, it weakly prefers Pareto improvements.
    \item (Concavity) $\W$ is concave. In other words, it weakly prefers a balanced allocation to a convex combination of extremes with the same expected allocation.
    \item (Normalization) $\W(\vec{0}) = 0$. (It also suffices to alternatively require non-negativity: $\W(\vec{u}) \geq 0$ for any $\vec{u} \geq 0$.)
\end{itemize}

This definition captures many common welfare functions, such as the utilitarian social welfare function that outputs the sum, the Nash welfare function that outputs the geometric mean, and the max-min (a.k.a. egalitarian) welfare function that outputs the minimum. It will be clear that we cannot hope for similar results if we drop any of these four conditions.

We will show the surprising existence of a universal scheme -- we do not need to know $\W$ in order to approximately optimize it. Our technical tool to deal with the unknown $\W$ is \emph{majorization}.\footnote{More accurately, we use the notion of \emph{majorization from above}, a.k.a. \emph{supermajorization}. For simplicity, we use the term \emph{majorization} throughout this paper.} Below we define it with the related notions which we will need later.

\paragraph{Majorization.} Given a signaling scheme $\nz$, we define its \emph{surplus-mass function} to be a step function over $(0, 1]$ taking value $\cs_{v_i}(\nz)$ on the interval $\left(F_{\nd}(v_{i-1}),F_{\nd}(v_i)\right]$. Formally, we have:
\begin{definition}[Surplus-Mass Function] \label{def:s_function}
Given a signaling scheme $\nz$, the \emph{surplus-mass function} induced from $\nz$ is a step function $s_\nz:\left(0, 1\right]\rightarrow \mathbb{R}_{\ge 0}$ that satisfies $\forall\,x\in(0,1]$ and $i\in [n]$:
\[
s_\nz(x)=\cs_{v_i}(\nz),\ \forall x\in \left(F_{\nd}(v_{i-1}), F_{\nd}(v_i)\right].
\]
\end{definition}
That is, the surplus-mass function maps a quantile in the value distribution to the expected surplus of the buyer with that value.

\begin{definition}[Integration Prefix Sum]\label{def:vprefix}
Given a function $f:(0,1] \rightarrow \mathbb{R}_{\ge 0}$ and $m \in (0,1]$. The \emph{$m$-integration prefix sum} of $f$ is
\[
\pfv(f,m)=\int_{0}^m f(x)\d x.
\]
\end{definition}

Next, we define the \emph{sorted $m$-prefix sum} of any step function $f$ as the area under the curve over the leftmost $m$-length interval of the ``sorted function'' obtained by sorting the segments of $f$.

\begin{definition}[Sorted Prefix Sum]
Given a step function $f:(0,1]\rightarrow \mathbb{R}_{\ge0}$ with finite steps and a real number $m\in (0,1]$, define a new \emph{sorted} function $f_{\mathrm{sorted}}(x)$ by rearranging the segments in $f$ in the ascending order of $f(x)$ (while keeping the domain $(0, 1]$ unchanged). The \emph{sorted $m$-prefix sum} of $f$ is
\[
\pf(f,m)=\int_{0}^m f_{\mathrm{sorted}}(x)\d x.
\]
\end{definition}
In other words, the sorted $m$-prefix sum outputs the minimum possible (over $S$) integral of $f(x)$ on $x \in S$, where $S \subseteq [0, 1]$ is a finite union of disjoint intervals with total length of $m$.

We now define the majorization relation between two signaling schemes as follows: 
\begin{definition} [Majorization Relation]
A signaling scheme $\nz_1$ is \emph{majorized} by another signaling scheme $\nz_2$ if
\[
\forall\, m\in \left(0, 1\right],\ \pf(s_{\nz_1},m)\ge \pf(s_{\nz_2},m),
\]
where $s_{\nz_1},s_{\nz_2}$ are the surplus-mass functions induced under schemes $\nz_1$ and $\nz_2$ respectively. A signaling scheme is said to be {\em majorized} if it is majorized by every other signaling scheme.
\end{definition}

\begin{example}
\label{ex:majorization}
    In our running example, the expected consumer surplus  under $\nzeo$ (resp. $\nzet$) is $\langle 0, 0.6, 0.4, 3 \rangle$ (resp. $\langle 0, \frac{1}{7}, \frac{10}{7}, \frac{17}{7}\rangle$). Each of these surplus values occupies mass of $1/4$. Thus,  
    \begin{align*}    
    \pf\Big(\nzet, \frac{1}{2}\Big) = \frac{\sum_{v\in\{1,2\}}\cs_v(\nzet)}{4}  = \frac{1}{28} &< \frac{1}{10} = \frac{\sum_{v\in\{1,3\}}\cs_v(\nzeo)}{4}  = \pf\big(\nzeo, \frac{1}{2}\big);\\
    \pf\Big(\nzeo, \frac{3}{4}\Big) = \frac{\sum_{v\in\{1,2,3\}}\cs_v(\nzeo)}{4}=\frac{1}{4} &< \frac{11}{28} = \frac{\sum_{v\in\{1,2,3\}}\cs_v(\nzet)}{4} = \pf\Big(\nzet, \frac{3}{4}\Big).
    \end{align*}
    Thus $\nzet$ is not majorized by $\nzeo$ and $\nzeo$ is not majorized by $\nzet$, and hence neither $\nzeo$ nor $\nzet$ can be majorized by every other signaling scheme.
\end{example}

Indeed, in \cref{sec:lb}, we show that (exact) majorization is unattainable -- there are instances where no signaling scheme is majorized by every other signaling scheme.  Given this, we define the following approximation version of majorization. 

\begin{definition} [$\alpha$-Majorization]
\label{def:approx_maj}
A signaling scheme $\nz_1$ is \emph{$\alpha$-majorized} by another signaling scheme $\nz_2$ if $\forall\, m\in \left(0, 1\right]$, we have:
\[
\alpha\cdot \pf(s_{\nz_1},m)\ge \pf(s_{\nz_2}, m).
\]
Further, we say a signaling scheme $\nz$ is \emph{$\alpha$-majorized} if it is $\alpha$-majorized by every other signaling scheme $\nz'$.
\end{definition}

The following established fact~\citep{hardy1952inequalities,GoelM06} is crucial in our universal maximization of well-behaved welfare functions. We include a proof for completeness.
\begin{proposition} [Proved in \cref{app:omitted}]
\label{prop:majorize_concave}
Any $\alpha$-majorized signaling scheme $\nz$ gives an $\alpha$-approximation to the welfare under any signaling scheme, as long as the welfare function is symmetric, weakly increasing, concave, and normalized (or non-negative). Conversely, if a signaling scheme $\nz$ gives an $\alpha$-approximation to all such welfare functions, it must be $\alpha$-majorized.
\end{proposition}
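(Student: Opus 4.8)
The plan is to reduce both directions to the classical equivalence between (weak) majorization and domination under every symmetric, non-decreasing, concave functional, working with surplus-mass functions rather than with raw surplus vectors. Two preliminary observations make this transparent. First, scaling by a positive constant commutes with sorting, so $\pf(\alpha\, s_{\nz}, m) = \alpha\cdot\pf(s_{\nz}, m)$ for every $m$; hence ``$\nz$ is $\alpha$-majorized by $\nz'$'' says precisely that the step function $\alpha\, s_{\nz}$ \emph{weakly supermajorizes} $s_{\nz'}$, i.e.\ $\pf(\alpha\, s_{\nz}, m)\ge\pf(s_{\nz'}, m)$ for all $m\in(0,1]$. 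Second, as already noted after the definition of $\pf$, we have $\pf(f,m)=\min\{\int_S f : S\subseteq(0,1]\text{ a finite union of intervals of total length }m\}$, so $\pf(\cdot,m)$ is a pointwise minimum of linear functionals of the surplus values $\cs_{v_1}(\cdot),\dots,\cs_{v_n}(\cdot)$.

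For the forward direction, fix a valid welfare function $\W$ and an arbitrary scheme $\nz'$. Since $\alpha\, s_{\nz}$ weakly supermajorizes $s_{\nz'}$, the height vector of $\alpha\, s_{\nz}$ (coordinates $\alpha\,\cs_{v_i}(\nz)$, weighted by the masses $f_{\nd}(v_i)$) is reachable from that of $s_{\nz'}$ by a finite sequence of (i) coordinate raises and (ii) ``Robin Hood'' averaging transfers --- the standard structure theorem for weak majorization (see~\citep{hardy1952inequalities,GoelM06}), which one applies after refining both step functions to a common set of breakpoints so that they become finite vectors. Move (i) weakly increases $\W$ by monotonicity; move (ii) weakly increases $\W$ by symmetry and concavity (Schur-concavity). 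Hence $\W(\alpha\,\vec u_{\nz})\ge\W(\vec u_{\nz'})$, where $\vec u_{\nz}$ denotes the surplus profile of $\nz$. Finally, using $\alpha\ge1$, concavity, and $\W(\vec 0)\ge 0$ (which holds under normalization and also under mere non-negativity),
\[
\W(\vec u_{\nz})=\W\!\Big(\tfrac1\alpha(\alpha\,\vec u_{\nz})+\big(1-\tfrac1\alpha\big)\vec 0\Big)\ge\tfrac1\alpha\W(\alpha\,\vec u_{\nz})+\big(1-\tfrac1\alpha\big)\W(\vec 0)\ge\tfrac1\alpha\W(\vec u_{\nz'}),
\]
which is the claimed $\alpha$-approximation, uniformly over $\W$ and $\nz'$.

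For the converse, suppose $\nz$ $\alpha$-approximates every valid welfare function against every scheme. Fix an arbitrary scheme $\nz'$ and $m\in(0,1]$, and consider the functional $\W_m$ mapping a scheme $\nz''$ to $\pf(s_{\nz''},m)$ --- equivalently, in terms of surplus vectors, $\tfrac1N$ times the sum of the $\lfloor Nm\rfloor$ smallest buyer surpluses plus the fractional remainder, with $N$ the number of buyers. By the ``minimum over mass-$m$ sets'' description above, $\W_m$ is a pointwise minimum of linear functionals of the surpluses, hence concave; it is manifestly symmetric, non-decreasing (raising any surplus cannot lower $\int_S f$ for any $S$), and satisfies $\W_m(\vec 0)=0$, so it is a valid welfare function. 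The hypothesis applied to $\W_m$ yields $\pf(s_{\nz},m)\ge\tfrac1\alpha\pf(s_{\nz'},m)$, i.e.\ $\alpha\cdot\pf(s_{\nz},m)\ge\pf(s_{\nz'},m)$. As $m$ and $\nz'$ were arbitrary, $\nz$ is $\alpha$-majorized.

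I expect the one non-routine ingredient to be the weak-majorization structure theorem used in the forward direction --- that $\alpha\, s_{\nz}$ weakly supermajorizing $s_{\nz'}$ lets us interpolate between the two profiles using only raises and Robin Hood transfers, each of which is $\W$-monotone. The cleanest way to make this rigorous in the present mass-weighted formulation is to pass to a common refinement of the breakpoints of $s_{\nz}$ and $s_{\nz'}$ (for instance, the grid of multiples of $1/N$), reducing everything to finite vectors, and then invoke the classical discrete statement; the remaining steps --- the scaling/sorting identity, the concavity manipulation absorbing the factor $\alpha$, and the verification that each $\W_m$ meets all four axioms --- are routine.
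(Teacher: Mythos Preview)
Your proposal is correct and follows essentially the same approach as the paper: for the forward direction, both arguments scale up to $\alpha\vec u$, use weak supermajorization together with monotonicity and Schur-concavity to obtain $\W(\alpha\vec u)\ge\W(\vec w)$, and then absorb the factor $\alpha$ via concavity and $\W(\vec 0)\ge 0$; for the converse, both simply observe that each sorted $m$-prefix sum is itself a valid welfare function. The only cosmetic difference is that the paper reaches $\W(\alpha\vec u)\ge\W(\vec w)$ by first trimming $\alpha\vec u$ down to a vector $\vec u'$ with the same $\ell_1$-norm as $\vec w$ and then invoking Schur-concavity, whereas you invoke the weak-majorization structure theorem (raises plus Robin Hood transfers) directly---these are equivalent routes to the same inequality.
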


\section{Finding an $8$-Majorized Signaling Scheme}

In this section, we construct an 8-majorized signaling scheme. 
In more detail, in~\cref{sec:vprefix}, we present our \algname algorithm (\cref{alg:match}) that given any empirical distribution $\nd$ constructs a signaling scheme $\nzz$ that approximates the $m$-integration prefix sum of any other signaling scheme:
\begin{lemma} \label{thm:value_pr}
Given $\nd$, let $\nzz$ denote the signaling scheme returned by \algname Algorithm (\cref{alg:match}). Then, for any signaling scheme $\nz'$ and any $m \in (0,1]$, we have
\[
4 \cdot \pfv(s_{\nzz}, m) \ge \pfv(s_{\nz'}, m).
\]
\end{lemma}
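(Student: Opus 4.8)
The plan is to prove the bound one ``prefix mass'' at a time and then interpolate. Since $s_{\nzz}$ and $s_{\nz'}$ are step functions whose breakpoints are exactly the quantiles $F_{\nd}(v_k)$, for every scheme $\nz$ the map $m\mapsto\pfv(s_{\nz},m)$ is piecewise linear with these same breakpoints; hence $4\,\pfv(s_{\nzz},m)-\pfv(s_{\nz'},m)$ is linear on each interval $\bigl(F_{\nd}(v_{k-1}),F_{\nd}(v_k)\bigr]$, and a linear function nonnegative at both endpoints is nonnegative throughout. It therefore suffices to verify the inequality at $m=F_{\nd}(v_k)$ for every $k$ (the case $m=0$ being trivial), i.e. to show
\[
4\sum_{i=1}^{k} f_{\nd}(v_i)\,\cs_{v_i}(\nzz)\;\ge\;\sum_{i=1}^{k} f_{\nd}(v_i)\,\cs_{v_i}(\nz')\qquad\text{for every }k.
\]
Write $\opt_k:=\sup_{\nz'}\sum_{i=1}^{k} f_{\nd}(v_i)\,\cs_{v_i}(\nz')$ for the largest prefix surplus any scheme can deliver to the $k$ lowest values. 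I would establish (a) a clean upper bound $\opt_k\le B_k$ for an explicit quantity $B_k$ depending only on $\nd$, and (b) that the Split-and-Match scheme satisfies $\sum_{i\le k} f_{\nd}(v_i)\,\cs_{v_i}(\nzz)\ge B_k/4$; together these give the claim.

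For (a): only signals whose seller-optimal price $p^*$ lies at or below $v_k$ contribute to the prefix, and on such a signal $\ns_q$ the prefix contribution is $\gamma_q\sum_{p^*\le v_i\le v_k} f_{\ns_q}(v_i)(v_i-p^*)=\gamma_q\bigl(\E_{\ns_q}[v\,\mathds{1}[v\le v_k]]-p^*\Pr_{\ns_q}[p^*\le v\le v_k]\bigr)$. Using the seller's optimality ($p^*G_{\ns_q}(p^*)\ge v\,G_{\ns_q}(v)$ for all $v$, applied to the smallest support point above $v_k$) to lower-bound the subtracted revenue term, then summing over $q$ using Bayes plausibility $\sum_q\gamma_q\ns_q=\nd$ and the revenue floor $\sum_q\gamma_q\nr(\ns_q)\ge\nr^{\mathsf{My}}$, yields a $B_k$ roughly of the form ``$\sum_{i\le k}f_{\nd}(v_i)v_i$ minus a revenue-like term.'' Two points need care: signals whose support lies entirely below $v_k$ (these are themselves subinstances, to which the same bound applies recursively), and the fact that ``the next support point above $v_k$'' depends on the signal. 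A perhaps cleaner alternative is to observe that the achievable surplus vectors form a polytope, so $\opt_k$ is the value of an LP whose dual exhibits $B_k$ directly.

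For (b): by construction the Split-and-Match signals are two-point distributions $\{v_\ell,v_i\}$ with $v_\ell<v_i$, seller-priced at $v_\ell$, so buyer $v_i$ collects surplus $v_i-v_\ell$; the ``split'' step ensures that essentially all low-value mass that \emph{could} subsidize values $\le v_k$ is actually used for that purpose, and the ``match'' step pairs each high value with low values close enough that $v_i-v_\ell$ is a constant fraction of $v_i$. Tracking these effects, one expects to lose a factor $2$ because the delivered surplus $v_i-v_\ell$ is compared against the ``ideal'' $v_i$ (equivalently, against pricing at the infimum of the support), and a further factor $2$ because the greedy matching may strand some subsidizing mass against values $>v_k$ rather than $\le v_k$; together this gives $\sum_{i\le k}f_{\nd}(v_i)\,\cs_{v_i}(\nzz)\ge B_k/4$.

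The main obstacle is reconciling (a) with (b): the upper bound $B_k$ must be tight enough that the \emph{specific} greedy scheme output by the algorithm is within a factor $4$ of it for \emph{every} $k$ simultaneously, which forces $B_k$ to measure precisely the amount of low-value probability mass ``available'' to subsidize the bottom $k$ values — the very quantity the algorithm greedily consumes. I expect the bookkeeping for signals whose support straddles $v_k$ (part of the mass in the prefix, part above it) to be the most delicate point, both in proving the upper bound and in showing the algorithm does not waste subsidizing mass.
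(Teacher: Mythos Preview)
Your high-level skeleton (reduce to breakpoints $m=F_{\nd}(v_k)$, then sandwich with an explicit $B_k$) is exactly the paper's strategy, and the piecewise-linearity reduction is correct. However, both (a) and (b) have real gaps in how you propose to execute them.

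For (a), the paper's $B_k$ is precisely $V_k-\max_{i\le k}\bigl\{v_i\sum_{j=i}^k f_{\nd}(v_j)\bigr\}$, i.e.\ the prefix welfare minus the (unnormalized) Myerson revenue of the \emph{truncated} distribution $\overline{\nd}(v_k)$. Your route via ``the smallest support point above $v_k$'' and the global revenue floor $\sum_q\gamma_q\nr(\ns_q)\ge\nr^{\mathsf{My}}$ does not reach this: the global Myerson revenue is the wrong subtrahend for a per-prefix bound, and you yourself flag that the straddling-support issue is unresolved. The paper sidesteps all of this by truncating each signal $\ns_q$ at $v_k$, observing that the optimal price on the truncated signal can only go down (larger prices lose more revenue when high mass is removed), and then comparing the truncated-signal revenue to any fixed price $v_{i}$ on the truncated prior. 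This kills the signal-dependent ``next support above $v_k$'' entirely.

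For (b), your stated sources of the two factors of $2$ are not correct. The ratio $(v_i-v_\ell)/v_i$ is \emph{not} bounded below by any constant (take $v_\ell$ close to $v_i$), so ``surplus versus ideal $v_i$'' cannot be where a factor of $2$ comes from. The paper's argument hinges on a threshold index $i^*$: at the first moment the algorithm's taker pointer $\ell$ exceeds $k$, the giver constraint \eqref{eqn:bernoulli_1} is tight for all $i<i^*$ and the taker constraint \eqref{eqn:bernoulli_2} is tight for all $i\in(i^*,k]$. Using the equal-revenue identity $\tt_j(\vt_j-\vg_j)=\tg_j\vg_j$, tightness below $i^*$ gives $2\,\pfv(s_{\nzz},F_{\nd}(v_k))\ge V_{i^*-1}$, and tightness above $i^*$ (together with all givers having value $\le v_{i^*}$) gives $2\,\pfv(s_{\nzz},F_{\nd}(v_k))\ge V_k-V_{i^*-1}-v_{i^*}\sum_{j\ge i^*}^k f_{\nd}(v_j)$. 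Adding these two yields the factor $4$. Each factor of $2$ is exactly the half-split of mass into giver and taker halves; neither is about stranded mass above $v_k$ nor about $v_i-v_\ell$ versus $v_i$.
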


However, in order to achieve $\alpha$-majorization, we need to approximate the optimal \emph{sorted} prefix sum (rather than the optimal integration prefix sum). In \cref{sec:iron}, we show an ironing process that transforms the surplus-mass function $s_{\nzz}(x)$ into a monotonically increasing step function $\ts(x)$, while preserving the integration prefix sum of $s_{\nzz}$ at any point of discontinuity. Based on $\nzz$, we then construct a monotone signaling scheme $\nzf$ such that the surplus-mass function of $\nzf$ is exactly half of $\ts$:

\begin{lemma} \label{thm:smooth}
$\forall x \in (0,1], s_{\nzf}(x)=\frac{1}{2}\cdot \ts(x).$
\end{lemma}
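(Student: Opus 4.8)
The plan is to prove the identity $s_{\nzf}(x)=\tfrac12\ts(x)$ directly from the explicit construction of $\nzf$ out of $\nzz$, in three moves: first pin down the shape of the ironed function $\ts$; then check that $\nzf$ is a legitimate signaling scheme; then compute the expected consumer surplus of each value under $\nzf$ from \cref{eqn:scheme_surplus} and match it against $\tfrac12\ts$.

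First I would record what the ironing does to the curve $m\mapsto\pfv(s_{\nzz},m)$. This curve is piecewise linear with breakpoints only among the grid points $F_\nd(v_i)$ and slope $\cs_{v_i}(\nzz)$ on $(F_\nd(v_{i-1}),F_\nd(v_i)]$. Since the ironing returns a non-decreasing step function $\ts$ whose integration prefix sum agrees with $\pfv(s_{\nzz},\cdot)$ at every discontinuity of $\ts$, the function $m\mapsto\pfv(\ts,m)$ is the lower convex envelope of $m\mapsto\pfv(s_{\nzz},m)$; its breakpoints are therefore a subset of the grid, so the values $v_1<\dots<v_n$ are partitioned into maximal consecutive \emph{blocks} $B_1<\dots<B_r$ (the maximal intervals on which the envelope is linear). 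On a block $B$ the envelope is a chord of $\pfv(s_{\nzz},\cdot)$, so $\ts$ equals the constant $\bar c_B:=\big(\sum_{v_k\in B}f_\nd(v_k)\,\cs_{v_k}(\nzz)\big)\big/\big(\sum_{v_k\in B}f_\nd(v_k)\big)$, and $\bar c_{B_1}\le\cdots\le\bar c_{B_r}$. Equivalently, $\ts(x)=\bar c_{B(i)}$ for $x\in(F_\nd(v_{i-1}),F_\nd(v_i)]$, where $B(i)$ is the block containing $v_i$; this block description of $\ts$ is all the rest of the argument needs.

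Next I would verify that $\nzf$ is a genuine signaling scheme in the sense of \cref{def:sig}: its signals have non-negative mass, its weights are non-negative and sum to one, and $\sum_q\gamma_q\ns_q=\nd$. Since $\nzf$ is built by a mass-preserving reorganization of the signals of $\nzz$, Bayes plausibility reduces to checking that the total mass eventually assigned to each individual value $v_i$ is unchanged, which follows from the block-wise mass preservation of the reorganization together with the Bayes plausibility of $\nzz$. At the same stage I would check that each resulting two-point signal is still efficient — its optimal posted price is the smaller of its two support values — so that whenever a signal $\ns_q$ of $\nzf$ contains $v_i$, the surplus $\cs_{v_i}(\ns_q)$ is exactly $v_i$ minus the smaller support value of $\ns_q$. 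This is what makes \cref{eqn:scheme_surplus} amenable to a closed-form evaluation.

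The heart of the proof is then the computation of $\cs_{v_i}(\nzf)$ for $v_i$ in a block $B$: expanding $\cs_{v_i}(\nzf)=\sum_q\cs_{v_i}(\ns_q)\,\gamma_q f_{\ns_q}(v_i)/f_\nd(v_i)$ over the signals of $\nzf$ meeting $v_i$, one shows that the construction arranges (a) the \emph{total} surplus generated among the values of $B$ is precisely $\tfrac12\sum_{v_k\in B}f_\nd(v_k)\,\cs_{v_k}(\nzz)=\tfrac12\,\bar c_B\sum_{v_k\in B}f_\nd(v_k)$, and (b) this total is shared among the values of $B$ \emph{uniformly per unit of mass}, so that each $v_i\in B$ receives expected surplus $\tfrac12\bar c_B=\tfrac12\ts(x)$ on its interval, as claimed. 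Item (a) is exactly where the factor $\tfrac12$ is spent, and it is where the \emph{equality} (not just an inequality) in the prefix-sum-preservation property of the ironing is used — it certifies that the surplus budget available inside each block is $\bar c_B$ times the block mass rather than merely bounded by it. I expect item (b), the exact per-mass equalization within a block while keeping every signal two-point and efficient, to be the main obstacle: equalizing surpluses inside a block forces a cascade of reassignments among that block's values, and one has to check this cascade closes up cleanly at the value $\tfrac12\bar c_B$ rather than at something only approximately equal. Once (a) and (b) are established the identity holds for all $x\in(0,1]$, and since $\ts$ — hence $s_{\nzf}=\tfrac12\ts$ — is non-decreasing, $\nzf$ is monotone as a free by-product.
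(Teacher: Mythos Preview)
Your proposal is a plan rather than a proof, and the plan misidentifies both the mechanism and the hard step. You treat $\nzf$ as a black box and then argue that if the lemma holds it must satisfy your items (a) and (b); but (a) and (b) are consequences of the conclusion, not ingredients of the argument. The paper does not build $\nzf$ by a block-wise surplus-conserving reorganization that ``shares the total uniformly per unit of mass.'' Instead it proceeds in two quite different steps: first a \emph{smoothing} algorithm turns $\nzz$ into an intermediate scheme $\nzm$ with the one-sided bound $s_{\nzm}(x)\ge \tfrac12\ts(x)$, and only then a trivial \emph{trimming} step (splitting off weight from equal-revenue binary signals into singletons, which contribute zero surplus) cuts any excess down to exactly $\tfrac12\ts(x)$. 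So the exact equalization you flag as ``the main obstacle'' is in fact the easy part; the content lies entirely in the inequality.

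The inequality, in turn, is not proved by anything like your (a). Smoothing does \emph{not} preserve the block's total surplus: inside each ironing interval $I_t$ one pairs rectangles $a_y^+$ (above $\ts_t$, corresponding to a rich value $v_y^+$) with $a_y^-$ (below $\ts_t$, a poor value $v_y^-$) of equal area, and for the poor pairs with $h_y^->\ts_t/2$ one strips a carefully chosen fraction of the \emph{giver} mass from the rich buyer's equal-revenue signals and rebuilds new equal-revenue binaries with $v_y^-$ as taker. The factor $\tfrac12$ is forced by the Split-and-Match invariant that at most half of $f_\nd(v)$ is ever used as taker mass (and at most half as giver mass): this is what makes the mass bookkeeping in the feasibility lemma close and what caps the guaranteed surplus at $\ts_t/2$ rather than $\ts_t$. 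Your proposal has no counterpart to this rectangle pairing or to the feasibility accounting, and your explanation of where the $\tfrac12$ comes from (``spent'' in a block-surplus identity) does not match the actual source.
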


Combining~\cref{thm:value_pr,thm:smooth} leads to our main result:
\begin{theorem}\label{thm:main}
$\nzf$ is efficient, monotone, and 8-majorized by any other signaling scheme. 
\end{theorem}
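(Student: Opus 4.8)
The plan is to derive all three properties of $\nzf$ by combining \cref{thm:value_pr,thm:smooth} with two elementary facts about sorted prefix sums and convex interpolation; \textbf{monotonicity} and \textbf{efficiency} are quick, and the effort goes into $8$-majorization. For monotonicity, note that the ironed function $\ts$ built in \cref{sec:iron} is monotonically increasing by construction, so \cref{thm:smooth} makes $s_{\nzf}=\tfrac12\ts$ monotonically increasing on $(0,1]$; since the segments of any surplus-mass function occur in increasing order of value (\cref{def:s_function}), this is exactly $\cs_{v_1}(\nzf)\le\cdots\le\cs_{v_n}(\nzf)$, i.e.\ $\nzf$ is monotone. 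For efficiency, I would appeal to the explicit construction of $\nzf$ in \cref{sec:iron}: each of its signals has a posterior supported on at most two values and is arranged so that the seller's revenue-optimal price is the lowest support value, hence the good always sells.

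For \textbf{$8$-majorization}, fix an arbitrary scheme $\nz'$ and $m\in(0,1]$; the goal is $8\,\pf(s_{\nzf},m)\ge\pf(s_{\nz'},m)$. Since $s_{\nzf}$ is monotone increasing, its sorted rearrangement is itself, so $\pf(s_{\nzf},m)=\pfv(s_{\nzf},m)=\tfrac12\,\pfv(\ts,m)$ by \cref{thm:smooth}, and it suffices to show $4\,\pfv(\ts,m)\ge\pf(s_{\nz'},m)$. Let $0=m_0<m_1<\cdots<m_k=1$ be the discontinuities of the step function $\ts$ together with the endpoints, and take $j$ with $m_j\le m\le m_{j+1}$. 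Between consecutive $m_i$'s the step function $\ts$ is constant, so $\pfv(\ts,\cdot)$ is affine on $[m_j,m_{j+1}]$; and since the ironing preserves the integration prefix sum at each $m_i$ (trivially at $m_0=0$, and at $m_k=1$ by conservation of total surplus under ironing), $\pfv(\ts,m)$ equals the value at $m$ of the chord of $g:=\pfv(s_{\nzz},\cdot)$ through $\bigl(m_j,g(m_j)\bigr)$ and $\bigl(m_{j+1},g(m_{j+1})\bigr)$. Writing $m=\lambda m_j+(1-\lambda)m_{j+1}$ with $\lambda\in[0,1]$, two applications of \cref{thm:value_pr} give
\[
\pfv(\ts,m)=\lambda\,g(m_j)+(1-\lambda)\,g(m_{j+1})\ \ge\ \tfrac14\bigl(\lambda\,\pfv(s_{\nz'},m_j)+(1-\lambda)\,\pfv(s_{\nz'},m_{j+1})\bigr).
\]
On the other side, $m\mapsto\pf(s_{\nz'},m)=\int_0^m (s_{\nz'})_{\mathrm{sorted}}$ is convex (its derivative is the non-decreasing function $(s_{\nz'})_{\mathrm{sorted}}$) and satisfies $\pf(s_{\nz'},\cdot)\le\pfv(s_{\nz'},\cdot)$ pointwise (taking the leftmost mass is one feasible choice in the minimization defining $\pf$), so convexity yields
\[
\pf(s_{\nz'},m)\le\lambda\,\pf(s_{\nz'},m_j)+(1-\lambda)\,\pf(s_{\nz'},m_{j+1})\le\lambda\,\pfv(s_{\nz'},m_j)+(1-\lambda)\,\pfv(s_{\nz'},m_{j+1}).
\]
Chaining the two displays gives $4\,\pfv(\ts,m)\ge\pf(s_{\nz'},m)$, hence $8\,\pf(s_{\nzf},m)\ge\pf(s_{\nz'},m)$; as $\nz'$ and $m$ were arbitrary, $\nzf$ is $8$-majorized, and with \cref{prop:majorize_concave} it $8$-approximates every welfare function covered there.

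Given \cref{thm:value_pr,thm:smooth}, the argument above is essentially bookkeeping, so I expect the real obstacles to sit inside those two lemmas --- engineering \algname so that $\nzz$ $4$-approximates integration prefix sums, and designing the ironing so that \emph{halving} the ironed curve is realizable by a genuine (efficient, monotone) signaling scheme, which is also where the second factor of $2$ is spent, since $\ts$ itself need not be realizable. Within the present combination, the single delicate point is the interface with \cref{sec:iron}: one must know that $\pfv(\ts,\cdot)$ is exactly the piecewise-linear interpolant of $\pfv(s_{\nzz},\cdot)$ at the anchor points $\{m_i\}$ --- equivalently, the lower convex envelope of $\pfv(s_{\nzz},\cdot)$ --- which is precisely what being monotonically increasing and prefix-sum-preserving at its discontinuities buys.
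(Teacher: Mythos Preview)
Your proposal is correct and follows essentially the same approach as the paper: both argue efficiency from the two-point/singleton structure of the signals, monotonicity from $s_{\nzf}=\tfrac12\ts$ being increasing, and $8$-majorization by interpolating between the break points of $\ts$ (where $\pfv(\ts,\cdot)=\pfv(s_{\nzz},\cdot)$), invoking \cref{thm:value_pr} at those anchors, and combining with convexity of $\pf(s_{\nz'},\cdot)$ and the pointwise bound $\pf\le\pfv$. The only cosmetic difference is that the paper runs the chain of inequalities starting from $\pf(s_{\nz'},m)$ and working up to $8\,\pf(s_{\nzf},m)$, whereas you split it into two displays and meet in the middle; the logical content is identical.
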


\subsection{Construction of $\nzz$}\label{sec:vprefix}

In this section, we construct the signaling scheme $\nzz$. 
The main idea is to decompose any given $\nd$ into only two types of posterior distributions, which we refer to as \emph{singleton} and \emph{equal-revenue binary} signals.

\begin{definition}[Singleton Signal]
A signal $\ns=\nss{v_i}$ is said to be a \emph{singleton signal} on $v_i$ if it satisfies $f_{\ns}(v) = \mathds{1}[v = v_i]$.
\end{definition}

\begin{definition}[Equal-Revenue Binary Signal]\label{def:bin_sig}
A signal $\ns=\ns_{v_i, v_j}^{E}$ is said to be an \emph{equal-revenue binary signal} on $v_i < v_j$ if it satisfies:
\begin{equation*}
f_{\ns}(v)=\begin{cases}
1-\frac{v_i}{v_j} & v=v_i;\\
\frac{v_i}{v_j} & v=v_j;\\
0 & v\notin \{v_i,v_j\}.
\end{cases}
\end{equation*}
\end{definition}
Note that if the seller receives $\ns_{v_i, v_j}^{E}$, then posting a price of either $v_i$ and $v_j$ leads to the same revenue (hence ``equal-revenue'').
We assume that the equal-revenue binary signals are indexed from $1$ to $Q_1$, where for each $q \in [Q_1]$, we have the signal $\ns_q = \ns_{v_q^1, v_q^2}^\mathrm{E}$. We call the higher value $v_q^2$ \emph{taker} and the lower value $v_q^1$ \emph{giver}. The masses on them (i.e. $\gamma_q  f_{\ns_q}(v_q^2)$ and $\gamma_q f_{\ns_q}(v_q^1)$ respectively) are called \emph{taker mass} and \emph{giver mass} respectively. 
Using these definitions, we can describe our \algname Algorithm in \cref{alg:match}.

\begin{algorithm}[h]
\SetKwComment{Comment}{[}{]}
\SetKwInput{KwData}{Input}
\SetKwInput{KwResult}{Output}
\caption{\algname Algorithm}\label{alg:match}
\KwData{Distribution $\nd=\{v_i,f_{\nd}(v_i)\}_{i\in [n]}$.}
\KwResult{A signaling scheme $\nzz=\left\{\ns_q, \gamma_q\right\}_{q\in [Q]}$.}
\textbf{Initialize:} $\nzz \gets \varnothing;$ $q\gets 0; (\mg_i,\mt_i)\gets \left(\frac{1}{2}\cdot f_{\nd}(v_i),\frac{1}{2}\cdot f_{\nd}(v_i)\right)\,\forall\,i\in[n]$\;
\Repeat{no such $(s, l)$ exists}{
    $q\gets q+1$\;
    Find the smallest $s\in [n]$ such that $\mg_s>0$\;
    Find the smallest $\ell > s$ such that $\mt_\ell>0$\label{alg_line:find_ell}\;
    Set $\ns_q = \ns^\mathrm{E}_{v_s,v_\ell}$ and $\gamma_q =\min\{\mg_s\big/\big(1-\frac{v_s}{v_\ell}\big), \mt_\ell \big/\big(\frac{v_s}{v_\ell}\big)\}$\;
    Add signal $(\ns_q,\gamma_q)$ to $\nzz$\;
    Update $\mg_s\gets \mg_s-\gamma_q\cdot(1-\frac{v_s}{v_\ell});\ \mt_\ell\gets \mt_\ell-\gamma_q\cdot\frac{v_s}{v_\ell}$\;
}
Cover all remaining masses using singleton signals and add them to $\nzz$.  
\end{algorithm}

To understand this construction, first note that by the Bayes plausibility of a signaling scheme, we have the following set of linear constraints on any equal-revenue binary signal:
\begin{equation} \label{eq:balance}
    \forall\, i \in [n], \ \sum_{q\in Q_1: v_q^1=v_i} \big(1 - \frac{v_i}{v_j}\big) \cdot \gamma_q + \sum_{q \in Q_1: v_q^2 = v_i} \big(\frac{v_i}{v_j}\big) \cdot \gamma_q \le f_{\nd}(v_i).
\end{equation}
In the construction of $\nzz$, we strengthen these constraints into the following:
\begin{gather}
    \forall\, i \in [n], \ \sum_{q\in Q_1: v_q^1=v_i} \big(1 - \frac{v_i}{v_j}\big) \cdot \gamma_q \le \frac{1}{2} \cdot f_{\nd}(v_i);\label{eqn:bernoulli_1}\\
    \forall\, i \in [n], \ \sum_{q \in Q_1: v_q^2 = v_i} \big(\frac{v_i}{v_j}\big) \cdot \gamma_q \le \frac{1}{2} \cdot f_{\nd}(v_i).\label{eqn:bernoulli_2}
\end{gather}
We conduct a greedy process to find a solution satisfying the strengthened constraints. We iteratively find the smallest index $s$ such that \cref{eqn:bernoulli_1} is slack for $s$ and the smallest index $\ell\, (\ell > s)$ such that \cref{eqn:bernoulli_2} is slack for $\ell$. We add a maximal equal-revenue binary signal with supports $v_{s}$ and $v_{\ell}$ so that one of the two constraints becomes tight. We iterate until no such pair of $(i_s, i_\ell)$ exists. \cref{fig:matching} illustrates the construction of our signaling scheme $\nzz$ obtained by running \cref{alg:match} on some given $\nd$.


\begin{figure}[htbp]
    \centering  \includegraphics{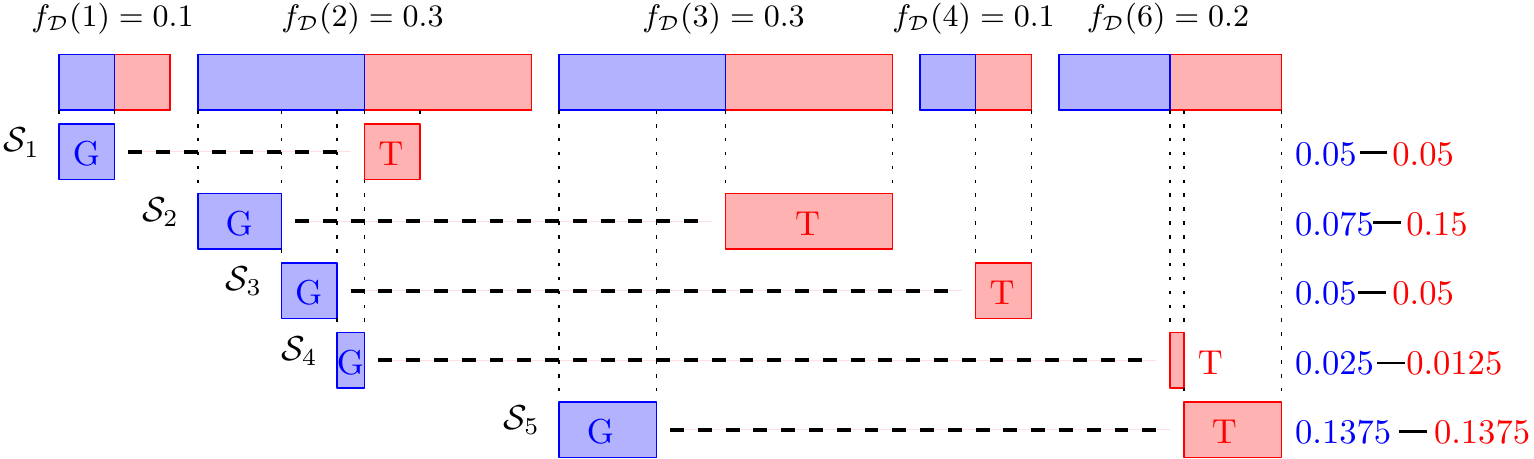}
    \caption{\small\it Illustrating the construction of $\nzz$ using \cref{alg:match}: The buyer values are $\langle 1,2,3,4,6\rangle$ with distribution $\nd=\langle 0.1, 0.3, 0.3, 0.1, 0.2\rangle$, indicated using scaled rectangles in the first row. Each subsequent row corresponds to an equal-revenue binary signal $\ns_q=\ns_{v_i, v_j}^{E}$ with weight $\gamma_q$. The letters $\mathrm{G}$ and $\mathrm{T}$ together with the blue and red rectangles represent the giver and the taker corresponding to each signal, while the blue and red numbers on the far right of each row are the giver and taker masses ($\gamma_qf_{\ns_q}(v_i),\gamma_qf_{\ns_q}(v_j)$); these are also illustrated by the lengths of the rectangles in that row. (Again, we can recover each signal and its weight by normalizing the numbers. For example, signal $\ns_1$ satisfies $f_{\ns_1}(1)=f_{\ns_1}(2)=0.5$ and has weight $\gamma_1=0.1$.)
}
    \label{fig:matching}
\end{figure}
\subsection{Approximating $\pfv$ via \algname  (Proof of \cref{thm:value_pr})}

Recall for any $k$, $\mks=\sum_{i=1}^k f_{\nd}(v_i)$ denotes the total population of buyers with value at most $v_k$. 
We now prove \cref{thm:value_pr} in two steps: First, in \cref{sec:upper_truncate}, we show an upper bound on $\pfv\big(s_{\nz'}, \mks\big)$ for any signal $\nz'$. This generalizes the corresponding bound of \citet*{Bergemann15} to a sub-population. Next, in \cref{sec:proof_of_lowerbound}, we show that the $\pfv$ values of $\nzz$ approximately achieve this upper bound.

\subsubsection{Bounding $\pfv$ via the Surplus of Truncated Distributions}\label{sec:upper_truncate}

As a thought experiment, we restrict our attention to the subset of buyers with the smallest $k$ values. What is the maximum possible consumer surplus on this sub-population? In \cref{lem:pf_bound}, we show that it is upper bounded by the total values in the sub-population, minus the revenue extractable from this sub-population without signaling. We need the following definitions to present the proof.
\begin{definition}[Truncated Distribution]
Given distribution $\nd$ with finite support $S\subset \mathbb{R}_{\ge 0}$ and any $x\in S$, the truncated distribution of $\nd$ on $x$, denoted by $\overline{\nd}(x)$, satisfies:
\begin{equation*}
f_{\overline{\nd}(x)}(v)=\begin{cases}
\frac{f_{\nd}(v)}{F_{\nd}(x)} & v\in S \mbox{ and } v\le x;\\
0 & \mbox{Otherwise}.
\end{cases}
\end{equation*}
\end{definition}

\begin{definition}[Surplus Prefix Sum]
\label{def:welfare_prefix}
Given any $k\in[n]$, the \emph{$k$-surplus prefix sum} of the buyers is 
\[
V_k=\sum_{i=1}^k v_i\cdot f_{\nd}(v_i).
\]
\end{definition}




\begin{lemma}\label{lem:pf_bound}
For any $k\in [n]$ and any signaling scheme $\nz'$, we have $$
\pfv\big(s_{\nz'}, \mks\big)\le V_k-\max_{i\in[k]}\left\{v_{i}\cdot \sum_{j=i}^k f_{\nd}(v_{j})\right\}.
$$
\end{lemma}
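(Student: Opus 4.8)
The plan is to prove \cref{lem:pf_bound} by reducing it to a ``per-signal'' inequality. First, since the surplus-mass function $s_{\nz'}$ is constant and equal to $\cs_{v_i}(\nz')$ on the interval $\big(F_{\nd}(v_{i-1}),F_{\nd}(v_i)\big]$ of length $f_{\nd}(v_i)$, we have $\pfv\big(s_{\nz'},F_{\nd}(v_k)\big)=\sum_{i=1}^k f_{\nd}(v_i)\cs_{v_i}(\nz')$. Expanding each $\cs_{v_i}(\nz')$ using \cref{def:average_surplus} and swapping the order of summation rewrites this as $\sum_{q\in[Q]}\gamma_q\sum_{i=1}^k f_{\ns_q}(v_i)\cs_{v_i}(\ns_q)$ --- a $\gamma$-weighted sum, over signals, of each signal's consumer surplus restricted to the ``bottom-$k$'' sub-population $\{v_1,\dots,v_k\}$.

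The heart of the argument is the following per-signal claim: for every signal $\ns_q$, writing $p:=p^{*}_{\ns_q}$, and for \emph{every} index $i^{*}\in[k]$,
\[
\sum_{i=1}^k f_{\ns_q}(v_i)\cs_{v_i}(\ns_q)\ \le\ \sum_{i=1}^k f_{\ns_q}(v_i)\,v_i\ -\ v_{i^{*}}\sum_{j=i^{*}}^k f_{\ns_q}(v_j).
\]
Using $\cs_{v_i}(\ns_q)=\mathds{1}[v_i\ge p]\,(v_i-p)$ and the fact that $i\le k\iff v_i\le v_k$, a short rearrangement shows this is equivalent to
\[
v_{i^{*}}\sum_{j=i^{*}}^k f_{\ns_q}(v_j)\ \le\ \sum_{i:\,v_i<p,\ v_i\le v_k} v_i\, f_{\ns_q}(v_i)\ +\ p\sum_{i:\,p\le v_i\le v_k} f_{\ns_q}(v_i).
\]
I would verify this in two cases. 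If $p>v_{i^{*}}$: drop nothing, and split the mass of $\ns_q$ on $[v_{i^{*}},v_k]$ into the part with $v_i<p$ and the part with $v_i\ge p$. On the first part every value is $\ge v_{i^{*}}$, so $v_{i^{*}}$ times that mass is at most the value-weighted mass there, which is bounded by the first term on the right; on the second part every value is $\ge p>v_{i^{*}}$, so $v_{i^{*}}$ times that mass is at most $p$ times that mass, bounded by the second term. If $p\le v_{i^{*}}$ (hence $p\le v_{i^{*}}\le v_k$): discard the nonnegative first term and instead invoke revenue-optimality of $p$, namely $p\cdot G_{\ns_q}(p)\ge v_{i^{*}}\cdot G_{\ns_q}(v_{i^{*}})$; subtracting $p\cdot\Pr_{\ns_q}[v>v_k]\le v_{i^{*}}\cdot\Pr_{\ns_q}[v>v_k]$ from the two sides gives $p\sum_{i:\,p\le v_i\le v_k}f_{\ns_q}(v_i)\ \ge\ v_{i^{*}}\sum_{j=i^{*}}^k f_{\ns_q}(v_j)$, which is exactly what is needed.

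Finally, I would take the $\gamma$-weighted sum of the per-signal inequality over $q\in[Q]$. By Bayes plausibility, $\sum_q\gamma_q f_{\ns_q}(v)=f_{\nd}(v)$, so the right-hand side collapses to $V_k-v_{i^{*}}\sum_{j=i^{*}}^k f_{\nd}(v_j)$, yielding $\pfv\big(s_{\nz'},F_{\nd}(v_k)\big)\le V_k-v_{i^{*}}\sum_{j=i^{*}}^k f_{\nd}(v_j)$ for each $i^{*}\in[k]$; taking the maximum over $i^{*}$ gives the statement. I expect the main obstacle to be getting the per-signal inequality exactly right, and in particular the $p\le v_{i^{*}}$ case: here one must combine revenue-optimality of the seller's posted price with the correction term for buyers above $v_k$ (who lie outside the sub-population but still purchase), and keep the complementary-CDF bookkeeping valid by exploiting that $p\le v_{i^{*}}\le v_k$.
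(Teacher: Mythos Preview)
Your proof is correct. Both your argument and the paper's reduce to a per-signal inequality and then sum over signals using Bayes plausibility, but the way you establish the per-signal bound is different. The paper introduces the \emph{truncated} posterior $\hat{\ns}_q=\overline{\ns_q}(v_k)$, argues that its optimal price $p^*_{\hat{\ns}_q}$ is at most $p^*_{\ns_q}$ (truncating mass above $v_k$ hurts high prices more), replaces $p^*_{\ns_q}$ by $p^*_{\hat{\ns}_q}$ to upper-bound each buyer's surplus, and then invokes optimality of $p^*_{\hat{\ns}_q}$ on the truncated distribution to compare against $v_{i_k^*}$. You instead bypass truncation entirely with a direct two-case split on $p$ versus $v_{i^*}$: the case $p>v_{i^*}$ is handled by elementary monotonicity, and the case $p\le v_{i^*}$ uses revenue optimality of $p$ on the \emph{full} posterior together with the correction $p\cdot\Pr[v>v_k]\le v_{i^*}\cdot\Pr[v>v_k]$. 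Your route is a bit more elementary (no auxiliary truncated-distribution lemma), at the cost of the case distinction; the paper's route is more uniform and makes the ``restrict to the bottom-$k$ market'' intuition explicit.
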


\begin{proof}
Suppose that $\nz'=\{(\ns_q, \gamma_q)\}_{q\in[Q]}$. We have 
\begin{align*}
\pfv\big(s_{\nz'}, \mks\big) 
&=\sum_{i=1}^k f_{\nd}(v_i)\cdot \left(\sum_{q=1}^Q \frac{\gamma_q\cdot f_{\ns_q}(v_i)}{f_{\nd}(v_i)}\cdot \mathds{1}[v_i\ge p_{\ns_q}^*]\cdot (v_i-p_{\ns_q}^*)\right).
\end{align*}


Let $\hat{\ns}_q=\overline{\ns_q}(v_k)$ be the truncated distribution on the buyers with value at most $v_k$. Denote the optimal price of $\hat{\ns}_q$ by ${p}_{\hat{\ns}_q}^*$. We now claim that ${p}_{\hat{\ns}_q}^*\le p_{\ns_q}^*$. First assume $p_{\ns_q}^* \le v_k$, otherwise the statement is trivial. Next, to find the optimal price, we can ignore the scaling factor $1/F_{\nd}(v_k)$ (since it scales up the revenue for each price by the same amount).
Thus, when we truncate at $v_k$, we can view it as removing some probability mass $\mu_k$ beyond $v_k$.  Note that for any price $p$, the decrease in revenue is $p \cdot \mu_k$. This means larger prices suffer larger drops in revenue, i.e., the new optimal price ${p}_{\hat{\ns}_q}^*$ cannot be larger. 



Moreover, since $(v_i - p)$ is monotonically decreasing as a function of $p$, substituting ${p}_{\hat{\ns}_q}^*$ for $p_{\ns_q}^*$ in the above equality, we have: 
\begin{align*}
\pfv\big(s_{\nz'}, \mks\big)&  \le
 \sum_{q=1}^Q \left(\sum_{i\in[k]:v_i\ge p_{\hat{\ns}_q}^*}  \gamma_q\cdot f_{\ns_q}(v_i)\cdot (v_i-p_{\hat{\ns}_q}^*)\right) \\
  &\le \sum_{q=1}^Q \left( \sum_{i\in [k]}\gamma_q\cdot f_{\ns_q}(v_i)\cdot v_i\right)-\sum_{q=1}^Q \gamma_q\cdot \left(\sum_{i\in[k]:v_i\ge p_{\hat{\ns}_q}^*} f_{\ns_q}(v_i)\cdot p_{\hat{\ns}_q}^*\right).\\
\end{align*}
Let $i_k^*=\argmax_{i\in[k]}\left\{v_{i}\cdot \sum_{i'=i}^k f_{\nd}(v_{i'})\right\}$. Since $p_{\hat{\ns}_q}^*$ is the optimal price on $\hat{S}_q$, we have 
$$p_{\hat{\ns}_q^*} \sum_{i\in[k]:v_i\ge p_{\hat{\ns}_q}^*} f_{\ns_q}(v_i)\ge v_{i_k^*} \sum_{i\in[k]:v_i\ge v_{i_k^*}} f_{\ns_q}(v_i)$$ 
for any $q\in [Q]$. Plugging this into the previous inequality, we finally have:
\begin{align*}
 \pfv\big(s_{\nz'}, \mks\big)&\le 
  \sum_{i=1}^k \left( v_i\cdot\sum_{q=1}^Q   \gamma_q\cdot f_{\ns_q}(v_i)\right)-\sum_{q=1}^Q \gamma_q\cdot \left(\sum_{i\in[k]:v_i\ge v_{i_k^*}} f_{\ns_q}(v_i)\cdot v_{i_k^*}\right) \\
 &= \sum_{i=1}^{k} v_i\cdot f_{\nd}(v_i)-v_{i_k^*}\cdot \sum_{i =i_k^*}^k f_{\nd}(v_i)\\
 &= V_k-\max_{i\in[k]}\left\{v_{i}\cdot \sum_{i'=i}^k f_{\nd}(v_{i'})\right\}. \qedhere
\end{align*}
\end{proof}

\subsubsection{Approximating Prefix Sums via the \algname Algorithm}
\label{sec:proof_of_lowerbound}
By \cref{def:s_function}, for any signaling scheme $\nz$ and any $k\in [n]$, $s_{\nz}$ is constant on the interval $(F_{\nd}(v_{k-1}), F_{\nd}(v_{k})]$. Therefore, $\pfv(\nz, m)$ is a linear function of $m$ on the interval $(F_{\nd}(v_{k-1}), F_{\nd}(v_{k})]$. Therefore, to prove \cref{thm:value_pr}, it suffices to show it when $m=\mks$ for $k\in [n]$. Moreover, we can further replace $\pfv(s_{\nz'},\mks)$ with the upper bound we obtained \cref{lem:pf_bound} (i.e, with the maximum consumer surplus of truncated distributions). Thus we can obtain \cref{thm:value_pr} as an immediate consequence of the following lemma.

\begin{lemma}\label{thm:value_p}
Let $\nzz$ be the signaling scheme returned by~\cref{alg:match} for a given $\nd$. Then for any signaling scheme $\nz'$ and any $k\in [n]$, we have 
$$
4\cdot \pfv(s_{\nzz}, \mks)\ge V_k-\max_{i\in[k]}\left\{v_{i}\cdot \sum_{i'=i}^k f_{\nd}(v_{i'})\right\}.
$$
\end{lemma}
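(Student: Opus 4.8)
The plan is to establish a matching lower bound on $\pfv(s_{\nzz}, F_{\nd}(v_k))$ that recovers, up to a factor of $4$, the quantity $V_k - \max_{i \in [k]}\{v_i \sum_{i'=i}^k f_{\nd}(v_{i'})\}$ produced by \cref{lem:pf_bound}. The factor of $4$ should decompose as $2 \times 2$: one factor of $2$ because \cref{alg:match} only uses half of each $f_{\nd}(v_i)$ as giver mass and half as taker mass (the strengthened constraints \eqref{eqn:bernoulli_1}--\eqref{eqn:bernoulli_2}), and a second factor of $2$ from the equal-revenue structure of the binary signals. Concretely, in an equal-revenue binary signal $\ns^{\mathrm E}_{v_s, v_\ell}$ the seller is indifferent between $v_s$ and $v_\ell$, and since ties are broken toward the lowest price the seller posts $v_s$; the buyer with value $v_\ell$ then gets surplus $v_\ell - v_s$. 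The giver $v_s$ contributes mass but gets zero surplus, and the taker $v_\ell$ gets surplus on its taker mass $\gamma_q \frac{v_s}{v_\ell}$, which is exactly $\gamma_q\frac{v_s}{v_\ell}(v_\ell - v_s)$.

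First I would fix $k$ and split the signals of $\nzz$ into those that contribute to the first $k$ coordinates and those that do not; only equal-revenue binary signals with taker index $\le k$ (and singleton signals on $v_i$, $i \le k$, which give zero surplus) matter for $\pfv(s_{\nzz}, F_{\nd}(v_k))$. So the relevant surplus is $\sum_{q : v_q^2 = v_\ell, \ell \le k} \gamma_q \frac{v_q^1}{v_\ell}(v_\ell - v_q^1)$. Next, I would analyze the greedy matching restricted to indices $\le k$: because the algorithm always pairs the smallest available giver with the smallest available taker above it, the matching is "laminar/nested" in a way that lets us track how much giver mass and taker mass among the first $k$ values remains unmatched when the process (restricted to this sub-population) stalls. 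Let $i^*_k$ be the argmax from \cref{lem:pf_bound}. The key structural claim I'd aim for: after running the algorithm, within the first $k$ values, either (a) all taker mass on $v_{i^*_k}, \dots, v_k$ has been matched to givers below $i^*_k$ — in which case each such matched unit of taker mass on $v_\ell$ generates surplus $v_\ell - v_q^1 \ge v_\ell - v_{i^*_k - 1}$, which I'd need to lower-bound appropriately — or (b) some taker mass remains unmatched, which forces all giver mass on $v_1, \dots, v_{i^*_k}$ to have been exhausted, giving a different handle on the surplus. This dichotomy, combined with the definition of $i^*_k$ as maximizing $v_i \sum_{i'=i}^k f_{\nd}(v_{i'})$, should yield the bound.

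The main obstacle I expect is the middle of that dichotomy: controlling the surplus generated by the matching when the givers are "spread out" below $i^*_k$ — a taker $v_\ell$ matched to a very small giver $v_1$ generates almost all of $v_\ell$ as surplus, but a taker matched to a giver just below it generates little. The greedy rule (smallest giver first) is exactly what helps here: it guarantees that takers get matched to the *smallest* possible givers, so the surplus per unit of taker mass is as large as the structure allows. I would want a clean accounting argument — perhaps a charging scheme or an exchange/averaging argument over the matched pairs — showing that the total surplus $\sum \gamma_q \frac{v_q^1}{v_\ell}(v_\ell - v_q^1)$ on takers $\le k$ is at least $\frac14\big(V_k - v_{i^*_k}\sum_{i'=i^*_k}^k f_{\nd}(v_{i'})\big)$. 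A natural route is to compare against an "idealized" matching that moves all of $\frac12 f_{\nd}(v_i)$ of taker mass for $i > i^*_k$ down to givers at or below $i^*_k$ and all remaining mass into singletons, verify the greedy output dominates it coordinate-wise in the relevant prefix-sum sense, and then check the idealized quantity algebraically against the right-hand side. The equal-revenue weighting factor $\frac{v_q^1}{v_\ell}$ inside $\gamma_q$ is the fiddly part of that algebra, and reconciling it with the $V_k$ term is where I'd expect to spend the most effort.
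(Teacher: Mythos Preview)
Your $2\times 2$ intuition is right, and the surplus accounting you sketch is close to what is needed. The genuine gap is in your choice of pivot index. You take $i^*_k$ to be the \emph{argmax} from \cref{lem:pf_bound} and then try to run a case split on whether the taker mass above $i^*_k$ is fully matched. This is the source of the obstacle you correctly anticipate: there is no reason the algorithm's matching should align with that particular index, so you end up having to control ``spread out'' givers with no structural foothold.

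The paper sidesteps this entirely by letting the algorithm pick the pivot. Freeze the execution of \cref{alg:match} at the first moment the taker index $\ell$ equals $k+1$, and let $i^*$ be the current giver index $s$ (equivalently, the smallest index whose giver constraint \eqref{eqn:bernoulli_1} is still slack). At this moment, \emph{both} of the following hold simultaneously, with no dichotomy: every giver budget on $v_1,\dots,v_{i^*-1}$ has been exhausted into takers with index $\le k$, and every taker budget on $v_{i^*+1},\dots,v_k$ has been exhausted by givers with index $\le i^*$. These two facts give two separate lower bounds on $2\cdot\pfv(s_{\nzz},F_\nd(v_k))$, and you simply add them. The first uses the equal-revenue identity you almost wrote down: for a binary signal with giver mass $\tg$ at $\vg$ and taker mass $\tt$ at $\vt$, one has $(\vt-\vg)\,\tt=\vg\,\tg$, so the surplus equals the giver value times the giver mass; summing over exhausted givers gives $2\cdot\pfv\ge V_{i^*-1}$. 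The second bounds each taker's surplus below by $(v_i-v_{i^*})$ times its (exhausted) taker mass $\tfrac12 f_\nd(v_i)$, giving $2\cdot\pfv\ge V_k-V_{i^*-1}-v_{i^*}\sum_{i'=i^*}^k f_\nd(v_{i'})$. Adding and noting $v_{i^*}\sum_{i'=i^*}^k f_\nd(v_{i'})\le\max_{i\in[k]}\{\cdot\}$ finishes the proof. The ``fiddly algebra'' you worried about with the $v_q^1/v_\ell$ weighting disappears once you use the identity above, and the comparison to an idealized matching is unnecessary.
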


Consider the first point in time in \cref{alg:match} when $\ell=k+1$; if $k = n$, this is the stopping time of the algorithm. Let $i^*$ be the smallest index such that \cref{eqn:bernoulli_1} is still slack at this point in time. This means \cref{eqn:bernoulli_1} is tight for all $i\in [i^*-1]$ and \cref{eqn:bernoulli_2} is tight for all $i\in [i^*+1, k]$ at this point in time. Note that $i^*=\min\{i \mid \mg_i > 0\}$.

We now prove two lower bounds on $\pfv(s_{\nzz}, \mks)$. Recall the definition of $V_k$ from \cref{def:welfare_prefix}. We have:

\begin{proposition}\label{prop:double_le_vpf} 
 \label{prop:double_le_rest} We have the following inequalities:
\begin{itemize}
\item  $2 \cdot \pfv(s_{\nzz}, \mks) \ge V_{i^*-1}$.
\item  $2 \cdot \pfv(s_{\nzz}, \mks) \ge V_{k}-V_{i^*-1}-\max_{i\in[k]}\left\{v_{i}\cdot \sum_{i'=i}^k f_{\nd}(v_{i'})\right\}$.
\end{itemize}
\end{proposition}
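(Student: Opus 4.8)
The plan is to follow the run of \cref{alg:match} up to the time $\tau$ fixed in the preamble (the first iteration at which the algorithm sets $\ell=k+1$, or the termination state if $k=n$) and to obtain each inequality by accounting for how much of the giver/taker budgets has already been spent by then, and on which signals. First I would evaluate the left-hand side: a singleton signal yields zero consumer surplus, while an equal-revenue binary signal $\ns^{\mathrm{E}}_{v_a,v_b}$ induces the seller to price at $v_a$ (the smaller of the two tied prices, by the stated tie-breaking rule) and hence hands surplus $v_b-v_a$ only to value-$v_b$ buyers. Unwinding the definitions in \cref{def:s_function,def:vprefix,def:average_surplus} this gives
\[
\pfv(s_{\nzz},\mks)=\sum_{q:\,v_q^2\le v_k}(v_q^2-v_q^1)\cdot(\text{taker mass of }q)=\sum_{q:\,v_q^2\le v_k}v_q^1\cdot(\text{giver mass of }q),
\]
the second equality being immediate from \cref{def:bin_sig}.

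Next I would record the monotonicity facts driving both bounds: the giver index $s$ is non-decreasing along the run and equals $i^*$ at time $\tau$ (so $i^*\le k$ when $k<n$, since $\ell=k+1>s$ there); each $\mg_i$ and each $\mt_i$ is non-increasing, with $\mg_i$ changing only in iterations of giver index $i$ and $\mt_i$ only in iterations of taker index $i$; and every equal-revenue binary signal produced at an iteration preceding $\tau$ has taker value $\le v_k$ (otherwise an earlier iteration had $\ell\ge k+2$, which forces $\mt_{k+1}=0$ there — it lies strictly between the then-current $s\le i^*\le k$ and $\ell$ — hence $\mt_{k+1}=0$ at $\tau$ too, contradicting that $\ell=k+1$ is chosen at $\tau$). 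For the first inequality, use the giver-mass form: for $i<i^*$ the constraint \cref{eqn:bernoulli_1} is tight at $\tau$, i.e.\ $\mg_i=0$, so the whole budget $\tfrac12 f_{\nd}(v_i)$ was spent as giver mass in signals with giver $v_i$, all produced before $\tau$ (the giver index is $i<i^*=s$ only before $\tau$) and hence with taker value $\le v_k$, so they all appear in the sum above; summing over $i<i^*$ gives $\pfv(s_{\nzz},\mks)\ge\sum_{i=1}^{i^*-1}v_i\cdot\tfrac12 f_{\nd}(v_i)=\tfrac12 V_{i^*-1}$.

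For the second inequality, use the taker-mass form. For $i^*<i\le k$ the constraint \cref{eqn:bernoulli_2} is tight at $\tau$, i.e.\ $\mt_i=0$, so the whole budget $\tfrac12 f_{\nd}(v_i)$ was spent as taker mass in signals with taker $v_i$, each produced before $\tau$ (its taker budget is exhausted by then), whence at each such iteration the giver index is $\le i^*$ and $v_q^1\le v_{i^*}$. The surplus of such a signal is therefore at least $(v_i-v_{i^*})\cdot(\text{taker mass})$, and summing yields $\pfv(s_{\nzz},\mks)\ge\sum_{i=i^*+1}^{k}(v_i-v_{i^*})\cdot\tfrac12 f_{\nd}(v_i)$. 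Doubling and rewriting the telescoping sum, $2\,\pfv(s_{\nzz},\mks)\ge(V_k-V_{i^*})-v_{i^*}\sum_{i=i^*+1}^{k}f_{\nd}(v_i)=(V_k-V_{i^*-1})-v_{i^*}\sum_{i=i^*}^{k}f_{\nd}(v_i)$, and the last term is at most $\max_{i\in[k]}\{v_i\sum_{i'=i}^{k}f_{\nd}(v_{i'})\}$, which is the claim.

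The arithmetic (the surplus identity, the telescoping) is routine; the step I expect to need the most care is the execution bookkeeping behind the monotonicity facts, in particular pinning down that the giver budgets spent below $i^*$ belong to signals with taker $\le v_k$ and that the taker budgets spent in $(i^*,k]$ belong to signals of giver value $\le v_{i^*}$ — precisely where the choice of $\tau$ and the ``smallest-index-first'' greedy rule of \cref{alg:match} are used. The degenerate cases are immediate: $i^*=1$ (read $V_0=0$, so the first inequality is vacuous), and $k=n$ with no slack index at termination (read $i^*=n+1$, $V_{i^*-1}=V_n$, so the second inequality becomes $2\,\pfv\ge -\,\max_{i}\{\cdots\}$ and is trivial, while the first still follows from the same giver-mass computation applied to all of $[n]$).
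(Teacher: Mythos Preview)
Your proposal is correct and follows essentially the same approach as the paper: express $\pfv(s_{\nzz},\mks)$ via the giver-mass and taker-mass identities for equal-revenue binary signals, then use tightness of \cref{eqn:bernoulli_1} on $[1,i^*-1]$ for the first inequality and tightness of \cref{eqn:bernoulli_2} on $[i^*+1,k]$ together with $v_q^1\le v_{i^*}$ for the second. You are in fact more explicit than the paper about the execution bookkeeping (why the giver mass spent below $i^*$ lands in signals with taker $\le v_k$, and why takers in $(i^*,k]$ have giver $\le v_{i^*}$), which the paper states without justification; your handling of the edge cases is also sound.
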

\begin{proof} 
Assume that the $j^{\text{th}}$ equal-revenue binary signal added to $\nzz$ during \cref{alg:match} is $\ns_j=\ns_{\vg_j,\vt_j}^\mathrm{E}$ with weight $\gamma_j$. Let $\tg_j=f_{\ns_j}(\vg_j)\cdot \gamma_j$ and $\tt_j=f_{\ns_j}(\vt_j)\cdot \gamma_j$. By \cref{def:bin_sig}, we have 
\begin{equation}
\tt_j\cdot \vt_j=(\tg_j+\tt_j)\cdot \vg_j.\label{eqn:equal}
\end{equation}
Therefore,
\begin{align*}
2\cdot \pfv(s_{\nzz}, \mks)&=2\cdot \sum_{i=1}^k \left(\sum_{j:\, \vt_j=v_i} (\vt_j-\vg_j)\cdot \tt_j\right)\tag{By definition of buyers' surplus}\\
&= 2\cdot \sum_{i=1}^k \left(\sum_{j:\, \vt_j=v_i} \vg_j\cdot \tg_j\right). \tag{By \cref{eqn:equal}}
\end{align*}
Since \cref{eqn:bernoulli_1} is tight for any $i<i^*$, we have 
 $$\forall\, i \in [i^*-1],\ \sum_{j:\,\vg_j=v_i,\vt_j\le v_k} \tg_j = \frac{1}{2}\cdot f_{\nd}(v_i).$$
Since $\sum_{i=1}^k \left(\sum_{j:\, \vt_j=v_i} \vg_j\cdot \tg_j\right)\ge \sum_{i=1}^{i^*-1}\left(v_i\cdot \sum_{j:\,\vg_j=v_i, \vt_j\le v_k} \tg_j\right)$, we further have 
\begin{equation}
    2\cdot \pfv(s_{\nzz}, \mks)\ge \sum_{i=1}^{i^*-1}  v_i \cdot f_{\nd}(v_i)=V_{i^*-1}.
\end{equation}
This completes the proof of the first inequality. To show the second inequality, we have
\begin{align*}
2\cdot \pfv(s_{\nzz}, \mks)&\ge 2\cdot \sum_{i=i^*+1}^{k}\Bigg(\sum_{j:\,\vt_j=v_i} (\vt_j-\vg_j)\cdot \tt_j\Bigg)\\
&\ge 2\cdot \sum_{i=i^*+1}^{k}\Bigg(\sum_{j:\,\vt_j=v_i} (\vt_j-v_{i^*})\cdot \tt_j\Bigg)\tag{Since all the giver values are at most $v_{i^*}$}\\
&= 2\cdot \sum_{i=i^*+1}^{k}\Bigg(\sum_{j:\,\vt_j=v_i}\vt_j\cdot \tt_j\Bigg)-2\cdot v_{i^*}\cdot \sum_{i=i^*+1}^{k}\Bigg(\sum_{j:\,\vt_j=v_i} \tt_j\Bigg).
\end{align*}
Since \cref{eqn:bernoulli_2} is tight for any $i\in [i^*+1,\ k]$, this means 
\[
\forall i \in [i^*+1,\ k],\ 
\sum_{j:\, \vt_j=v_i} \tt_j=\frac{1}{2}\cdot f_{\nd}(v_i).
\]
Using this in the above derivation, we have:
\begin{align*}
2\cdot \pfv(s_{\nzz}, \mks)&\geq \sum_{i=i^*+1}^{k} f_{\nd}(v_i)\cdot v_i-  v_{i^*}\cdot \sum_{i=i^*+1}^{k} f_{\nd}(v_i)\\
&= \sum_{i=i^*}^{k}f_{\nd}(v_i)\cdot v_i-v_{i^*}\cdot \sum_{i=i^*}^{k} f_{\nd}(v_i)\\
&\ge V_k-V_{i^*-1}-\max_{i\in[k]}\left\{v_{i}\cdot \sum_{i'=i}^k f_{\nd}(v_{i'})\right\}.
\end{align*}
This completes the proof of the second inequality.
\end{proof}

\begin{proof}[Proof of \cref{thm:value_p}]
Adding the inequalities in the proposition above, we have 
\begin{align*}
4\cdot \pfv(s_{\nzz}, \mks)&\ge V_{i^*-1} +V_k-V_{i^*-1}-v_{i^*}\cdot \sum_{i=i^*}^{k} f_{\nd}(v_i)\\
&\ge V_{k}-\max_{i\in [k]}v_{i}\cdot \sum_{i'=i}^{k} f_{\nd}(v_{i'})\\
&\ge \pfv(s_{\nz'},\mks),\tag{By \cref{lem:pf_bound}}
\end{align*}
completing the proof of \cref{thm:value_p} and hence that of \cref{thm:value_pr}.
\end{proof}

\subsection{Extending to Sorted Prefix Sums via Ironing and Smoothing}\label{sec:iron}

We will now prove \cref{thm:smooth}. In \cref{sec:ironing}, we introduce the (classical) ironing process that transforms the surplus-mass function $s_{\nzz}(x)$ into $\ts(x)$. After that, in \cref{sec:smoothing}, we conduct a smoothing process to obtain a signaling scheme $\nzm$ whose induced surplus-mass function is at least half of $\ts(x)$ (see \cref{lem:smooth}). The ironing process can be intuitively viewed as moving surplus from some high-surplus but lower-value buyers to some high-value but low-surplus buyers, so that the ironed function is monotone. It makes the surplus-mass function more ``even''. The smoothing process describes what specific modification we should operate on the signaling scheme (or signals) to achieve the ironing purpose on the surplus-mass function. This will show \cref{thm:smooth}.
Denote the final signaling scheme after decomposition by $\nzf$. In \cref{sec:proof_of_main}, we show that $\nzf$ simultaneously guarantees 8-majorization and monotonicity, thus completing the proof of \cref{thm:main}.

\subsubsection{Ironing}\label{sec:ironing}
So far, we have approximated the optimal integration prefix sum of consumer surplus. We need to transform the approximation on integration prefix sum into the approximation on {\em sorted} prefix sum, which will yield the bound on approximate majorization. Our first step is to process the surplus-mass function via ironing. Ironing is a standard process on functions to achieve monotonicity. It is first applied in the auction scenario by \citet{Myerson81}. For the completeness of our paper, we also include a description of the ironing process in this section.

Consider the surplus-mass function $s_{\nzz}(x)$. We operate the ironing process as follows: 
\begin{itemize}
\item[1.] Compute the integral of $s_{\nzz}(x)$ as $F(x)$.
\item[2.] Compute the lower convex envelope of $F(x)$, denoted as $\tilde{F}(x)$.
\item[3.] Compute the derivative of $\tilde{F}(x)$ as the ironed function $\ts(x)$. Define the value at any point of discontinuity as its left limit.
\end{itemize}
We have the following properties of $\ts(x)$. 
\begin{lemma}\label{lem:ts_prop}
    The ironed function $\ts(x)$ satisfies:
    \begin{itemize}
        \item [1.] $\ts(x)$ is weakly increasing;
        \item [2.] For any $m\in (0, 1]$, we have $\pfv(\ts, m)\le \pfv(s_{\nzz},m)$.
    \end{itemize}
\end{lemma}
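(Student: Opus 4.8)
The plan is to lean entirely on standard properties of the lower convex envelope. First I would record the structure of $F(x):=\int_0^x s_{\nzz}(t)\d t$: since $s_{\nzz}$ is a non-negative step function with finitely many steps, $F$ is continuous, non-decreasing, piecewise linear on $[0,1]$, and $F(0)=0$. Its lower convex envelope $\tilde F$ — the pointwise-largest convex function on $[0,1]$ with $\tilde F\le F$ — is then also continuous and piecewise linear, so $\ts=\tilde F'$ is a step function that is weakly increasing (convexity of $\tilde F$), the left-limit convention only pinning down its values at the finitely many breakpoints, which neither affects integrals nor spoils monotonicity. This yields property (1) at once.

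For property (2), the two facts I would isolate are: (i) $\tilde F\le F$ everywhere, by definition of the convex envelope; and (ii) $\tilde F(0)=0$. For (ii): on one hand $\tilde F(0)\le F(0)=0$; on the other hand the constant function $0$ is itself a convex minorant of $F$ (as $F\ge 0$), so $\tilde F\ge 0$ on all of $[0,1]$, forcing $\tilde F(0)=0$. Combining, for every $m\in(0,1]$,
\[
\pfv(\ts,m)=\int_0^m \ts(x)\d x=\tilde F(m)-\tilde F(0)=\tilde F(m)\le F(m)=\int_0^m s_{\nzz}(x)\d x=\pfv(s_{\nzz},m),
\]
which is exactly the claimed inequality.

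I expect the only genuinely delicate points to be the boundary bookkeeping — pinning down $\tilde F(0)=0$ as above (and, should a two-sided statement ever be wanted, the analogous $\tilde F(1)=F(1)$ via a maximality argument at the right endpoint) — and checking that the piecewise-linear, left-limit convention for $\ts$ does not interfere with the fundamental-theorem-of-calculus identity $\int_0^m \tilde F'(x)\d x=\tilde F(m)-\tilde F(0)$ (it does not, since $\tilde F$ is continuous and piecewise linear, hence absolutely continuous with derivative $\ts$ a.e.). Everything else is an immediate consequence of $\tilde F$ being a convex function sandwiched between $0$ and $F$.
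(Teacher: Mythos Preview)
Your proposal is correct and follows essentially the same approach as the paper: both derive (1) from convexity of $\tilde F$ and (2) from $\tilde F\le F$ together with $\pfv$ being integration from $0$. Your version is simply more careful about the boundary point $\tilde F(0)=0$ and the absolute-continuity bookkeeping, which the paper's two-sentence proof leaves implicit.
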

\begin{proof}
    Since $\tilde{F}$ is convex, we have its derivative $\ts$ is weakly increasing. Moreover, since the convex envelope has property that $\tilde{F}(x) \le F(x)$ and the $\pfv$ is defined by the integration from 0, we have $\forall\, m\in (0, 1]$, $\pfv(\ts, m)\le \pfv(s_{\nzz},m)$.
\end{proof}

Since both $\ts(x)$ and $s_{\nzz}(x)$ are step functions, the range of $x$ where $F(x)$ and $\tilde{F}(x)$ are different consists of a collection of open intervals. On the graph depicting $F(x)$ and $\tilde{F}(x)$, each interval represents a region of $x$ such that $\tilde{F}(x)$ falls below $F(x)$.

We call these open intervals ``ironing intervals'' and denote them by $I_1,I_2,\ldots, I_T$. Within each interval $I_t$, $\tilde{F}(x)$ is a linear function, and thus $\ts(x)$ is a constant. We denote this constant by $\ts_t$.

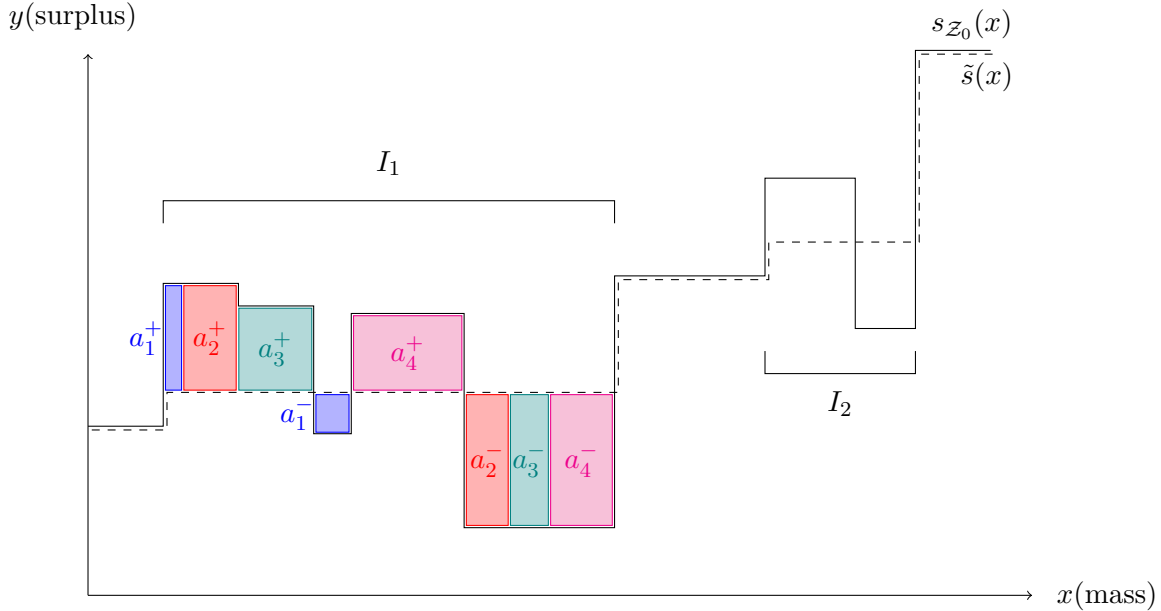
\begin{figure}[htbp]
\centering
\begin{tikzpicture}[scale=1.0]
\draw [dashed] (0,0)--(1,0)--(1,0.5)--(7,0.5)--(7,2)--(9,2)--(9,2.5)--(11,2.5)--(11,5)--(12,5);
\draw [->] (-0.05, -2.2)--(12.5,-2.2);
\node at (13.5, -2.2) {$x$(mass)};
\draw [->] (-0.05, -2.2)--(-0.05, 5);
\node at (-0.25, 5.5) {$y$(surplus)};
\begin{scope}[shift={(-0.05,0.05)}]
\draw (0, 0)--(1, 0)--(1,1.9)--(2,1.9)--(2, 1.6)--(3,1.6)--(3,-0.1)--(3.5,-0.1)--(3.5,1.5)--(5,1.5)--(5,-1.35)--(7,-1.35)--(7,2)--(9,2)--(9,3.3)--(10.2,3.3)--(10.2, 1.3)--(11, 1.3)--(11,5)--(12,5);
\draw (1,2.7)--(1,3)--(7,3)--(7,2.7);
\draw (4,3.5) node {$I_1$}; 
\draw (9,1)--(9,0.7)--(11,0.7)--(11,1);
\draw (10,0.3) node {$I_2$}; 
\end{scope}
\node at (11.7, 5.4) {$s_{\nzz}(x)$};
\node at (11.9, 4.7) {$\ts(x)$};
\begin{scope}[shift={(-0.02,0.02)}]
\draw[draw=blue, fill=blue!30] (1,0.51) rectangle ++(0.214,1.39) node [left,midway, text=blue] {$a_1^+$};
\draw[draw=red, fill=red!30] (1.244,0.51) rectangle ++(0.696,1.39) node [pos=.5, text=red] {$a_2^+$};
\draw[draw=teal, fill=teal!30] (1.975,0.51) rectangle ++(0.97,1.09) node [pos=.5, text=teal] {$a_3^+$};
\draw[draw=blue, fill=blue!30] (3,-0.05) rectangle ++(0.44,0.5) node [left=3, midway, text=blue] {$a_1^-$};
\draw[draw=magenta, fill=magenta!30] (3.50,0.51) rectangle ++(1.44,0.99) node [pos=.5, text=magenta] {$a_4^+$};
\draw[draw=red, fill=red!30] (5, -1.29) rectangle ++(0.556,1.74) node [pos=.5, text=red] {$a_2^-$};
\draw[draw=teal, fill=teal!30] (5.586, -1.29) rectangle ++(0.504,1.74) node [pos=.5, text=teal] {$a_3^-$};
\draw[draw=magenta, fill=magenta!30] (6.12, -1.29) rectangle ++(0.82,1.74) node [pos=.5, text=magenta] {$a_4^-$};
\end{scope}

\end{tikzpicture}
\caption{The solid line denotes the function $s_{\nzz}(x)$. After ironing, the dashed line denotes the function $\ts(x)$. The masses are sorted in ascending order of the values. The four pairs of rectangles with the same colour share the same area respectively. The first two ironing intervals are shown as $I_1$ and $I_2$. }
\label{fig:iron}
\end{figure}

Fix an ironing interval $I_t=(\ell_t, r_t)$. By the definition of ironing intervals, we have that for any $x_0\in \{\ell_t, r_t\}$, $F(x_0)=\int_{0}^{x_0} s_{\nzz}(x)\d x=\int_{0}^{x_0} \ts(x)\d x=\tilde{F}(x_0)$, thus $\int_{\ell_t}^{r_t} s_{\nzz}(x)\d x= \int_{\ell_t}^{r_t} \ts(x)\d x$. Equivalently, we have $$\int_{x\in I_t: \ts(x)>s_{\nzz}(x)} (\ts(x)-s_{\nzz}(x)) \d x= \int_{x\in I_t: \ts(x)\le s_{\nzz}(x)} (s_{\nzz}(x)-\ts(x)) \d x.$$
This means $\int_{x\in I_t} [s_{\nzz}(x) - \ts_t]^+ \d x = \int_{x\in I_t} [\ts_t - s_{\nzz}(x)]^+ \d x$. Moreover, by the second property in \cref{lem:ts_prop}, we have $$\forall x_0\in (\ell_t, r_t],\ \int_{\ell_t}^{x_0}[s_{\nzz}(x) - \ts_t]^+ \d x\ge \int_{\ell_t}^{x_0}[\ts_t-s_{\nzz}(x)]^+ \d x.$$ 

Based on these two observations, we can split the area above $y=\ts_t$ while below $s_{\nzz}(x)$ into ${Y_t}$ rectangles $\{a_1^+,a_2^+,\ldots,a_{Y_t}^+\}$, as well as the area below $y=\ts_t$ while above $s_{\nzz}(x)$ into the same number of rectangles $\{a_1^-,a_2^-,\ldots, a_{Y_t}^-\}$. \cref{alg:rectangle} describes the process of constructing such pairs of rectangles. For any $y\in [Y_t]$, the pair of rectangles $(a_y^+, a_y^-)$ satisfies the following  conditions: 
\begin{itemize}
    \item [1.] They have the same area;
    \item [2.] $a_y^+$ is on the left of $a_y^-$;
    \item [3.] Each rectangle $a_y^+$ (resp. $a_y^-$) corresponds to a single buyer value $v_y^+$ (resp. $v_y^-$).
\end{itemize}

\Cref{fig:iron} shows an example of ironing and pairing of rectangles within the interval $I_1$. The four rectangles above $y=\ts_t$ (i.e. $\{a_1^+, a_2^+, a_3^+, a_4^+\}$) have the same areas as the four rectangles below $y=\ts_t$ (i.e. $\{a_1^-, a_2^-, a_3^-, a_4^-\}$) respectively. 

\begin{algorithm}[htbp]
\SetKwComment{Comment}{[}{]}
 \SetKwInput{KwData}{Input}
\SetKwInput{KwResult}{Output}
\caption{Construct a rectangle pairing on $I_t$ \label{alg:rectangle}}
\KwData{$s_{\nzz}(x)$, $I_t=(\ell_t,r_t]$, $\ts_t$ (ironed surplus on $I_t$).}
\KwResult{A set of rectangle pairs $\{(a_y^+, a_y^-)\}_{y\in [Y_t]}$.}
Let $v_{i_1^+} < \cdots < v_{i_g^+}$ be the buyer values whose occupied interval in $I_t$ satisfies $s_{\nzz}(x)>\ts_t$\;
Let $v_{i_1^-} < \cdots < v_{i_h^-}$ be the buyer values whose occupied interval in $I_t$ satisfies $s_{\nzz}(x) < \ts_t$\;
$x^+\gets F_\nd(v_{i_1^{+}-1});\ x^-\gets F_\nd(v_{i_1^{-}-1})$\;
$y\gets 0;\ R\gets \varnothing$\;
\Repeat{no such $(\ell, r)$ exists}{
Find the smallest $\ell \in [g]$ such that $F_\nd(v_{i_\ell^{+}})>x^+;\ x^+ = \max\{x^+, F_\nd(v_{i_\ell^{+} - 1})\}$\;
Find the smallest $r \in [h]$ such that $F_\nd(v_{i_r^{-}})>x^-;\ x^- = \max\{x^-, F_\nd(v_{i_r^{-} - 1})\}$\;
$A\gets \min\left\{(F_\nd(v_{i_\ell^{+}}) - x^+)\cdot (s_{\nzz}(v_{i_\ell^{+}}) - \ts_t),\ (F_\nd(v_{i_r^{-}}) - x^-)\cdot (\ts_t - s_{\nzz}(v_{i_r^{-}}))\right\}$\;
$y\gets y+1$\; 
Let $a_y^+$ be the rectangle with bottom-left corner coordinate at $(x^+, \ts_t)$ with width $w_y^+ = \frac{A}{v_{i_\ell^{+}} - \ts_t}$ and height $h_y^+ = v_{i_\ell^{+}} - \ts_t$\;
Let $a_y^-$ be the rectangle with top-left corner coordinate at $(x^-, \ts_t)$ with width $w_y^- = \frac{A}{\ts_t-v_{i_r^{-}}}$ and height $h_y^- = \ts_t - v_{i_r^{-}}$\;
Add the rectangle pair $(a_y^+, a_y^-)$ to $R$\;
$x^+\gets x^+ + w_y^+;\ x^-\gets x^- + w_y^-$\;
}
\end{algorithm}

\subsubsection{Smoothing} \label{sec:smoothing}
We now present the smoothing process. Consider the ironed interval $I_t$. Our goal is to make all buyers with expected consumer surplus less than $\ts_t/2$ in $\nzz$ (call them \emph{poor} buyers) have expected consumer surplus at least $\ts_t/2$ after smoothing. 
We do so by collecting a portion of giver masses from all the equal-revenue binary signals that contribute to the surplus of carefully chosen high-surplus (or \emph{rich}) buyers. We use the collected masses as giver masses of these signals to construct new equal-revenue binary signals with the poor buyers' values as the taker value, hence bringing their expected surplus to at least $\ts_t/2$. 

We now describe the process in more detail. By applying \cref{alg:rectangle}, we obtain a set of pairs of rectangles $\{(a_y^+, a_y^-)\}_{y\in [Y_t]}$. Suppose that the $x$-coordinates of $a_y^+$ and $a_y^-$ correspond to buyer values $v_y^+$ and $v_y^-$ respectively. Denote the widths of $a_y^+$ and $a_y^-$ by $w_y^+$ and $w_y^-$, and the heights by $h_y^+$ and $h_y^-$ respectively. Note that for $a_y^-$, the expected consumer surplus is $\ts_t - h_y^-$, while for $a_y^+$ it is $\ts_t + h_y^+$.

For each pair of rectangles $(a_y^+,a_y^-)$, if $h_y^- > \ts_t/2$ (i.e. a buyer with value $v_y^-$ is poor), we apply the following three steps on $\nzz$:
\begin{itemize}
\item[1.] Remove $\frac{w_y^-}{f_{\nd}(v_y^-)}$ fraction of the weight from all the singleton signals on $v_y^-$ and all the equal-revenue binary signals where $v_y^-$ is the taker. For this removed weight, collect their taker masses, and discard their giver masses. (Intuitively, we collect the rectangle $a_y^-$.)
\item[2.] Remove $\frac{w_y^+}{f_{\nd}(v_y^+)}\cdot \frac{h_y^+}{\ts_t+h_y^+}$ fraction of weight from all equal-revenue binary signals where $v_y^+$ is the taker. For the removed weight, collect their giver masses and discard their taker masses. (Intuitively, we collect the givers to the rectangle $a_y^+$.)
\item[3.] Build equal-revenue binary signals using the masses collected in Step 2 as giver masses and the masses collected in Step 1 as taker masses.
\end{itemize}

\begin{algorithm}[h]
\SetKwComment{Comment}{~$\triangleright$~}{}
 \SetKwInput{KwData}{Input}
\SetKwInput{KwResult}{Output}
\caption{Smoothing Algorithm\label{alg:smooth}}
\KwData{$\nzz=\{(\ns_q,\gamma_q)\}_{q\in [Q]}$, ironed surplus-mass function $\ts(x)$, ironing intervals $\{I_t\}_{t\in [T]}$.}
\KwResult{Smoothed signaling scheme $\nzm=\{(\ns'_{q'},\gamma'_{q'})\}_{q'\in [Q']}$.}
$\nzm\gets\nzz;$ $q'\gets Q$\Comment*[r]{\textrm{Initialize  $\nzm$ to $\nzz$.}}
Denote current $\nzm$ by $\{\ns'_{q},\gamma'_{q}\}_{q\in [Q]}$\;
\For {$t=1$ to $T$} {
Apply \cref{alg:rectangle} to find rectangle pairs $R_t=\{(a_y^+, a_y^-)\}_{y\in[Y_t]}$ on ironing interval $I_t$\;
Width and height of $a_y^+$ (resp. $a_y^-$) are $w_y^+$ and $h_y^+$ (resp. $w_y^-$ and $h_y^-$); $a_y^+$ (resp. $a_y^-$) are occupied by buyers with value $v_y^+$ (resp. $v_y^-$)\;
\For {each rectangle pair $(a_y^+, a_y^-)$ in $R_t$ with $h_y^->\ts_t/2$\label{alg_line:for}} {
    \For {each $\ns_q$ in $\nzz$ that is an equal-revenue binary signal with taker $v_y^-$, or is a singleton signal on $v_y^-$ \label{alg_line:middle_loop_start}} {
        $\gamma'_{q}\gets \gamma'_{q}-\gamma_q\cdot \frac{w_y^-}{f_{\nd}(v_y^-)}$\; \DontPrintSemicolon\Comment*[r]{\textrm{Step 1: Collect $\gamma_q\cdot \frac{w_y^-}{f_{\nd}(v_y^-)}\cdot f_{\ns_q}(v_y^-)$ taker mass from value $v_y^-$}} \label{alg_line:taker}
    }
    \For {each $\ns_q$ in $\nzz$ that is an equal-revenue binary signal with taker $v_y^+$ and some giver $v'$ \label{alg_line:for_2_start}} {
        $\gamma'_{q}\gets \gamma'_{q}-\gamma_q\cdot \frac{w_y^+}{f_{\nd}(v_y^+)}\cdot \frac{h_y^+}{\ts_t+h_y^+}$\; \DontPrintSemicolon\Comment*[r]{\textrm{Step 2: Collect $\gamma_q\cdot \frac{w_y^+}{f_{\nd}(v_y^+)}\cdot \frac{h_y^+}{\ts_t+h_y^+}\cdot f_{\ns_q}(v')$ giver mass from value $v'$}}\label{alg_line:ext_giver} \PrintSemicolon
        $q'\gets q'+1$\;
        $\gamma'_{q'}=\gamma_q\cdot \frac{w_y^+}{f_{\nd}(v_y^+)}\cdot \frac{h_y^+}{\ts_t+h_y^+}\cdot \big(1-\frac{v'}{v_y^+}\big)\big/\big(1-\frac{v'}{v_y^-}\big)$\;
        Add equal-revenue binary signal $\ns'_{q'}=\ns_{v',v_y^-}^\mathrm{E}$ with weight $\gamma'_{q'}$\ to $\nzm$\; \DontPrintSemicolon\Comment*[r]{\textrm{Step 3: Build equal-revenue binary signal on $(v',v_y^-)$ with collected mass}} \label{alg_line:new_bin}
    }\label{alg_line:for_2_end}
}
}
Make all the remaining masses singleton signals and add them to $\nzm$\;
\end{algorithm}

We formally present the smoothing process in \cref{alg:smooth}. Denote its output by $\nzm=\{(\ns'_{q'}, \gamma'_{q'})\}_{q'\in [Q']}$. 

\subsubsection{Analysis: Proof of \cref{thm:smooth}}
We show that $\nzm$ is feasible in \cref{lem:feasible_smooth}. In other words, we prove that the outcome of the algorithm satisfies Bayes plausibility as in \cref{eq:balance}. 

\begin{lemma} 
\label{lem:feasible_smooth}
\label{lem:exhaust}
    $\forall\,i\in [n],\ \sum_{q'\in [Q']} f_{\ns'_{q'}}(v_i)\cdot \gamma'_{q'}=f_{\nd}(v_i)$. 
\end{lemma}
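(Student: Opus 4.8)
The plan is to prove the identity by a mass-conservation argument, tracking for each value $v_i$ the total probability mass assigned to $v_i$ throughout \cref{alg:smooth}. I would account for this mass in two places: mass currently sitting inside some signal, and mass that has been temporarily removed and is held in one of three auxiliary ``pools'' -- a \emph{taker-pool} indexed by poor values, a \emph{giver-pool} indexed by giver values, and a \emph{discard-pool}. The aim is to show that at every point in the execution the grand total on each $v_i$ equals $f_{\nd}(v_i)$, and that on termination every pool has been emptied back into (singleton) signals.

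The base case is that $\nzz$ itself is Bayes-plausible: in \cref{alg:match} the residual budgets satisfy $\mg_i+\mt_i=f_{\nd}(v_i)$ at all times, each equal-revenue binary signal added subtracts its giver mass from some $\mg_s$ and its taker mass from some $\mt_\ell$, and the last line turns the leftover $\mg_i+\mt_i$ into singletons on $v_i$, so $\sum_q f_{\ns_q}(v_i)\gamma_q=f_{\nd}(v_i)$. I would then reinterpret the three ``Step''s of \cref{alg:smooth} as pure mass transfers. For a rectangle pair $(a_y^+,a_y^-)$ with poor value $v_y^-$: Step 1 removes a $\tfrac{w_y^-}{f_{\nd}(v_y^-)}$-fraction of every signal whose taker is $v_y^-$ and of every singleton on $v_y^-$, routing the removed taker mass into the taker-pool at $v_y^-$ and the removed giver mass into the discard-pool; Step 2 removes a $\tfrac{w_y^+}{f_{\nd}(v_y^+)}\cdot\tfrac{h_y^+}{\ts_t+h_y^+}$-fraction of every signal whose taker is $v_y^+$, routing the removed giver mass (on its giver value $v'$) into the giver-pool at $v'$ and the removed taker mass into the discard-pool; Step 3 builds, for each such signal $\ns^{\mathrm{E}}_{v',v_y^+}$, a new signal $\ns^{\mathrm{E}}_{v',v_y^-}$, drawing its giver mass on $v'$ from the giver-pool and its taker mass on $v_y^-$ from the taker-pool; and the final line empties all remaining pool mass into singletons at the indexed value. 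Since every operation only relocates mass, the per-value total is invariant, which proves the lemma once all pools are emptied.

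Two routine verifications make this precise. First, the giver-pool is consumed \emph{exactly}: for the $q$-th original signal $\ns^{\mathrm{E}}_{v',v_y^+}$ with weight $\gamma_q$, Step 2 deposits giver mass $\gamma_q\cdot\tfrac{w_y^+}{f_{\nd}(v_y^+)}\cdot\tfrac{h_y^+}{\ts_t+h_y^+}\cdot\bigl(1-\tfrac{v'}{v_y^+}\bigr)$ at $v'$, whereas the new signal $\ns^{\mathrm{E}}_{v',v_y^-}$ has weight $\gamma'_{q'}=\gamma_q\cdot\tfrac{w_y^+}{f_{\nd}(v_y^+)}\cdot\tfrac{h_y^+}{\ts_t+h_y^+}\cdot\bigl(1-\tfrac{v'}{v_y^+}\bigr)\big/\bigl(1-\tfrac{v'}{v_y^-}\bigr)$ and so withdraws giver mass $\gamma'_{q'}\bigl(1-\tfrac{v'}{v_y^-}\bigr)$, which agrees after the factor $\bigl(1-\tfrac{v'}{v_y^-}\bigr)$ cancels; moreover $v'<v_y^+<v_y^-$, since the above-line rectangle $a_y^+$ lies to the left of its paired below-line rectangle $a_y^-$ and the $x$-axis is ordered by value, so $\ns^{\mathrm{E}}_{v',v_y^-}$ is a well-formed equal-revenue binary signal. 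Second, no weight is ever driven negative during Steps 1--2: each binary signal is touched only in the unique step selected by its (single) taker value, and the total fraction removed is $\sum_{y:\,v_y^-=v}\tfrac{w_y^-}{f_{\nd}(v)}=1$ for a poor taker $v$ (the $a_y^-$'s tile value $v$'s block of $s_{\nzz}$) and $\tfrac{h_y^+}{\ts_t+h_y^+}<1$ for a rich taker, with singletons on a poor value removed in total as well.

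The step I expect to be the main obstacle is showing that the taker-pool at each poor value $v_y^-$ never goes negative -- equivalently, that the leftover mass the last line converts into singletons on $v_y^-$ is non-negative -- i.e., that the total taker mass of the new signals assigned to $v_y^-$ in Step 3 does not exceed the mass freed up at $v_y^-$ in Step 1 (its original taker mass, its singleton mass, and its giver mass discarded while processing other poor buyers' signals). Here the clean quantity to exploit is the \emph{surplus} injected into $v_y^-$ by the new signals built from pair $y$, which equals exactly $w_y^-h_y^-/f_{\nd}(v_y^-)$ -- this follows from the equal-area relation $w_y^+h_y^+=w_y^-h_y^-$ of matched rectangles together with the identity $\sum_{q:\,v_q^2=v_y^+}\gamma_q\,(v_y^+-v')\,f_{\ns_q}(v_y^+)=s_{\nzz}(v_y^+)\,f_{\nd}(v_y^+)$ -- so that, summed over the pairs with poor value $v$, the new signals restore $v$'s surplus to $\ts_t-s_{\nzz}(v)$; converting this surplus accounting into the required mass inequality, using $v'<v_y^+<v_y^-$ and the left-to-right matching rule of \cref{alg:rectangle}, is the delicate part. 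Once it is in place, the conservation invariant immediately gives $\sum_{q'\in[Q']}f_{\ns'_{q'}}(v_i)\gamma'_{q'}=f_{\nd}(v_i)$ for every $i$, with all $\gamma'_{q'}\ge 0$, so $\nzm$ is a feasible signaling scheme.
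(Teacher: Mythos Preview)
Your mass-conservation framework with pools, the exact consumption of the giver-pool, and the non-negativity check for weights in Steps~1--2 are all correct and track the paper's argument closely. The genuine gap is exactly where you flag it: the taker-pool bound at a poor value $v_y^-$.

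The surplus-accounting route you propose does not close this gap. Knowing that the new signals inject total surplus $w_y^- h_y^-$ at $v_y^-$ only gives $\nm_y^- = \sum_j \text{surplus}_j / (v_y^- - \vg_j)$; to upper-bound this mass you would need a \emph{lower} bound on $v_y^- - \vg_j$, and the only one available, $v_y^- - \vg_j > v_y^- - v_y^+$, is far too weak (it yields $\nm_y^- \le w_y^- h_y^-/(v_y^- - v_y^+)$, which can be arbitrarily large). The paper bypasses surplus entirely and bounds the mass directly, and the ingredient you are missing is the structural constraint built into $\nzz$ by \cref{alg:match}: at most half of $f_{\nd}(v)$ is ever used as taker mass (\cref{eqn:bernoulli_2}). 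Writing $r = \tfrac{w_y^+}{f_{\nd}(v_y^+)}\cdot\tfrac{h_y^+}{\ts_t+h_y^+}$, the paper computes
\[
\nm_y^- \;=\; r \sum_{j:\,\vt_j=v_y^+} \tt_j \cdot \frac{v_y^+-\vg_j}{v_y^--\vg_j}
\;\le\; r \sum_{j:\,\vt_j=v_y^+} \tt_j
\;\le\; r\cdot \tfrac{1}{2}f_{\nd}(v_y^+)
\;\le\; \frac{w_y^+ h_y^+}{2\ts_t}
\;=\; \frac{w_y^- h_y^-}{2\ts_t}
\;\le\; \frac{w_y^-}{2},
\]
where the first inequality is just $v_y^+ < v_y^-$ (which you already noted), the second is the half-taker bound at $v_y^+$, and the last is $h_y^- \le \ts_t$. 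On the other side, the same half-taker bound applied at $v_y^-$ shows that at least $\tfrac{1}{2}f_{\nd}(v_y^-)$ of its mass sits in singletons or in binary signals where $v_y^-$ is the taker, so Step~1 frees at least $w_y^-/2$. Once you plug these two uses of \cref{eqn:bernoulli_2} into your plan in place of the surplus-to-mass conversion, the rest of your conservation argument goes through verbatim.
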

\begin{proof} 
Since \cref{alg:smooth} makes all the remaining masses into singleton signals and adds them to $\nzm$ at the end, we simply need to show that the sum of masses in the equal-revenue binary signals does not exceed the  mass of the prior for each value. Since we only add new signals in Line~\ref{alg_line:new_bin}, it suffices to argue that there is always enough mass on $v'$ and $v_y^-$ to build equal-revenue binary signals. 

We first argue that the total mass at $v'$ is preserved. Consider a single run of Line~\ref{alg_line:ext_giver} to Line~\ref{alg_line:new_bin} with fixed $(t,y,q)$. In Line~\ref{alg_line:ext_giver}, the mass on $v'$ is reduced by
\[
f_{\ns_q}(v')\cdot \gamma_q\cdot \frac{w_y^+}{f_{\nd}(v_y^+)}\cdot \frac{h_y^+}{\ts_t+h_y^+}=\left(1-\frac{v'}{v_y^+}\right)\cdot \gamma_q\cdot \frac{w_y^+}{f_{\nd}(v_y^+)}\cdot \frac{h_y^+}{\ts_t+h_y^+}.
\]
This is exactly the mass added to $v'$ on the newly constructed signal (i.e. $\gamma'_{q'}\cdot \big(1-\frac{v'}{v_y^-}\big)$), thus the mass  on $v'$ is preserved. 

We next argue the mass collected from $v_y^-$ is enough for the new equal-revenue binary signals. Consider a single run of Line~\ref{alg_line:middle_loop_start} to Line~\ref{alg_line:for_2_end} with fixed $(t,y)$. In $\nzz$, at least $\frac{1}{2}\cdot f_{\nd}(v_y^-)$ mass from value $v_y^-$ is devoted to singletons signals on $v_y^-$ or  binary signals where $v_y^-$ is a taker. Therefore, the  mass collected from $v_y^-$ in Line~\ref{alg_line:taker} is at least 
\[
\frac{1}{2}\cdot f_{\nd}(v_y^-)\cdot \frac{w_y^-}{f_{\nd}(v_y^-)}=w_y^-/2.
\]
Let $r(t,y)=\frac{w_y^+}{f_{\nd}(v_y^+)}\cdot \frac{h_y^+}{\ts_t+h_y^+}$. Consider the new equal-revenue signals added in Line~\ref{alg_line:new_bin} with $v_y^-$ as taker. We assume that the $j^{\text{th}}$ equal-revenue binary signal added to $\nzz$ during \cref{alg:match} is $\ns_j=\ns_{\vg_j,\vt_j}^\mathrm{E}$ with weight $\gamma_j$. Let $\tg_j=f_{\ns_j}(\vg_j)\cdot \gamma_j$ and $\tt_j=f_{\ns_j}(\vt_j)\cdot \gamma_j$.  Denote the sum of their taker masses by $\nm_y^-$. We have:
\begin{align*}
\nm_y^-&=\sum_{j:\,\vt_j=v_y^+} \gamma_j\cdot r(t,y) \cdot \frac{\vg_j}{v_y^-} \cdot \left(1-\frac{\vg_j}{\vt_j}\right)\Big/\left(1-\frac{\vg_j}{v_y^-}\right)\\
&=\sum_{j:\,\vt_j=v_y^+} r(t,y)\cdot \tt_j\cdot \frac{\vt_j-\vg_j}{v_y^- - \vg_j}.  \tag{Since $\ns_j$ is an equal-revenue binary signal, $f_{\ns_j}(\vt_j) = \frac{\vg_j}{\vt_j}$}
\end{align*}
Since $\vt_j=v_y^+ < v_y^-$ and since at most $\frac{1}{2}\cdot f_{\nd}(v_y^+)$ mass from $v_y^+$ is used as taker mass in equal-revenue binary signals in $\nzz$, the above simplifies to:
\[
\nm_y^-\le \sum_{j:\,\vt_j=v_y^+} r(t,y)\cdot \tt_j
\le \frac{f_{\nd}(v_y^+)}{2}\cdot  \frac{w_y^+}{f_{\nd}(v_y^+)}\cdot \frac{h_y^+}{\ts_t+h_y^+} \le \frac{w_y^+}{2}\cdot \frac{h_y^+}{\ts_t}.\]
Since we have $w_y^+\cdot h_y^+=w_y^-\cdot h_y^-$ (paired rectangles have the same area) and since $h_y^->\ts_t/2$, it follows that $\nm_y^-$ is upper bounded by $w_y^-/2$. Therefore, the mass on $v_y^-$ collected in Line~\ref{alg_line:taker} is enough for constructing all the new binary signals. The outcome of \cref{alg:smooth} is therefore a feasible signaling scheme. 
\end{proof}
 
We now show the following lemma  that lower bounds each buyer's expected consumer surplus in $\nzm$ by half of the ironed expected consumer surplus: 
\begin{lemma} \label{lem:smooth}
For any $x\in(0,1]$, $s_{\nzm}(x)\ge \ts(x)/2$.
\end{lemma}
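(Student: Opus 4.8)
Since both $s_{\nzm}$ and $\ts$ are constant on the quantile interval of each value $v_i$, it suffices to verify $s_{\nzm}(v_i)\ge \ts(v_i)/2$ value by value. Recall that $\ts(v_i)=s_{\nzz}(v_i)$ unless $v_i$'s interval lies inside an ironing interval $I_t$, in which case $\ts(v_i)=\ts_t$. The plan is to classify $v_i$ according to how \cref{alg:smooth} treats it, tracking which signals of $\nzz$ involving $v_i$ lose weight and which new signals place mass on $v_i$. In the ``untouched'' cases -- $v_i$ outside every ironing interval; $v_i$ exactly at the ironed level in some $I_t$ (so it generates no rectangles); or $v_i\in I_t$ below the ironed level but not poor, i.e.\ $s_{\nzz}(v_i)\in[\ts_t/2,\ts_t)$, where every pair with $v_y^-=v_i$ has $h_y^-=\ts_t-s_{\nzz}(v_i)\le\ts_t/2$ and is skipped by the loop at Line~\ref{alg_line:for} -- the value $v_i$ is never a $v_y^+$ nor the taker $v_y^-$ of a processed pair, so \cref{alg:smooth} alters $v_i$'s signals only by moving some of $v_i$'s (zero-surplus) giver mass around while preserving its total mass (cf.\ \cref{lem:feasible_smooth}); hence $s_{\nzm}(v_i)\ge s_{\nzz}(v_i)$, which is $\ge\ts(v_i)/2$ in all three subcases.

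Next the \emph{rich case}: $v_i\in I_t$ with $h^+:=s_{\nzz}(v_i)-\ts_t>0$, so $\ts(v_i)=\ts_t$ and $\ts_t+h^+=s_{\nzz}(v_i)$. The only step affecting $v_i$'s surplus is Step~2, invoked once per processed pair with $v_y^+=v_i$; it scales down \emph{every} equal-revenue binary signal of $\nzz$ with $v_i$ as taker by the common factor $\frac{w_y^+}{f_{\nd}(v_i)}\cdot\frac{h^+}{\ts_t+h^+}$. As the rectangles $a_y^+$ with $v_y^+=v_i$ tile $v_i$'s excess region (of width $f_{\nd}(v_i)$, since $s_{\nzz}>\ts_t$ on all of $v_i$'s interval), the total fraction of taker weight removed from $v_i$ is at most $\frac{h^+}{\ts_t+h^+}$, and since the per-unit-weight surplus of those signals is unchanged, $v_i$'s surviving surplus is at least $\bigl(1-\tfrac{h^+}{\ts_t+h^+}\bigr)s_{\nzz}(v_i)=\ts_t=\ts(v_i)$.

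The crux is the \emph{poor case}: $v_i\in I_t$ with $s_{\nzz}(v_i)<\ts_t/2$, so $h^-:=\ts_t-s_{\nzz}(v_i)>\ts_t/2$ and $\ts(v_i)=\ts_t$. Here the rectangles $a_y^-$ with $v_y^-=v_i$ tile $v_i$'s deficit region, so $\sum_{y:\,v_y^-=v_i}w_y^-=f_{\nd}(v_i)$, and Step~1 summed over these pairs strips \emph{all} weight from every singleton on $v_i$ and every equal-revenue binary signal with taker $v_i$; thus $v_i$'s entire surplus under $\nzm$ comes from the new binary signals built in Step~3 with $v_i$ as taker. Fix a processed pair $y$ with $v_y^-=v_i$. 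Each such new signal $\ns^{\mathrm E}_{v',v_i}$ (one for each equal-revenue binary signal $\ns_q$ of $\nzz$ with taker $v_y^+$ and giver $v'$; note $v'<v_y^+<v_i$, so the seller prices it at $v'$) has weight $\gamma_q\,r(t,y)\cdot\frac{1-v'/v_y^+}{1-v'/v_i}$ where $r(t,y)=\frac{w_y^+}{f_{\nd}(v_y^+)}\cdot\frac{h_y^+}{\ts_t+h_y^+}$, and it contributes $(v_i-v')\cdot\frac{v'/v_i}{f_{\nd}(v_i)}$ times that weight to $\cs_{v_i}(\nzm)$. Using the telescoping $(v_i-v')\cdot\frac{v'}{v_i}\cdot\frac{1}{1-v'/v_i}=v'$ and the fact that $\gamma_q\,v'(1-v'/v_y^+)$ is precisely the surplus-mass that $\ns_q$ contributes to its taker $v_y^+$ under $\nzz$ (so summing over all of $v_y^+$'s taker signals gives $f_{\nd}(v_y^+)\,s_{\nzz}(v_y^+)$), the surplus $\Delta_y$ that pair $y$ delivers to $v_i$ becomes
\[
\Delta_y=\frac{r(t,y)}{f_{\nd}(v_i)}\sum_{\ns_q:\ \text{taker }v_y^+,\ \text{giver }v'}\gamma_q\,v'\Bigl(1-\tfrac{v'}{v_y^+}\Bigr)=\frac{r(t,y)}{f_{\nd}(v_i)}\,f_{\nd}(v_y^+)\,s_{\nzz}(v_y^+)=\frac{w_y^+h_y^+}{f_{\nd}(v_i)},
\]
the last equality using $\ts_t+h_y^+=s_{\nzz}(v_y^+)$. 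Since paired rectangles have equal area, $w_y^+h_y^+=w_y^-h_y^-$; summing over all processed pairs with $v_y^-=v_i$ (all of which have $h_y^-=h^-$) yields $s_{\nzm}(v_i)\ge\sum_y\Delta_y=\frac{h^-}{f_{\nd}(v_i)}\sum_{y:\,v_y^-=v_i}w_y^-=h^->\ts_t/2=\ts(v_i)/2$.

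I expect this last case -- the evaluation of $\Delta_y$ -- to be the main obstacle. It requires carefully unwinding the weights that \cref{alg:smooth} assigns to the newly created signals, recognizing the sum over $v_y^+$'s taker signals as $v_y^+$'s surplus-mass under $\nzz$, and then collapsing everything via $\ts_t+h_y^+=s_{\nzz}(v_y^+)$ together with the equal-area identity $w_y^+h_y^+=w_y^-h_y^-$. The case bookkeeping and the mass conservation of \cref{lem:feasible_smooth} are routine by comparison, but the quantitative gain for poor buyers rests squarely on that identity.
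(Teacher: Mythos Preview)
Your proposal is correct and follows essentially the same approach as the paper: the same three-way case split (poor, intermediate, rich within an ironing interval, plus the trivial untouched cases), the same key identity that the new signals created from pair $(a_y^+,a_y^-)$ deliver exactly $w_y^+h_y^+=w_y^-h_y^-$ of surplus-mass to $v_y^-$ via the recognition $\sum_{q:\,\text{taker }v_y^+}\gamma_q\,v'(1-v'/v_y^+)=f_{\nd}(v_y^+)\,s_{\nzz}(v_y^+)=f_{\nd}(v_y^+)(\ts_t+h_y^+)$, and the same bound $1-\tfrac{h^+}{\ts_t+h^+}$ on the surviving fraction for rich buyers. Your write-up is somewhat more explicit about the ``untouched'' subcases and about why Step~1 strips \emph{all} old taker weight from a poor value (since $\sum_y w_y^-=f_{\nd}(v_i)$), but the substance is identical.
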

\begin{proof}
Consider a single iteration of the loop Line~\ref{alg_line:middle_loop_start} to Line~\ref{alg_line:for_2_end}, with fixed $(t,y)$. For each equal-revenue binary signal $\ns_j$ in $\nzz$ with $v_y^+$ as taker, we have collected $\frac{w_y^+}{f_{\nd}(v_y^+)}\cdot \frac{h_y^+}{\ts_t+h_y^+}$ fraction of the giver mass from it in Line~\ref{alg_line:ext_giver}. 
When we combine this giver mass (denoted by $M_j^{\mathrm{G}}$) with taker mass on value $v_y^-$ to form an equal-revenue binary signal $\ns'_{q'}=\ns_{\vg_j,v_y^-}^\mathrm{E}$, the taker mass in this binary signal (denoted by $M_j^{\mathrm{T}}$) is thus
\begin{align*}
M_j^{\mathrm{T}}=M_j^{\mathrm{G}}\cdot \frac{\vg_j}{v_y^--\vg_j}&=\frac{w_y^+}{f_{\nd}(v_y^+)}\cdot \frac{h_y^+}{\ts_t+h_y^+}\cdot \tt_j\cdot \frac{\vt_j-\vg_j}{\vg_j}\cdot \frac{\vg_j}{v_y^--\vg_j}\\
&=\frac{w_y^+}{f_{\nd}(v_y^+)}\cdot \frac{h_y^+}{\ts_t+h_y^+}\cdot \tt_j\cdot \frac{\vt_j-\vg_j}{v_y^--\vg_j}.
\end{align*}
Since buyers with value $v_y^-$ gain surplus $(v_y^--\vg_j)$ in this binary signal, the contribution from this signal to the total surplus of the buyers with value $v_y^-$ is 
\[
M_j^{\mathrm{T}} \cdot (v_y^--\vg_j)=\frac{w_y^+}{f_{\nd}(v_y^+)}\cdot \frac{h_y^+}{\ts_t+h_y^+}\cdot \tt_j\cdot (\vt_j-\vg_j).
\]

Again assume the $j^{\text{th}}$ equal-revenue binary signal added to $\nzz$ during \cref{alg:match} is $\ns_j=\ns_{\vg_j,\vt_j}^\mathrm{E}$ with weight $\gamma_j$. Let $\tg_j=f_{\ns_j}(\vg_j)\cdot \gamma_j$ and $\tt_j=f_{\ns_j}(\vt_j)\cdot \gamma_j$. Thus the total surplus of buyers with value $v_y^-$ in the newly constructed signals ($\ns'_{q'}$) is:
\begin{align*}
\sum_{j:\,\vt_j=v_y^+} \frac{w_y^+}{f_{\nd}(v_y^+)}\cdot \frac{h_y^+}{\ts_t+h_y^+}\cdot \tt_j\cdot (\vt_j-\vg_j)&= \frac{w_y^+}{f_{\nd}(v_y^+)}\cdot \frac{h_y^+}{\ts_t+h_y^+}\cdot\sum_{j:\,\vt_j=v_y^+} \tt_j\cdot (\vt_j-\vg_j)\\
&= \frac{w_y^+}{f_{\nd}(v_y^+)}\cdot \frac{h_y^+}{\ts_t+h_y^+} \cdot(\ts_t+h_y^+)\cdot f_{\nd}(v_y^+) \tag{By the definition of buyer surplus}\\
&= w_y^-\cdot h_y^-.\tag{Since paired rectangles have the same area}
\end{align*}

We now argue that, after the smoothing process, the expected consumer surplus of buyers with any value in the ironing interval $I_t$ is at least $\ts_t/2$. We consider three cases based on the expected surplus value of the  buyer in $\nzz$:

\begin{itemize}
\item[1.] Suppose that before the smoothing process, buyers with value $v_i$ have surplus strictly less than $\ts_t/2$. The corresponding area below $\ts_t$ (on the mass coordinates of these buyers) consists of rectangles $a_{y_1}^-$ to $a_{y_\psi}^-$. Since all these rectangles participate in the smoothing process, we can sum up the surplus after smoothing from $y=y_1$ to $y=y_\psi$ as 
\begin{equation*}
\sum_{y=y_1}^{y_\psi} w_y^-\cdot h_y^-=m_{i}\cdot h_y^{-}>m_i\cdot \ts_t/2.\tag{Since $\sum_{y=y_1}^{y_\psi} w_y^-=m_i$ and $h_y^->\ts_t/2$}
\end{equation*}

Dividing by total mass $m_i$, the expected surplus of these buyers is at least $\ts_t/2$.

\item[2.] On the buyers with value $v_i$ who originally have surplus less than $\ts_t$ but at least $\ts_t/2$, we have not changed any equal-revenue signal which contains $v_i$ as taker value in the process. Their expected surplus does not change and is at least $\ts_t/2$. 
\item[3.] On the buyers with value $v_i$ who originally gain surplus more than $\ts_t$. Suppose their corresponding area above $\ts_t$  is split into rectangles $a_{y_1}^+$ to $a_{y_\xi}^+$. We have extracted at most $\sum_{y=y_1}^{y_\xi}\frac{w_y^+}{f_{\nd}(v_y^+)}\cdot \frac{h_y^+}{\ts_t+h_y^+}=\frac{h_{y_1}^+}{\ts_t+h_{y_1}^+}$ fraction of the weight of their original signals. After the smoothing process, their expected surplus is still at least  $\frac{\ts_t}{\ts_t+h_{y_1}^+}\cdot(\ts_t+h_{y_1}^+)=\ts_t$. \\
\end{itemize}
Combining the above three cases, on each ironing interval $I_t$, in all buyers have surplus at least $\ts_t/2$ in $\nzm$. This completes the proof. 
\end{proof}

Finally, we make the scheme monotone. Suppose that $s_{\nzm}(x)>\ts(x)/2$ for some $x$ corresponding to buyers of value $v$. For each equal-revenue binary signal in $\nzm$ where $v$ is the taker value, we remove $\frac{2s_{\nzm}(x)-\ts(x)}{2s_{\nzm}(x)}$ fraction of the weight of the signal into two singleton signals. Denote the signaling scheme after the decomposition by $\nzf$. Since the singleton signals do not provide buyer surplus, we have $$\cs_v(\nzf)=s_{\nzm}(x)\cdot \left(1-\frac{2s_{\nzm}(x)-\ts(x)}{2s_{\nzm}(x)}\right)=\ts(x)/2.$$ The surplus-mass function of buyers in $\nzf$ is exactly $\ts(x)/2$, completing the proof of \cref{thm:smooth}.

\subsubsection{Proof of \cref{thm:main}} \label{sec:proof_of_main}
Based on the construction of $\nzz, \nzm$, and $\nzf$, we prove our main theorem: $\nzf$ is efficient, monotone, and $8$-majorized.
\begin{proof}[Proof of \cref{thm:main}]
In the whole construction process of $\nzz$ and $\nzf$, we only include equal-revenue binary signals (signals where the induced posterior only has two supports and they yield the same revenue for the seller, see \cref{def:bin_sig} as a formal definition) or singleton signals (signals where the induced posterior only has one support). The seller will always select the lowest support in the posterior as the price in these two types of signals. Therefore, the item is always sold, and $\nz_1$ is efficient. 
Further, the surplus-mass function of $\nzf$ is $\ts/2$, which is monotonically increasing. 


Finally, we show that $\nz$ is 8-majorized. Consider an arbitrary scheme $\nz'$ and any value $m\in (0,1]$. Note that $\ts$ is a step function and it preserves the prefix sum of $s_{\nzz}$ at break-point coordinates $m_0,m_1,\ldots,m_\tau$ (i.e. $\forall\, i\in [\tau],\ \pfv(\ts, m_i)=\pfv(s_{\nzz}, m_i)$) in ascending order. We have $m_0=0$ and $m_\tau=1$. Since $0<m\le 1$, there exists a unique $k\in [\tau+1]$ such that 
 $$m_{k-1}< m\le m_{k}.$$ By setting $\lambda=\frac{m-m_{k-1}}{m_{k}-m_{k-1}}$, we have $$m=(1-\lambda)\cdot m_{k-1}+\lambda \cdot m_{k}.$$
Consider the sorted $m$-prefix sum of $s_{\nz'}$, we have 

\begin{align*}
\pf(s_{\nz'}, m)&\le \pf(s_{\nz'},m_{k-1})+\lambda\cdot\left( \pf(s_{\nz'},m_{k})-\pf(s_{\nz'},m_{k-1})\right) \tag{By convexity of $\pf$ as a function of $m$}\\
&= (1-\lambda)\cdot \pf(s_{\nz'},m_{k-1})+ \lambda \cdot \pf(s_{\nz'},m_{k})\\
&\le (1-\lambda)\cdot \pfv(s_{\nz'},m_{k-1})+ \lambda \cdot \pfv(s_{\nz'},m_{k}) \tag{Prefix sum is at least sorted prefix sum}\\
&\le 4 \cdot \left[(1-\lambda)\cdot\pfv(s_{\nzz},m_{k-1}) + \lambda \cdot \pfv(s_{\nzz},m_{k})\right]
\tag{By \cref{thm:value_pr}}\\
&= 4\cdot \left[(1-\lambda)\cdot\pfv(\ts,m_{k-1}) + \lambda \cdot \pfv(\ts,m_{k})\right] \tag{Since $\ts$ preserves the prefix sum of $s_{\nzz}$ at $m_{k-1}$ and $m_k$}\\
&= 4\cdot \left[(1-\lambda)\cdot\pf(\ts,m_{k-1}) + \lambda \cdot \pf(\ts,m_{k})\right] \tag{Since $\ts$ is monotone}\\
&\le 4\cdot \pf(\ts,m) \tag{Since $\ts$ is constant on $(m_{k-1},m_{k}]$}\\
&= 8\cdot \pf(s_{\nzf},m). \tag{By \cref{thm:smooth}}
\end{align*}
Therefore, by \cref{def:approx_maj}, $\nzf$ is 8-majorized.
\end{proof}

\section{Lower Bounds}
\label{sec:lb}

Finally, we complement our $8$-majorized signaling scheme with two lower bounds for finding $\alpha$-majorized signaling schemes.
Our first bound shows the impossibility of $\alpha$-majorization for any constant $\alpha$ if we restrict to \emph{buyer-optimal} signaling schemes, and the second shows the impossibility of $\alpha$-majorization for $\alpha < 1.5$. Both our hard instances are, in a sense, the simplest possible -- they involve distributions $\nd$ over only three different values. (For distributions over two values, there is an exactly majorized scheme, since agents with the lower value always get surplus $0$.)

\subsection{Incompatibility of Approximate Majorization and Buyer Optimality}
We now show that no buyer-optimal scheme (i.e., one that maximizes $\cs(\nz)$) can yield our guarantees; in particular, no buyer-optimal scheme can be $\alpha$-majorized, for any constant $\alpha \in \mathbb{R}^+$. This also shows that exact majorization is impossible since exact majorization implies buyer optimality. This motivates the need for approximations and the need for looking beyond buyer-optimal schemes.

Recall that a signaling scheme $\nz$ is \emph{buyer-optimal} if $
\sum_{i=1}^n f_{\nd}(v_i)\cdot \cs_{v_i}(\nz)=
\sum_{i=1}^n f_{\nd}(v_i)\cdot v_i-\nr^{\textsf{My}}(\nd),
$ where $\nr^{\textsf{My}}(\nd)=\max_{v} v\cdot G_{\nd}(v)$ is the optimal (Myerson) revenue under $\nd$ (i.e. without signaling).
The following observation is immediate.
\begin{lemma}\label{lem:price_remain}
If a scheme is buyer-optimal, any optimal price for the original distribution will remain optimal in any signal of the scheme.
\end{lemma}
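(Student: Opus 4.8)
The plan is to show that buyer-optimality simultaneously pins down two quantities --- the total gains from trade and the seller's revenue --- and that the equality case of the revenue bound, being an equality between two weighted averages over signals whose summands are pairwise ordered, forces the corresponding pointwise equality in every signal.

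First I would record the two standard inequalities valid for \emph{every} signaling scheme $\nz=\{(\ns_q,\gamma_q)\}_{q\in[Q]}$. On the one hand, in each signal the realized gains from trade $\nr(\ns_q)+\sum_v f_{\ns_q}(v)\,\cs_v(\ns_q)$ are at most $\sum_v f_{\ns_q}(v)\,v$ (the item cannot be sold for more than a buyer's value), so averaging with the weights $\gamma_q$ and using Bayes plausibility gives $\cs(\nz)+\nr(\nz)\le \sum_{i} f_{\nd}(v_i)\,v_i$. On the other hand, fix any optimal price $p^*$ for $\nd$, i.e.\ $p^*\,G_{\nd}(p^*)=\nr^{\mathsf{My}}$. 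Since the seller is free to post $p^*$ under posterior $\ns_q$, we have $\nr(\ns_q)\ge p^*\,G_{\ns_q}(p^*)$; averaging with the weights $\gamma_q$ and again using Bayes plausibility (which gives $\sum_q \gamma_q\,G_{\ns_q}(p^*)=G_{\nd}(p^*)$) yields $\nr(\nz)\ge p^*\,G_{\nd}(p^*)=\nr^{\mathsf{My}}$.

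Next I would combine these with the hypothesis. Buyer-optimality states $\cs(\nz)=\sum_i f_{\nd}(v_i)\,v_i-\nr^{\mathsf{My}}$; plugging this into the first inequality forces $\nr(\nz)\le \nr^{\mathsf{My}}$, which together with the second inequality gives $\nr(\nz)=\nr^{\mathsf{My}}$. Now rewrite this equality as $\sum_q \gamma_q\,\nr(\ns_q)=\sum_q \gamma_q\,p^*\,G_{\ns_q}(p^*)$. Since $\nr(\ns_q)\ge p^*\,G_{\ns_q}(p^*)$ term by term and the two weighted sums coincide, every $q$ with $\gamma_q>0$ must satisfy $\nr(\ns_q)=p^*\,G_{\ns_q}(p^*)$. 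Thus $p^*$ attains the maximum $\max_v v\,G_{\ns_q}(v)$ in each such signal, i.e.\ it remains an optimal price; since $p^*$ was an arbitrary optimal price for $\nd$, the claim follows.

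The argument is short, and I do not anticipate a real obstacle. The only points requiring a little care are (i) applying ``equality of two weighted averages with pairwise-ordered summands forces pairwise equality'' only to signals of positive weight, and (ii) the lowest-tie-breaking convention on the seller's posted price --- but since the lemma only asserts that $p^*$ \emph{remains optimal} (belongs to the $\argmax$), not that the seller necessarily posts it, this convention does not interfere.
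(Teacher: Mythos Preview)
Your proposal is correct and follows essentially the same approach as the paper: both arguments pin down $\nr(\nz)=\nr^{\mathsf{My}}$ from buyer-optimality, note that $\nr(\ns_q)\ge p^*\,G_{\ns_q}(p^*)$ signal by signal, and conclude that equality must hold in every signal. The only stylistic difference is that the paper phrases the last step as a proof by contradiction (``if some signal had $p^*$ strictly suboptimal, total revenue would exceed $\nr^{\mathsf{My}}$''), whereas you state it directly as ``equal weighted averages of pairwise-ordered terms force pairwise equality''; your version also spells out the two preliminary inequalities more explicitly rather than invoking efficiency as an immediate consequence of buyer-optimality.
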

\begin{proof}
Since the scheme is buyer-optimal, the item is always sold. The buyers' total surplus is thus the expected value of the buyer minus the seller's revenue. Suppose that $p^{\textsf{My}}$ is an optimal price in the original distribution $\nd$. The seller gets revenue $R^{\textsf{My}}=p^{\textsf{My}}\cdot G_{\nd}(p^{\textsf{My}})$ without signaling. 

Note that the seller can still gain $R^{\textsf{My}}$ in the scheme if they nevertheless post $p^{\textsf{My}}$ for all the signals. Suppose for contradiction that in some signal, $p^{\textsf{My}}$ is not an optimal price, the seller must gain strictly more revenue from posting price $p'$ than that from posting $p^{\textsf{My}}$. Since in all other signals, the seller gains revenue at least as much as by posting price $p^{\textsf{My}}$, the seller's overall revenue is strictly greater than $R^{\textsf{My}}$. Therefore, $\nzz$ cannot be buyer-optimal, leading to a contradiction.
\end{proof}

To prove the lower bounds for any signaling scheme, we need the following lemma to narrow the space of signaling schemes we are considering. The lemma says that, for any signaling scheme, we can always transform it into an equivalent scheme (by which we mean that the expected consumer surplus of any buyer remains unchanged) with a simple form. \citet*{Cummings20} have a similar observation, but we need the following lemma that provides finer structural characterizations.

\begin{lemma}\label{lem:wlog}
For any signaling scheme $\nz$, there exists an \emph{efficient} signaling scheme $\nz^\mathrm{E}$, such that: 
\begin{itemize}
\item[1.] $\nz^\mathrm{E}$ generates the same expected consumer surplus as $\nz$ for any buyer; 
\item[2.] $\nz^\mathrm{E}$ includes at most $n$ signals, all of which have different lowest-supports (i.e. smallest value with non-zero mass).
\end{itemize}
\end{lemma}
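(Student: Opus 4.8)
The plan is to transform $\nz$ in two stages while never changing any buyer's expected consumer surplus: first a \textbf{splitting} stage that replaces each signal by efficient pieces, and then a \textbf{merging} stage that collapses all efficient signals sharing a lowest support, which caps the number of signals at $n$. Throughout, the only constraint to maintain is Bayes plausibility $\sum_q \gamma_q \ns_q = \nd$, which is preserved because both operations are linear in the signals (one is a partition of a signal's mass, the other is a convex average).

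For the splitting stage, fix a signal $(\ns_q,\gamma_q)$ with posted price $p_{\ns_q}^*$; observe that $p_{\ns_q}^*$ is necessarily a value in the support of $\ns_q$, since $v\cdot G_{\ns_q}(v)$ is increasing between consecutive support points. Replace $(\ns_q,\gamma_q)$ by the singleton signal on each value $v < p_{\ns_q}^*$ in the support of $\ns_q$, with weight $\gamma_q f_{\ns_q}(v)$, together with the conditional distribution $\ns_q^{\ge}$ of $\ns_q$ given $\{v \ge p_{\ns_q}^*\}$, with weight $\gamma_q\cdot G_{\ns_q}(p_{\ns_q}^*)$. I claim $\ns_q^{\ge}$ is efficient with posted price exactly $p_{\ns_q}^*=\underline{v}_{\ns_q^{\ge}}$: deleting the mass below $p_{\ns_q}^*$ scales $v\cdot G(v)$ by the common factor $1/G_{\ns_q}(p_{\ns_q}^*)$ for every price $v \ge p_{\ns_q}^*$, so $p_{\ns_q}^*$ earns revenue $p_{\ns_q}^*$ and no price $v \ge p_{\ns_q}^*$ earns more; and any price $v < p_{\ns_q}^*$ earns only $v < p_{\ns_q}^*$ since all remaining mass lies above it. With the lowest-price tie-breaking rule, the seller therefore posts $p_{\ns_q}^*$, which is the lowest support. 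Hence buyers with $v < p_{\ns_q}^*$ still receive surplus $0$ and buyers with $v \ge p_{\ns_q}^*$ still receive $v - p_{\ns_q}^*$, matching their per-signal surplus under $\ns_q$; summing over all $q$ shows $\cs_v(\nz)$ is unchanged for every $v$. After this stage, every signal (singletons trivially) is efficient.

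For the merging stage, suppose two efficient signals $\ns,\ns'$ of the current scheme, with weights $\gamma,\gamma'$, share the same lowest support $v_i$ (so both are posted at $v_i$). Replace them by the single signal $\ns'' := \frac{\gamma\ns+\gamma'\ns'}{\gamma+\gamma'}$ with weight $\gamma+\gamma'$. Since $G_{\ns''}$ is the matching convex combination of $G_{\ns}$ and $G_{\ns'}$, for every $v$ we get $v\cdot G_{\ns''}(v) \le \frac{\gamma}{\gamma+\gamma'}\,v_i G_{\ns}(v_i) + \frac{\gamma'}{\gamma+\gamma'}\,v_i G_{\ns'}(v_i) = v_i\cdot G_{\ns''}(v_i)$, so $v_i$ remains optimal and is still the lowest support; thus $\ns''$ is efficient and posted at $v_i$. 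Every buyer gets the same per-signal surplus $\mathds{1}[v\ge v_i](v-v_i)$ under $\ns$, $\ns'$, and $\ns''$, and since the weights add, the contribution to each $\cs_v$ is unchanged. Iterating until no two signals share a lowest support yields the desired $\nz^{\mathrm{E}}$: at most $n$ signals (at most one per value $v_1,\dots,v_n$), all efficient, all with distinct lowest supports, and with the same $\cs_v$ as $\nz$ for every buyer. The only delicate points are the two optimality claims — conditioning on the upper tail does not move the posted price, and averaging signals with a common posted price keeps that price optimal — together with the tie-breaking bookkeeping; both are the short arguments sketched above, so I do not anticipate a real obstacle.
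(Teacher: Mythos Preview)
Your proposal is correct and follows essentially the same two-step approach as the paper's proof: first remove the mass below the posted price into singletons (your ``splitting''), then merge all signals sharing a common lowest support (your ``merging''). Your write-up is in fact more careful than the paper's, explicitly verifying that conditioning on the upper tail preserves the posted price and that averaging two efficient signals with the same lowest support keeps that price optimal.
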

\begin{proof}
We conduct two operations on $\nz$ to construct $\nz'$. First, for each signal in $\nz$, we discard all masses on values strictly less than the optimal price. Then we put all the discarded masses into at most $n$ singleton signals. Each singleton signal includes all the discarded masses on a different value $v_i$ ($i\in[n]$). Consider any signal in $\nz$ before this operation. Since any buyer with the discarded values does not gain surplus from this signal and after discarding the optimal price remains the same, the expected consumer surplus of any buyer does not change.

Second, we combine all groups of signals with the same smallest value in support into one signal by adding up the masses on each value. Let $\nz'$ be the signaling scheme with all the signals after the combinations. After each combination operation, the optimal price remains the same at the smallest support. Therefore, this combination process again does not change the expected consumer surplus of any buyer. Since there are in total $n$ different values, there are at most $n$ signals in $\nz'$, each corresponding to a different smallest support. 

Since the optimal price for each signal remains the lowest support after combination, the item is always sold. Therefore, $\nz'$ is efficient. 
\end{proof}

\begin{theorem}
For any given $\alpha \ge 1$, there exist instances under which no buyer-optimal scheme is $\alpha$-majorized. 
\end{theorem}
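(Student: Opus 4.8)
The plan is to exploit \cref{lem:price_remain}: in a buyer-optimal scheme the Myerson price remains optimal in \emph{every} signal, and when the Myerson price is the \emph{top} value $v_3$ this severely caps the surplus that the middle value $v_2$ can receive, even though no-signaling is then far from buyer-optimal. Concretely, given $\alpha\ge 1$, I would take $v_1=1$, $v_2=2$, $v_3=\frac{4\alpha}{2\alpha-1}$ (so that $v_3-v_2=\frac{2}{2\alpha-1}$ and $\frac{v_3}{v_3-v_2}=2\alpha$), with probabilities $p_1=\eps$, $p_2=\frac{1-\eps}{2\alpha}-\eta$, $p_3=1-p_1-p_2$ for sufficiently small $\eps,\eta>0$. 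A direct computation shows $v_3 p_3>v_2(p_2+p_3)>v_1$ (so the Myerson price is indeed $v_3$), together with $p_2<p_3$ and $p_1v_1\le p_2(v_2-v_1)$; these three inequalities are the only facts about the parameters the argument uses.

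Next I would bound every buyer-optimal scheme $\nz$. Such a $\nz$ is efficient, so the posted price of each signal is its lowest support, and by \cref{lem:price_remain} the price $v_3$ is also optimal in each signal. A buyer of value $v_2$ gains surplus only from signals whose posted price lies below $v_2$, i.e.\ equals $v_1$; in such a signal $S$, optimality of $v_1$ gives $v_1\ge v_3 f_S(v_3)$ while optimality of $v_3$ gives $v_3 f_S(v_3)\ge v_1$, so $f_S(v_3)=v_1/v_3$; then $v_1\ge v_2 G_S(v_2)$ forces $f_S(v_2)\le \frac{v_1(v_3-v_2)}{v_2 v_3}$ and hence $f_S(v_1)\ge 1-v_1/v_2$. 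Every signal containing $v_1$ has posted price $v_1$, so all the $v_1$-mass lies in these signals; Bayes plausibility then bounds their total weight by $\frac{p_1 v_2}{v_2-v_1}$, and combining the above yields $\cs_{v_2}(\nz)\le \frac{p_1 v_1 (v_3-v_2)}{v_3 p_2}$. Since $p_2<p_3$, the sorted $(p_1+p_2)$-prefix of $s_{\nz}$ is $p_1\cdot 0$ plus $p_2$ units of mass of value $\min(\cs_{v_2}(\nz),\cs_{v_3}(\nz))\le \cs_{v_2}(\nz)$, so $\pf(s_{\nz},p_1+p_2)\le p_2\,\cs_{v_2}(\nz)\le \frac{p_1 v_1(v_3-v_2)}{v_3}$.

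For the comparison scheme I would take the explicit (non-buyer-optimal, efficient) scheme $\nz'$ built from: the equal-revenue binary signal $\ns_{v_1,v_2}^{\mathrm{E}}$ absorbing all the $v_1$-mass (weight $\frac{p_1 v_2}{v_2-v_1}$), a signal on $\{v_2,v_3\}$ with $v_2$ optimal that carries the leftover $v_2$-mass and as much $v_3$-mass as keeps $v_2$ optimal, and a singleton on $v_3$ for any remaining $v_3$-mass. Bayes plausibility and efficiency are routine to verify, and one computes $\cs_{v_2}(\nz')=\frac{p_1 v_1}{p_2}$ while $\cs_{v_3}(\nz')$ is bounded below by a positive quantity independent of $\eps$; hence $\cs_{v_2}(\nz')<\cs_{v_3}(\nz')$ once $\eps$ is small, so $\pf(s_{\nz'},p_1+p_2)=p_2\,\cs_{v_2}(\nz')=p_1 v_1$. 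Therefore $\pf(s_{\nz'},p_1+p_2)\big/\pf(s_{\nz},p_1+p_2)\ge \frac{v_3}{v_3-v_2}=2\alpha>\alpha$, i.e.\ $\alpha\cdot\pf(s_{\nz},p_1+p_2)<\pf(s_{\nz'},p_1+p_2)$, so $\nz$ is not $\alpha$-majorized by $\nz'$. As $\nz$ was an arbitrary buyer-optimal scheme, none is $\alpha$-majorized on this instance.

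The step I expect to be the main obstacle is the uniform bound $\cs_{v_2}(\nz)\le \frac{p_1 v_1(v_3-v_2)}{v_3 p_2}$: it must hold for \emph{every} buyer-optimal scheme, not just one convenient scheme, and the only leverage is \cref{lem:price_remain}, which pins $f_S(v_3)=v_1/v_3$ in each low signal and thereby ``starves'' $v_2$. Care is needed in combining this identity with Bayes plausibility across possibly several low signals; reducing first to an equivalent scheme with a single low signal via \cref{lem:wlog} streamlines the bookkeeping. A secondary point to check is simply that the parameters can be chosen to make $v_3$ the strict Myerson price while keeping $p_1 v_1$ small enough for $\nz'$ to be feasible.
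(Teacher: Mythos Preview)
Your argument is correct, and it differs from the paper's in an interesting way. The paper builds an instance with values $\langle 1,N,N+1\rangle$ in which \emph{both} $v_2$ and $v_3$ are Myerson-optimal; by \cref{lem:price_remain} both prices must then be optimal in every signal, and together with \cref{lem:wlog} this pins down the buyer-optimal scheme \emph{uniquely} (two equal-revenue signals with prescribed masses), so the surplus vector can be computed exactly and compared to an explicit alternative $\nz_2$. Your instance instead makes only the \emph{top} value $v_3$ Myerson-optimal; this does not determine the buyer-optimal scheme, but the constraint $v_3 f_S(v_3)=v_1$ in each low signal, combined with $v_2 G_S(v_2)\le v_1$, yields a uniform upper bound on $\cs_{v_2}$ valid for \emph{all} buyer-optimal schemes, which you then beat by a factor $v_3/(v_3-v_2)=2\alpha$ with an explicit non-buyer-optimal scheme. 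The paper's route is slightly cleaner (no need to argue an inequality over a family of schemes), while yours illustrates a more robust technique: you do not need the instance to be so finely tuned that the buyer-optimal scheme is unique, only that \cref{lem:price_remain} starves $v_2$ of surplus. One minor remark: your justification that $\pf(s_{\nz},p_1+p_2)\le p_2\,\cs_{v_2}(\nz)$ does not actually need $p_2<p_3$, since taking $S$ to be the $v_1$- and $v_2$-intervals already witnesses that bound directly from the definition of sorted prefix sum.
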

\begin{proof}
Consider the instance where there are three buyer values: $v_1=1$, $v_2=N$, $v_3=N+1$, and the probability masses of these values are
\[
f_\nd(v_1)=\frac{N^2-1}{N^3+2N^2+N},\ f_{\nd}(v_2)=\frac{N^2+1}{N^3+2N^2+N},\ f_{\nd}(v_3)=\frac{N^3+N}{N^3+2N^2+N}.
\]
We will show that no buyer-optimal scheme can be $\alpha$-majorized on this instance for $\alpha < N$. Suppose for the purpose of contradiction that $\nz$ is $\alpha$-majorized where $\alpha < N$.

Notice that $v_2$ and $v_3$ are both optimal prices for the seller. By \cref{lem:wlog}, we can transform $\nz$ into a signaling scheme $\nz'$ with at most three signals, each with a different smallest support. Since the transformation preserves any buyer's expected consumer surplus, $\nz'$ is still buyer-optimal.

By \cref{lem:price_remain}, in $\nz'$, we must include both $v_2$ and $v_3$ as the optimal prices. Therefore, there is no signal with $v_3$ as the smallest support. Since the signaling scheme is efficient (by buyer optimality), the smallest support in any signal must also be the optimal price. Therefore, we conclude that in $\nz'$, there are only two signals $\ns_1$ and $\ns_2$:
\begin{itemize}
\item[1.] The first signal $\ns_1$ includes all three values. Each value is an optimal price on the signal. Since each value as price provides the same revenue to the seller, the probability masses of $\ns_1$ on the three values must be 
\begin{equation*}
    f_{\ns_1}(v_1)=\frac{N^2-1}{N^2+N};\ f_{\ns_1}(v_2)=\frac{1}{N^2+N};\ f_{\ns_1}(v_3)=\frac{N}{N^2+N}.
\end{equation*}
Since only $\ns_1$ include $v_1$, we have $\gamma_1=f_{\nd}(v_1)\big/\big(\frac{N^2-1}{N^2+N}\big)=\frac{1}{N+1}$.  
\item[2.] The second signal includes only two values $v_2$ and $v_3$. Both of them are optimal prices. It is also an equal-revenue signal. The probability masses of $\ns_2$ on the three values are \begin{equation*}
    f_{\ns_1}(v_1)=0;\ f_{\ns_1}(v_2)=\frac{1}{N+1};\ f_{\ns_1}(v_3)=\frac{N}{N+1}.
\end{equation*}
Since $\gamma_2+\gamma_1=1$, we have $\gamma_2=\frac{N}{N+1}$.
\end{itemize}

The expected consumer surplus from $\nz'$ of any buyer with value $v_2=N$ is \[\cs_{v_2}(\nz')=\frac{\gamma_1\cdot f_{\ns_1}(v_2)\cdot (N-1)}{f_{\nd}(v_2)}=\frac{N-1}{N^2+1}.\]

The expected consumer surplus from $\nz'$ of any buyer with value $v_3=N+1$ is \[
\cs_{v_3}(\nz')=\frac{\gamma_1\cdot f_{\ns_1}(v_3)\cdot N+\gamma_2\cdot f_{\ns_2}(v_3)\cdot 1}{f_{\nd}(v_3)}=\frac{N+N^2}{N^2+1}. 
\]

We consider another scheme $\nz_2$ consisting of three signals $\ns_1'$, $\ns_2'$ and $\ns_3'$:
\begin{gather*}
f_{\ns_1}(v_1)=\frac{N^2-1}{N^2+N};\ f_{\ns_1}(v_2)=\frac{N+1}{N^2+N};\ f_{\ns_1}(v_3)=0;\ \gamma_1=\frac{1}{N+1};\\
f_{\ns_2}(v_1)=0;\ f_{\ns_2}(v_2)=\frac{N^2-N}{N^3-N};\ f_{\ns_2}(v_3)=\frac{N^3-N^2}{N^3-N};\ \gamma_2=\frac{N-1}{N+1};\\
f_{\ns_3}(v_1)=0;\ f_{\ns_3}(v_2)=0;\ f_{\ns_3}(v_3)=1;\ \gamma_3=\frac{1}{N+1}.
\end{gather*}

In $\nz_2$, the expected consumer surplus of any buyer with value $v_2=N$ is
\[
\cs_{v_2}(\nz_2)=\frac{(N+1)\cdot (N-1)}{N^2+1}=\frac{N^2-1}{N^2+1}; 
\]

The expected consumer surplus of any buyer with value $v_3=N+1$ is
\[
\cs_{v_3}(\nz_2)=\frac{(N^3-N^2)\cdot 1}{N^3+N}=\frac{N^2-N}{N^2+1}; 
\]

Therefore, the smallest non-zero expected consumer surplus of a buyer in any buyer-optimal scheme is $\frac{N-1}{N^2+1}$ and the smallest non-zero expected consumer surplus of a buyer in $\nz_2$ is $\frac{N^2-N}{N^2+1}$. Since we have $\big(\frac{N^2-N}{N^2+1}\big)/\big(\frac{N-1}{N^2+1}\big)=N>\alpha$, $\nz$ is not $\alpha$-majorized by $\nz_2$, leading to a contradiction.
\end{proof}

\subsection{Lower Bound for Approximate Majorization}
Finally we provide a lower bound on approximate majorization under general signaling schemes. The following theorem shows that no scheme can be better than $1.5$-majorized. This complements our upper bound of $8$-majorization.

\begin{theorem}
For any $\alpha < 1.5$, there exist instances where no signaling scheme is $\alpha$-majorized.
\end{theorem}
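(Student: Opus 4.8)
The plan is to exhibit one explicit distribution $\nd$ over three values $v_1<v_2<v_3$ on which the achievable expected-surplus vectors form a small polytope, and to show that no point of this polytope can simultaneously $\alpha$-dominate (for $\alpha<1.5$) the sorted prefix sums of two ``corner'' schemes. The masses of $\nd$ will be chosen (possibly through a parameter sent to a limit) so that the Myerson price is $v_2$ or $v_3$ -- in particular, no-signaling is \emph{not} buyer-optimal -- and so that improving the \emph{balance} of the surplus vector is genuinely in tension with buyer-optimality. This is the three-value analogue of the preceding theorem, but now the candidate scheme is arbitrary rather than buyer-optimal, so we must understand the whole family of efficient schemes rather than a single rigid one.

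First I would shrink the search space: by \cref{lem:wlog} it suffices to consider efficient schemes with at most three signals whose lowest supports are $v_1,v_2,v_3$ respectively, with the $v_3$-rooted signal being the singleton $\nss{v_3}$. Imposing Bayes-plausibility together with the requirement that each signal's lowest support be an optimal price -- so that $v_1(f_\nd(v_1)+x+y)\ge \max\{v_2(x+y),\,v_3 y\}$ in the $v_1$-rooted signal and $v_2(f_\nd(v_2)-x+z)\ge v_3 z$ in the $v_2$-rooted signal -- one parametrizes such a scheme by $(x,y,z)\ge 0$: the $v_2$-mass $x$ and $v_3$-mass $y$ routed into the $v_1$-rooted signal, and the $v_3$-mass $z$ routed into the $v_2$-rooted signal. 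Reading off surpluses gives that $\cs_{v_1}(\nz)=0$ for every $\nz$, that $\cs_{v_2}(\nz)=(v_2-v_1)x/f_\nd(v_2)$, and that $\cs_{v_3}(\nz)=\big((v_3-v_1)y+(v_3-v_2)z\big)/f_\nd(v_3)$; so the achievable vectors $(0,c_2,c_3)$ are the image of an explicit polytope under a linear map, and each $\pf(s_\nz,\cdot)$ is convex and piecewise-linear in $m$ with breakpoints among $\{F_\nd(v_1),\,F_\nd(v_1)+\min(f_\nd(v_2),f_\nd(v_3)),\,1\}$.

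Next I would single out two witnesses: a buyer-optimal scheme $\nz_B$ (maximizing $\pf(\cdot,1)$, i.e.\ total consumer surplus, attained where all of $v_1$'s ``subsidy budget'' is spent on $v_3$), and the most balanced scheme $\nz_A$ (maximizing the small-$m$ sorted prefix sum, i.e.\ essentially $\min(c_2,c_3)$, attained where as much $v_2$-mass as the revenue inequalities permit is routed to the $v_1$-rooted signal). By the instance design, $\nz_A$ is not buyer-optimal, so $\pf(s_{\nz_A},1)<\pf(s_{\nz_B},1)$ while $\pf(s_{\nz_A},m)>\pf(s_{\nz_B},m)$ for small $m$. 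Now for an arbitrary scheme $\nz$, if it were $\alpha$-majorized by every scheme then in particular $\alpha\,\pf(s_\nz,m_{\mathrm{small}})\ge \pf(s_{\nz_A},m_{\mathrm{small}})$ and $\alpha\,\pf(s_\nz,1)\ge \pf(s_{\nz_B},1)$; expanding both through the $(x,y,z)$-parametrization yields two linear inequalities on $(c_2(\nz),c_3(\nz))$ that, together with membership in the achievable polytope, are infeasible whenever $\alpha<3/2$. Tuning the masses (and passing to a limit) so that these two inequalities become simultaneously tight at $\alpha=3/2$ gives the claimed bound, and the conclusion then follows by \cref{def:approx_maj}.

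I expect the main obstacle to be this last step: choosing $\nd$ so that the trade-off is \emph{exactly} $3/2$ -- i.e.\ verifying that $\min_{\nz}\max_{\nz',m}\pf(s_{\nz'},m)/\pf(s_\nz,m)=3/2$ on this instance -- and confirming that no intermediate breakpoint $m$, and no third choice of dominating scheme, produces a binding comparison that is either weaker (which would spoil tightness) or stronger (which would only help but still must be ruled in). Once the instance is fixed this is a finite optimization over the vertices of the achievable polytope and the finitely many breakpoints of the relevant piecewise-linear, convex prefix-sum functions, hence purely elementary arithmetic in the style of the preceding theorem's computation.
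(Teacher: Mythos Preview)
Your plan is essentially the paper's approach: a three-value instance, reduction via \cref{lem:wlog} to efficient schemes parametrized linearly by how much $v_2$- and $v_3$-mass is routed into the $v_1$- and $v_2$-rooted signals, and then a two-sided squeeze comparing any candidate $\nz$ against a ``balanced'' witness at a small prefix and against the buyer-optimal total at $m=1$. The paper carries out exactly this with the explicit family $v_1=1,\ v_2=1+\eps,\ v_3=2+\eps$ and masses $f_\nd(v_1)=\eps^2+2\eps$, $f_\nd(v_2)=1+(1+\eps)^2$, $f_\nd(v_3)=(1+\eps)+(1+\eps)^3$ (unnormalized), and the arithmetic you flag as ``the main obstacle'' is precisely the content of their proof: the small-prefix constraint forces (in your notation) $x\ge \tfrac{4}{3}+\Theta(\delta)$, which via the revenue inequalities caps the total surplus below $\tfrac{4}{3}+o(1)$, against a buyer-optimal total exceeding $2$.
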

\begin{proof}
Suppose that there exists a signaling scheme $\nz$ that is $\alpha$-majorized. Let $\alpha=1.5-\delta$, where $\delta>0$. Set $\eps$ such that $0<\eps\ll \delta$. There are three buyer values $\langle v_1, v_2, v_3\rangle=\langle 1, 1 + \eps, 2 + \eps\rangle$. The probability masses of $\nd$ on these values are $f_{\nd}(v_1)=\eps^2+2\eps$, $f_{\nd}(v_2)=1+(1+\eps)^2$, $f_{\nd}(v_3)=(1+\eps)+(1+\eps)^3$. For simplicity, we omit the multiplicative factor of normalizing the whole population to 1. 

First, we compute the largest expected consumer surplus of the smallest-surplus buyer. By \cref{lem:wlog}, we can transform $\nz$ into $\nz'$ without changing any buyer's expected consumer surplus. Assume $\nz'$ consists of the following three signals, each represented by the mass vector $\gamma_q\cdot \ns_q$: $$\gamma_1 \cdot \ns_1 = \langle x, y, z \rangle;\ \gamma_2 \cdot \ns_2 = \langle 0, y', z'\rangle;\ \gamma_3 \cdot \ns_3 = \langle 0, 0, z''\rangle.$$ 
The expected consumer surplus of a buyer with value $v_2$ is $\frac{y\cdot \eps}{f_{\nd}(v_2)}$. The expected consumer surplus of a buyer with value $v_3$ is $\frac{z\cdot (1+\eps)+z'\cdot 1}{f_{\nd}(v_3)}$. The maximum of the smallest expected consumer surplus of a buyer with value $v_2$ or $v_3$ can be solved by the following LP:
\begin{gather}
    \text{Maximize $s_{\min}$},\ \text{s.t.}\notag\\
    \frac{y\cdot \eps}{f_{\nd}(v_2)}\ge s_{\min};\label{con:1}\\ \frac{z\cdot (1+\eps)+z'\cdot 1}{f_{\nd}(v_3)}\ge s_{\min};\label{con:2}\\
    1\cdot (x+y+z)\ge (1+\eps)\cdot (y+z);\label{con:3}\\
    1\cdot (x+y+z)\ge (2+\eps)\cdot z;\label{con:4}\\
    (1+\eps)\cdot (y'+z')\ge (2+\eps)\cdot z';\label{con:5}\\
    0\le x\le f_{\nd}(v_1); \label{con:6}\\
    y+y'\le f_{\nd}(v_2);\label{con:7}\\ 
    z+z'+z''\le f_{\nd}(v_3);\label{con:8}\\ y,z,y',z',z''\ge 0.\notag
\end{gather}

\cref{con:1} and \cref{con:2} mean that the minimum expected consumer surplus of any buyer with value $v_2$ or $v_3$ is at least the objective function. \cref{con:3} and \cref{con:4} are constraints on the masses of the first signal so that
$1$ is the optimal price. Similarly, \cref{con:5} ensures that $v_2=1+\epsilon$ is the optimal price in the second signal. \cref{con:6,con:7,con:8} ensures that for any $i\in[3]$, the sum of the masses on a value $v_i$ in all three signals does not exceed the total mass $m_i$. 

One feasible solution of the LP has value 
\[s_{\min}^*=\frac{4+3\eps+\eps^2}{2+\eps}\cdot \frac{\eps}{f_{\nd}(v_2)}>\frac{2\eps}{f_{\nd}(v_2)},\]
and is obtained when 
\begin{gather*}
x=f_{\nd}(v_1)=\eps^2+2\eps;\\
y=\frac{4+3\eps+\eps^2}{2+\eps};\\
z=\frac{\eps}{2+\eps};\\
y'=f_{\nd}(v_2)-y=\frac{\eps^3+3\eps^2+3\eps}{2+\eps};\\
z'=y'\cdot (1+\eps)=\frac{(1+\eps)\cdot(\eps^3+3\eps^2+3\eps)}{2+\eps}.
\end{gather*}

If $\nz'$ is $\alpha$-majorized (and recall that $\alpha=1.5-\delta$), the smallest positive expected consumer surplus of any buyer is at least
\[
\frac{s_{\min}^*}{\alpha}=s_{\min}^*\cdot \frac{1}{1.5-\delta}> \frac{2\eps}{f_{\nd}(v_2)}\cdot \frac1{1.5-\delta}\ge \frac{2\eps}{f_{\nd}(v_2)}\cdot\left(\frac23+\frac49 \delta\right).
\]
Therefore, we have that the expected consumer surplus of any buyer with value $v_2$ from $\nz'$ is at least $\frac{s_{\min}^*}{\alpha}$:
\[
\cs_{v_2}(\nz')= \frac{y\cdot \eps}{f_{\nd}(v_2)}\ge \frac{s_{\min}^*}{\alpha}\ge \frac{2\eps}{f_{\nd}(v_2)}\cdot\left(\frac23+\frac49 \delta\right),
\]
and thus,
\begin{equation}
y\ge \frac43+\frac89\delta.\label{con:y}
\end{equation} 
By \cref{con:7}: $y+y'\le f_{\nd}(v_2)=2+2\eps+\eps^2$, and we have 
\begin{equation}
y'\le 2+2\eps+\eps^2-\frac43-\frac89\delta=\frac23+2\eps+\eps^2-\frac89\delta.
\label{eqn:s_surplus}
\end{equation}
By \cref{con:3}: $x+y+z\ge (1+\eps)\cdot (y+z)$, we have $x\ge \eps\cdot(y+z)$. Since $x\le f_{\nd}(v_1)=\eps^2+2\eps$, we have $y+z\le 2+\eps$ and thus $$z\le 2+\eps-y\le 2+\eps-\left(\frac43+\frac89\delta\right)=\frac23+\eps-\frac89\delta.$$ 
By \cref{eqn:s_surplus} and \cref{con:5}, we have 
\[
z'\le (1+\eps)\cdot y'\le (1+\eps)\cdot \left(\frac23+2\eps+\eps^2-\frac89\delta \right).
\]
Consider the overall surplus of all buyers from $\nz'$ (i.e. $\pf(\nz', \sum_{i\in[3]} f_{\nd}(v_i)$). We have:
\begin{align*}
\pf\Big(\nz', \sum_{i\in[3]} f_{\nd}(v_i)\Big)&=\cs_{v_2}(\nz') \cdot f_{\nd}(v_2)+\cs_{v_3}(\nz') \cdot f_{\nd}(v_3)\\&=y\cdot \eps+z\cdot (1+\eps)+z'\cdot 1\\
&\le (2+2\eps+\eps^2)\cdot \eps+\left(\frac23+\eps-\frac89\delta\right)\cdot (1+\eps)+(1+\eps)\cdot \left(\frac23+2\eps+\eps^2-\frac89\delta \right)\\
&= \frac{4}3-\frac{16}{9}\delta+o(\delta)\tag{Since $\eps\ll \delta$}\\
&<\frac43.
\end{align*}
Since $v_3$ is the optimal price on the original distribution, the optimal overall consumer surplus is:
\begin{align*}
f_{\nd}(v_1)\cdot v_1+f_{\nd}(v_2)\cdot v_2+f_{\nd}(v_3)\cdot v_3-f_{\nd}(v_3)\cdot v_3
&=f_{\nd}(v_1)+(1+\eps)\cdot f_{\nd}(v_2)\\
&=(\eps^2+2\eps)+(1+\eps)(2+2\eps+\eps^2)\\
&>2. 
\end{align*}
Since $\left(\frac4{3}\right)\big/ 2<\frac{1}{1.5-\delta}$, $\nz'$ cannot achieve $\alpha$-approximation on the optimal overall consumer surplus. Therefore, $\nz$ is not $\alpha$-majorized, leading to a contradiction.
\end{proof}

\bibliographystyle{abbrvnat}
\bibliography{ref}

\begin{thebibliography}{36}
\providecommand{\natexlab}[1]{#1}
\providecommand{\url}[1]{\texttt{#1}}
\expandafter\ifx\csname urlstyle\endcsname\relax
  \providecommand{\doi}[1]{doi: #1}\else
  \providecommand{\doi}{doi: \begingroup \urlstyle{rm}\Url}\fi

\bibitem[dou()]{doubleclick}
{DoubleClick Ad Exchange}.
\newblock \url
  {https://static.googleusercontent.com/media/www.google.com/en//adexchange/AdExchangeOverview.pdf}.

\bibitem[msa()]{msads}
{Microsoft Ad Exchange}.
\newblock \url {https://ads.microsoft.com/}.

\bibitem[pub()]{pubmatic}
{Pubmatic Ad Exchange}.
\newblock \url {https://pubmatic.com/}.

\bibitem[ver()]{verizon}
{Yahoo Ad Exchange}.
\newblock \url
  {https://policies.yahoo.com/us/en/yahoo/terms/yahooadexchange/index.htm}.

\bibitem[Alijani et~al.(2022)Alijani, Banerjee, Munagala, and
  Wang]{DBLP:conf/sigecom/AlijaniBMW22}
R.~Alijani, S.~Banerjee, K.~Munagala, and K.~Wang.
\newblock The limits of an information intermediary in auction design.
\newblock In \emph{{EC} '22: The 23rd {ACM} Conference on Economics and
  Computation}, pages 849--868, 2022.

\bibitem[Babichenko et~al.(2021)Babichenko, Talgam{-}Cohen, Xu, and
  Zabarnyi]{Babichenko21}
Y.~Babichenko, I.~Talgam{-}Cohen, H.~Xu, and K.~Zabarnyi.
\newblock Regret-minimizing bayesian persuasion.
\newblock In \emph{Proceedings of the 22$^\text{nd}$ {ACM} Conference on
  Economics and Computation}, page 128, 2021.

\bibitem[Bergemann and Morris(2019)]{bergemann2019information}
D.~Bergemann and S.~Morris.
\newblock Information design: {A} unified perspective.
\newblock \emph{Journal of Economic Literature}, 57\penalty0 (1):\penalty0
  44--95, 2019.

\bibitem[Bergemann et~al.(2015)Bergemann, Brooks, and Morris]{Bergemann15}
D.~Bergemann, B.~Brooks, and S.~Morris.
\newblock The limits of price discrimination.
\newblock \emph{American Economic Review}, 105\penalty0 (3):\penalty0 921--57,
  2015.

\bibitem[Bergemann et~al.(2022)Bergemann, D{\"{u}}tting, Paes~Leme, and
  Zuo]{BergemannDLZ22}
D.~Bergemann, P.~D{\"{u}}tting, R.~Paes~Leme, and S.~Zuo.
\newblock Calibrated click-through auctions.
\newblock In \emph{{WWW} '22: The {ACM} Web Conference 2022}, pages 47--57,
  2022.

\bibitem[Cai et~al.(2020)Cai, Echenique, Fu, Ligett, Wierman, and
  Ziani]{cai2020third}
Y.~Cai, F.~Echenique, H.~Fu, K.~Ligett, A.~Wierman, and J.~Ziani.
\newblock Third-party data providers ruin simple mechanisms.
\newblock \emph{Proceedings of the {ACM} on Measurement and Analysis of
  Computing Systems}, 4\penalty0 (1):\penalty0 1--31, 2020.

\bibitem[Chakrabarty and Swamy(2019)]{ChakrabartyS19}
D.~Chakrabarty and C.~Swamy.
\newblock Approximation algorithms for minimum norm and ordered optimization
  problems.
\newblock In \emph{Proceedings of the 51st Annual ACM SIGACT Symposium on
  Theory of Computing}, page 126–137, 2019.

\bibitem[Chakraborty and Harbaugh(2014)]{chakraborty2014persuasive}
A.~Chakraborty and R.~Harbaugh.
\newblock Persuasive puffery.
\newblock \emph{Marketing Science}, 33\penalty0 (3):\penalty0 382--400, 2014.

\bibitem[Cummings et~al.(2020)Cummings, Devanur, Huang, and Wang]{Cummings20}
R.~Cummings, N.~R. Devanur, Z.~Huang, and X.~Wang.
\newblock Algorithmic price discrimination.
\newblock In \emph{Proceedings of the 2020 {ACM-SIAM} Symposium on Discrete
  Algorithms}, pages 2432--2451, 2020.

\bibitem[Dughmi(2017)]{dughmi2017algorithmic}
S.~Dughmi.
\newblock Algorithmic information structure design: {A} survey.
\newblock \emph{{ACM SIGecom Exchanges}}, 15\penalty0 (2):\penalty0 2--24,
  2017.

\bibitem[Dughmi and Xu(2021)]{dughmi2019algorithmic}
S.~Dughmi and H.~Xu.
\newblock Algorithmic bayesian persuasion.
\newblock \emph{{SIAM} J. Comput.}, 50\penalty0 (3), 2021.

\bibitem[Dughmi et~al.(2016)Dughmi, Kempe, and Qiang]{dughmi2016persuasion}
S.~Dughmi, D.~Kempe, and R.~Qiang.
\newblock Persuasion with limited communication.
\newblock In \emph{Proceedings of the 2016 {ACM} Conference on Economics and
  Computation}, pages 663--680, 2016.

\bibitem[Dwork et~al.(2012)Dwork, Hardt, Pitassi, Reingold, and
  Zemel]{DworkIndividual}
C.~Dwork, M.~Hardt, T.~Pitassi, O.~Reingold, and R.~S. Zemel.
\newblock Fairness through awareness.
\newblock In \emph{Innovations in Theoretical Computer Science}, pages
  214--226, 2012.

\bibitem[Goel and Meyerson(2006)]{GoelM06}
A.~Goel and A.~Meyerson.
\newblock Simultaneous optimization via approximate majorization for concave
  profits or convex costs.
\newblock \emph{Algorithmica}, 44\penalty0 (4):\penalty0 301--323, 2006.

\bibitem[Goel et~al.(2005)Goel, Meyerson, and Plotkin]{GoelMP05}
A.~Goel, A.~Meyerson, and S.~A. Plotkin.
\newblock Approximate majorization and fair online load balancing.
\newblock \emph{{ACM} Trans. Algorithms}, 1\penalty0 (2):\penalty0 338--349,
  2005.

\bibitem[Goel et~al.(2017)Goel, Krishnaswamy, and Munagala]{GoelKM17}
A.~Goel, A.~K. Krishnaswamy, and K.~Munagala.
\newblock Metric distortion of social choice rules: Lower bounds and fairness
  properties.
\newblock In \emph{Proc. {ACM} Conf. Economics and Computation, {EC}}, pages
  287--304, 2017.

\bibitem[Haghpanah and Siegel(2020)]{ParetoIS}
N.~Haghpanah and R.~Siegel.
\newblock Pareto improving segmentation of multi-product markets, 2020.

\bibitem[Hardy et~al.(1952)Hardy, Littlewood, and
  P{\'o}lya]{hardy1952inequalities}
G.~Hardy, J.~Littlewood, and G.~P{\'o}lya.
\newblock \emph{Inequalities}.
\newblock Cambridge Mathematical Library. Cambridge University Press, 1952.

\bibitem[Hebert-Johnson et~al.(2018)Hebert-Johnson, Kim, Reingold, and
  Rothblum]{multicalib}
U.~Hebert-Johnson, M.~Kim, O.~Reingold, and G.~Rothblum.
\newblock Multicalibration: Calibration for the
  ({C}omputationally-identifiable) masses.
\newblock In \emph{Proceedings of the 35th International Conference on Machine
  Learning}, volume~80, pages 1939--1948, 2018.

\bibitem[Kamenica and Gentzkow(2011)]{Kamenica11}
E.~Kamenica and M.~Gentzkow.
\newblock Bayesian persuasion.
\newblock \emph{American Economic Review}, 101\penalty0 (6):\penalty0
  2590--2615, 2011.

\bibitem[Karamata(1932)]{karamata1932inegalite}
J.~Karamata.
\newblock Sur une in{\'e}galit{\'e} relative aux fonctions convexes.
\newblock \emph{Publications de l'Institut mathematique}, 1\penalty0
  (1):\penalty0 145--147, 1932.

\bibitem[Kearns et~al.(2018)Kearns, Neel, Roth, and Wu]{Kearns1}
M.~J. Kearns, S.~Neel, A.~Roth, and Z.~S. Wu.
\newblock Preventing fairness gerrymandering: Auditing and learning for
  subgroup fairness.
\newblock In \emph{Proceedings of the 35th International Conference on Machine
  Learning, {ICML}}, pages 2569--2577, 2018.

\bibitem[Kearns et~al.(2019)Kearns, Neel, Roth, and Wu]{Kearns2}
M.~J. Kearns, S.~Neel, A.~Roth, and Z.~S. Wu.
\newblock An empirical study of rich subgroup fairness for machine learning.
\newblock In \emph{Proceedings of the Conference on Fairness, Accountability,
  and Transparency, FAT* 2019}, pages 100--109, 2019.

\bibitem[Ko and Munagala(2022)]{KOM22}
S.~Ko and K.~Munagala.
\newblock Optimal price discrimination for randomized mechanisms.
\newblock In \emph{{EC} '22: The 23rd {ACM} Conference on Economics and
  Computation}, pages 477--496. {ACM}, 2022.

\bibitem[Krishnaswamy et~al.(2021)Krishnaswamy, Jiang, Wang, Cheng, and
  Munagala]{Anilesh}
A.~K. Krishnaswamy, Z.~Jiang, K.~Wang, Y.~Cheng, and K.~Munagala.
\newblock Fair for all: Best-effort fairness guarantees for classification.
\newblock In \emph{The 24th International Conference on Artificial Intelligence
  and Statistics}, volume 130, pages 3259--3267, 2021.

\bibitem[Kumar and Kleinberg(2006)]{KumarK06}
A.~Kumar and J.~Kleinberg.
\newblock Fairness measures for resource allocation.
\newblock \emph{SIAM Journal on Computing}, 36\penalty0 (3):\penalty0 657--680,
  2006.

\bibitem[Mao et~al.(2022)Mao, Paes~Leme, and
  Wang]{DBLP:conf/innovations/MaoLW22}
J.~Mao, R.~Paes~Leme, and K.~Wang.
\newblock Interactive communication in bilateral trade.
\newblock In \emph{13th Innovations in Theoretical Computer Science Conference,
  {ITCS}}, volume 215 of \emph{LIPIcs}, pages 105:1--105:21, 2022.

\bibitem[Myerson(1981)]{Myerson81}
R.~B. Myerson.
\newblock Optimal auction design.
\newblock \emph{Mathematics of Operational Research}, 6\penalty0 (1):\penalty0
  58--73, 1981.

\bibitem[Roberts and Varberg(1973)]{convex_functions}
A.~W. Roberts and D.~E. Varberg.
\newblock \emph{Convex Functions}.
\newblock Academic Press, 1973.

\bibitem[Shen et~al.(2018)Shen, Tang, and Zeng]{shen2018closed}
W.~Shen, P.~Tang, and Y.~Zeng.
\newblock A closed-form characterization of buyer signaling schemes in monopoly
  pricing.
\newblock In \emph{Proceedings of the 17$^\text{th}$ International Conference
  on Autonomous Agents and MultiAgent Systems}, pages 1531--1539, 2018.

\bibitem[Xu et~al.(2015)Xu, Rabinovich, Dughmi, and Tambe]{xu2015exploring}
H.~Xu, Z.~Rabinovich, S.~Dughmi, and M.~Tambe.
\newblock Exploring information asymmetry in two-stage security games.
\newblock In \emph{Proceedings of the Twenty-Ninth AAAI Conference on
  Artificial Intelligence}, pages 1057--1063, 2015.

\bibitem[Xu et~al.(2022)Xu, Zhang, Cui, Li, Shen, and Xu]{XuRegulation}
R.~Xu, X.~Zhang, P.~Cui, B.~Li, Z.~Shen, and J.~Xu.
\newblock Regulatory instruments for fair personalized pricing.
\newblock In \emph{Proceedings of the ACM Web Conference 2022}, WWW '22, page
  4–15, 2022.

\end{thebibliography}

\appendix
\section{Omitted Proofs}
\label{app:omitted}
\begin{proof}[Proof of \cref{prop:majorize_concave}]
Let $\vec{u}$ be the vector of expected utilities of $\nz$. Let $\vec{w}$ be that vector of any signaling scheme. We wish to show $\W(\vec{u}) \geq \frac{1}{\alpha} \cdot \W(\vec{w})$.

By normalization (or non-negativity) and concavity, $\W(\vec{u}) \geq \frac{1}{\alpha} \cdot \W(\alpha \vec{u})$. There exists a $\vec{u}'$ (by starting from $\alpha \vec{u}$ and gradually decreasing its largest elements) such that $\vec{u}' \leq \alpha \vec{u}$, $\Vert \vec{u}' \Vert_1 = \Vert \vec{w} \Vert_1$, and $\vec{u}'$ is majorized by $\vec{w}$. Therefore, $\W(\vec{u}) \geq \frac{1}{\alpha} \cdot \W(\alpha \vec{u}) \geq \frac{1}{\alpha} \cdot \W(\vec{u}') \geq \frac{1}{\alpha} \cdot \W(\vec{w})$, where the penultimate inequality is because the welfare function is weakly increasing, and the last inequality is from Schur-concavity (see e.g. \citep{convex_functions}) of $\W$, implied by symmetry and concavity.

For the converse, notice that for any $m \in [0, 1]$, the sorted $m$-prefix sum is welfare function that is symmetric, non-decreasing, concave and normalized (or non-negative).
\end{proof}

\end{document}